\documentclass[12pt]{article}
\usepackage{graphicx}
\usepackage{float}
\usepackage[flushleft]{caption2}
\usepackage{amsmath,amsfonts,amssymb}
\usepackage{amsthm}
\usepackage{mathrsfs}
\usepackage{enumitem}
\usepackage{tikz}
\usepackage[title]{appendix}
\usepackage{indentfirst}
\usepackage{epstopdf}
\usepackage{hyperref}
\hypersetup{hidelinks}

\DeclareMathOperator*{\Res}{Res}
\usepackage[sort&compress,numbers]{natbib}
\usetikzlibrary{arrows}
\newtheorem{lemma}{Lemma}[section]
\newtheorem{theorem}{Theorem}[section]

\newtheorem{proposition}{Proposition}[section]

\setitemize[1]{itemsep=0pt,partopsep=0pt,parsep=\parskip,topsep=5pt}
\numberwithin{equation}{section}
\newtheorem{thm}[theorem]{Theorem}
\newtheorem{RHP}{Riemann--Hilbert problem}
\newcommand{\R}{\mathbb{R}}
\newcommand{\Z}{\mathbb{Z}}
\newcommand{\C}{\mathbb{C}}

\def\d{{\rm d}}

\def\i{{\rm i}}
\def\1{\operatorname{Id}}

\def\Re{\operatorname{Re}}
\def\Im{\operatorname{Im}}
\def\exp{\operatorname{exp}}
\def\Bes{\mathrm{Bes}}
\def\Jac{\operatorname{dn}}

\def\le{\left}
\def\ri{\right}

\usepackage{tikz}
\usetikzlibrary{shapes,snakes,calc,fit}
\usetikzlibrary{decorations.text}
\usepgfmodule{shapes}
\usetikzlibrary{decorations.pathmorphing}
\usetikzlibrary{decorations.markings}
\usetikzlibrary{patterns}
\usetikzlibrary{automata}
\usetikzlibrary{positioning}
\usetikzlibrary{spy}
\usetikzlibrary{backgrounds}
\tikzset{partial ellipse/.style args={#1:#2:#3}{insert path={+ (#1:#3) arc (#1:#2:#3)} }}
\tikzset{->-/.style={decoration={ markings, mark=at position #1 with {\arrow{>}}},postaction={decorate}}}
\tikzset{-<-/.style={decoration={ markings, mark=at position #1 with {\arrow{<}}},postaction={decorate}}}

\title{\LARGE\bf Large-space and large-time asymptotics for the mKdV soliton gas with any odd genus}
\author{\hspace{0.6 cm}{Dedi Yan, Xianguo Geng\footnote{\footnotesize
 Corresponding author. {\sl Email address}: xggeng@zzu.edu.cn}, Kedong Wang}\\
\leftline{\hspace{0.6 cm}{\small{\sl School of Mathematics and Statistics, Zhengzhou University, 100 Kexue Road, Zhengzhou, }}}\\
\leftline{\hspace{0.6 cm}{\small{\sl Henan 450001, People's Republic of China}}}}
\date{}

\begin{document}
	\maketitle
	\begin{abstract}

	We study the large-space and large-time asymptotic behavior of the soliton gas of genus $2n-1$ for the mKdV equation with $n\in \mathbb{N}_+$. As $x \to +\infty$, we show that the large-space asymptotics of the mKdV soliton gas can be expressed with the Riemann-theta function of genus $2n-1$. For large $t$, based on the nonlinear steepest descent method and $g$-function approach, we establish a global large-time asymptotic description of the mKdV soliton gas. The half-plane $\{(x,t):-\infty<x<+\infty, t>0\}$ is divided into $2n+1$ separated regions. In each region, the large-time asymptotics of the mKdV soliton gas is given by using the Riemann-theta functions and uniform error estimation.\\

\noindent{\rm Keywords:} mKdV soliton gas, large-space and large-time asymptotics, Riemann--Hilbert problem, Riemann-theta function\\
\noindent{\rm Mathematics Subject Classification:} 35Q15, 35B40
	\end{abstract}
	
\section{Introduction}
The focusing modified Korteweg--de Vries (mKdV) equation
	\begin{align}\label{mKdV}
		u_{t}+u_{xxx}+6u^{2}u_{x}=0,\ \ -\infty<x<+\infty,\ t>0,
	\end{align}
is one of the fundamental models in the field of integrable systems, and has received extensive attention  because of its rich mathematical structure and physical applications. The global well-posedness of the Cauchy problem for the mKdV equation has been studied in \cite{CKSTT,KPV}. Miguel and Claudio discussed the nonlinear stability of breathers of the focusing mKdV equation by introducing a new Lyapunov
functional \cite{MC}. Ling and Sun constructed and analyzed the multi ellipticlocalized solutions under the background of elliptic function solutions for the mKdV equation \cite{Ling}. Germain et al. used the space-time resonance to study the stability of solitary waves and long time asymptotics with small initial data \cite{GPR}. Grava and Minakov obtained the long-time asymptotics for the mKdV equation with step-like initial data \cite{TM} with the help of the nonlinear steepest descent method \cite{DeiftZ,Yan,Wang}. Zhang and Yan used the Riemann--Hilbert (RH) approach to study the $N$-soliton solutions for the focusing and defocusing mKdV equation with nonzero boundary conditions \cite{ZhangYan}. Chen and Liu derived long time
asymptotics for the focusing and defocusing mKdV equation with initial data belongs to weighted Sobolev space $H^{2,1}(\R)$ \cite{CL1,CL2}. The $N$-soliton solution of the integrable nonlinear wave equation obtained by the RH approach provides a framework of nonlinear superposition for us to further analyze the elastic interaction. As $N\to\infty$, the $N$-soliton solution can create irregular statistical ensemble which can be regarded as a soliton gas.

The concept of soliton gas, representing the infinite statistical ensemble of interacting solitons, was initially introduced by Zakharov in the KdV equation \cite{Zak71}. Specifically, Zakharov's pioneering work focused on the interaction between a single soliton and an infinite number of widely spaced and isolated solitons, collectively referred to as a rarefied soliton gas. Building on this foundation, El et al. extended Zakharov's model to describe a dense KdV soliton gas by employing spectral theory to analyze the thermodynamic limit of finite-gap solutions \cite{E1,E2,E3}. Later on, this spectral theory approach was also applied to characterize soliton and breather gases for the focusing nonlinear Schr\"{o}dinger equation \cite{E4,TW2022}. In the past few years, researchers have developed a rapidly growing interest on the study of soliton gases as their dynamics are crucial for understanding several fundamental nonlinear wave phenomena \cite{SRADE2024}. In particular, these phenomena include spontaneous modulation instability and the formation of rogue waves \cite{GA,GAZ}. Recently, asymptotic analysis of soliton gas has become a focal point of research. Girotti et al. conducted a comprehensive study on the large-space and large-time asymptotics of the KdV soliton gas \cite{Girotti-1}. Their research was based on the $N$-soliton solution, which is characterized by the RH problem. This new perspective on soliton gas was inspired by the concept of the primitive potential introduced by Dyachenko, Zakharov et al. in \cite{DZZ}. The large $x$ asymptotics of the soliton gas of the nonlinear Schr\"{o}dinger equation was studied in \cite{HXF}. In a related development, Bertola et al. \cite{Grava-3} explored the phenomenon of soliton shielding  of the focusing nonlinear Schr\"{o}dinger equation. In  \cite{Girotti-2}, Girotti et al. analyzed the behavior of a dense mKdV soliton gas of genus one, as well as its large-time dynamics in the presence of a single trial soliton. So far, the research has remained in the case of low genus soliton gases, and there is no literature studying the large-time and large-space asymptotics of the mKdV soliton gas of any odd genus.

The main goal of this paper is to study the large-space and large-time asymptotic behavior of the soliton gas of genus $2n-1$ for the focusing mKdV equation \eqref{mKdV} based on the nonlinear steepest descent method, the $g$-function approach and the Riemann-theta function. The main innovations of this paper are as follows. First, we study the large-space and large-time asymptotic behavior of the mKdV soliton gas with any odd genus, and construct the solution of the corresponding model problem for the soliton gas with any odd genus, the leading asymptotic term of the mKdV soliton gas is expressed by the Riemann-theta functions with any odd genus. Second, with the increase of the genus, the large-time asymptotic analysis of the mKdV soliton gas becomes more difficult, and the key to the asymptotic analysis is to introduce the appropriate $g$-functions, and for the asymptotic regions of different genus, we construct the corresponding $g$-functions to replace the phase function $\theta(x,t;k)$. Third, for the high genus modulation region, the branch point $i\beta_m$ and $\xi$ satisfy the Whitham modulation equation such that $\xi$ is a function of $\beta_m$, but it is difficult to prove the monotonicity of this function. Fortunately, we write its derivative in the form of the product of the determinants of $m\times m$ matrices and prove rigorously that their determinants are all positive. Thus, using the implicit function theorem, we show that $\beta_m$ is monotonically increasing with respect to $\xi$. Moreover, we establish the inequality $\xi_{2m-1}<\xi_{2m}$. This also demonstrates that the method developed in this paper is applicable to the proof of the conjecture proposed in~\cite{Wang}.

 The outline of the paper is as follows. In section 2, we first consider a pure $nN$-soliton solution for the mKdV equation, whose spectrum is confined to the intervals $\bigcup\limits_{j=1}^{n}(i(-b_j,-a_j)\cup i(a_j,b_j))$, where $0<a_1<b_1<\cdots<a_n<b_n$. As $N\to +\infty$, we construct the RH problem for the mKdV soliton gas of genus $2n-1$. In section 3, we consider the large-space asymptotics of the mKdV soliton gas. In sections 4--6, we establish  a global long-time asymptotic description of the soliton gas solution $u(x,t)$ using the nonlinear steepest descent method and the  $g$-function approach. We show that there are $2n+1$ fundamental spatial domains, in which the potential $u(x,t)$ displays different asymptotic behaviors depending on the value of the parameter $\xi = x/t$. For $\xi<4a_1^2$, the potential decays exponentially. For $4a_1^2<\xi < \xi_{1}$, the potential is described by the periodic travelling wave with modulated coefficients. For $\xi_1<\xi<\xi_2$, the asymptotics is connected by a periodic travelling wave with fixed coefficients. For $\xi_{2m-2}<\xi<\xi_{2m-1}$, the leading-order term can be expressed with a  Riemann-theta function of genus $2m-1$ with modulated coefficients, while for $\xi_{2m-1}<\xi<\xi_{2m}$, the asymptotics is expressed by a  Riemann-theta function of genus $2m-1$ with fixed coefficients, where $m=2,3,\ldots,n$.			
\section{A RH characterization of the mKdV equation}
	The mKdV equation \eqref{mKdV} admits the Lax pair
	\begin{equation}\label{lax}
		\begin{aligned}
			&\Phi_x=\begin{pmatrix} -ik & u(x,t)\\
				-u(x,t) & ik \end{pmatrix}\Phi,\\
			&\Phi_t =\begin{pmatrix} -4ik^3+2{ik u^2}& 4k^2 u+2ik u_x-2u^3-u_{xx}\\ -4k^2 u+2ik u_x+2u^3+u_{xx}& 4ik^3-2ik u^2 \end{pmatrix}\Phi,
		\end{aligned}
	\end{equation}
where $\Phi(x,t;k)$ is a $2\times2$ matrix-valued function of $x, t$ and the spectral parameter  $k \in\C$.
The RH problem for the solution of the mKdV equation is described as follows:
\begin{RHP}\label{RHP1}
Find  a $2\times2$ matrix-valued function $M(x,t;k)$ with the following properties
\begin{enumerate}
\item $M(x,t;k)$ is meromorphic in $\mathbb{C}$, with simple poles at $i\kappa_{j} $ in $i\mathbb{R}_{+}$, and at the corresponding conjugate points $-i\kappa_{j} $ in $i\mathbb{R}_{-}$,\ ${j=1,2,\ldots,n N},\  n, N\in \Z_+$.
			
\item $M(x,t;k)$ satisfies the residue conditions
		\begin{equation}
			\begin{aligned}\label{residue_soliton}
\Res\limits_{k=i\kappa_{j}}{M}(k)&=\lim_{k\to i\kappa_{j}}{M}(k)\begin{pmatrix}0&0\\-i\chi_{j}e^{-2i\theta(x,t;k)}&0\\\end{pmatrix},\\
\Res\limits_{k=-i\kappa_j}{M}(k)&
=\lim_{k\to-i\kappa_j}{M}(k)\begin{pmatrix}0&-i\chi_je^{2i\theta(x,t;k)}\\0&0\end{pmatrix} ,
			\end{aligned}
		\end{equation}
		where $\theta(x,t;k)=4tk^3+xk$ and $\chi_{j}$ is a nonzero real constant.
\item $M(x,t;k)\rightarrow I ,\  k\rightarrow\infty$.
\item $M(x,t;k)$ satisfies the symmetry
		\[
		\ M(k)=\overline{ M(-\overline{k})}=\begin{pmatrix}
			0&-1\\1&0\end{pmatrix}M(-k)\begin{pmatrix}
			0&1\\-1&0\end{pmatrix}\ .
		\]
\end{enumerate}
\end{RHP}
The potential $u(x,t)$ is determined from $M(x,t;k)$ via
	\begin{equation}
		u(x,t)=2i\lim\limits_{k\to\infty}\le(k M(x,t;k)\ri)_{12}.
	\end{equation}
Without loss of generality,  it is assumed in the following text that $\chi_{j}>0$. As $N \to + \infty$, we assume that
	\begin{enumerate}
		\item  The poles $ \{i\kappa_j\}_{j=(l-1)N+1}^{l N}$ are given in interval $i[a_l,b_l]$ and satisfy $|\kappa_{j+1}-\kappa_j|=\frac{b_l-a_l}{N},\ j=(l-1)N+1,\ldots, lN-1$, where $l=1,2,\ldots,n$, and $0<a_1<b_1<a_2<b_2<\ldots <a_n<b_n$.
		\item The coefficients $\{i\chi_j\}_{j=1}^{nN}$ are purely imaginary and are assumed to be discretizations of the given functions:
		\begin{gather}
            \chi_j = \frac{ (b_l-a_l) r_l(i\kappa_j)}{N\pi},\ j=(l-1)N+1,\ldots,l N,\ l=1,2,\ldots,n,
		\end{gather}
		where $r_l(k)$ are non-vanishing,  real valued and positive functions for $k\in i( a_{l},  b_{l})$ and analytic near the intervals $i( a_{l},  b_{l}),\ l=1,2\ldots,n $.
	\end{enumerate}
	
	We introduce the closed curves $\Gamma_{l+}$ which located in the upper half-plane $\mathbb{C}_{+}$, encircling the poles  $i\kappa_{(l-1)N+1},\ldots,i\kappa_{l N}$ in a counterclockwise direction. Similarly, $\Gamma_{l-}$ are the clockwise curves that surrounds the poles $-i\kappa_{(l-1)N+1},\ldots,-i\kappa_{l N}$ in the lower half-plane $\mathbb{C}_{-}$ with $\ l=1,2,\ldots, n$.
We first remove the poles by defining
	\begin{equation}
		Z(k) = M(k)
		\begin{cases}
\begin{pmatrix} 1 & 0\\ \frac{i (b_l-a_l) }{N\pi}\displaystyle\sum_{j=(l-1)N+1}^{l N}\frac{ r_l(i\kappa_j)e^{-2i\theta(x,t;k) }}{k - i\kappa_j} & 1\end{pmatrix},
			&\text{$k$ inside $\Gamma_{l+},\  l=1,2,\ldots, n,$}\\[5ex]
\begin{pmatrix} 1 & \frac{i (b_l-a_l) }{N\pi}\displaystyle\sum_{j=(l-1)N+1}^{lN}\frac{ r_l(i\kappa_j)e^{2i\theta(x,t;k) }}{k +i\kappa_j}\\0 & 1\end{pmatrix},
			&\text{$k$ inside $\Gamma_{l-},\  l=1,2,\ldots, n$},\\
			 I ,
			&\text{otherwise},
		\end{cases}
	\end{equation}
where $ I $ is the $2\times2$ identity matrix.	
Then the $2\times 2$ matrix-valued function $Z(x,t;k)$ satisfies following jump relation
\begin{gather}
Z_+(k) = Z_-(k)\begin{cases}
\begin{pmatrix} 1 & 0\\ \frac{i (b_l-a_l) }{N\pi}\displaystyle\sum_{j=(l-1)N+1}^{lN}\frac{ r_l(i\kappa_j)e^{-2i\theta(x,t;k) }}{k - i\kappa_j} & 1\end{pmatrix},&\text{$k\in\Gamma_{l+},\  l=1,2,\ldots, n$},\\[5ex]
\begin{pmatrix} 1 & -\frac{i (b_l-a_l) }{N\pi}\displaystyle\sum_{j=(l-1)N+1}^{lN}\frac{ r_l(i\kappa_j)e^{2i\theta(x,t;k) }}{k + i\kappa_j}\\0 & 1\end{pmatrix},
			&\text{$k\in\Gamma_{l-},\ l=1,2,\ldots, n$ },\\
		\end{cases}
\end{gather}
where, for $k \in \Gamma_{l+}\cup\Gamma_{l-},\ l=1,2,\ldots, n$, the boundary values $Z_{+}(k)$ are taken from the left side of the contour  and the boundary values $Z_{-}(k)$ are taken from the right.
As $N \to \infty$, the following proposition holds.
\begin{proposition}
For any open set $K_{l+}$ containing the interval $i[ a_{l},b_{l}]$, and any open set  $K_{l-}$ containing the interval $i[- b_l,-a_l],\  l=1,2,\ldots, n$, the following limit holds uniformly for all  $k \in \mathbb{C}\backslash (\bigcup\limits_{l=1}^{n}K_{l+})$:
		\begin{gather}
			\lim_{N\to +\infty} \frac{i(b_l-a_l)}{N\pi}\sum_{j=(l-1)N+1}^{lN}\frac{r_l(i\kappa_j) }{k -i\kappa_j} =\int_{ia_l}^{ib_l} \frac{2i  r_l(\zeta)}{k - \zeta}\frac{\d \zeta}{2\pi i},\ l=1,2,\ldots, n,
		\end{gather}
and the following limit holds uniformly for all $k \in \mathbb{C} \backslash (\bigcup\limits_{l=1}^{n} K_{l-})$:
\begin{gather}
			\lim_{N\to +\infty} \frac{i(b_l-a_l)}{N\pi}\sum_{j=(l-1)N+1}^{lN}\frac{r_l(-i\kappa_j) }{k +i\kappa_j} = \int_{-ib_l}^{-ia_l} \frac{2i  r_l(\zeta)}{k - \zeta}\frac{\d \zeta}{2\pi i}, \ l=1,2,\ldots, n.
\end{gather}
\end{proposition}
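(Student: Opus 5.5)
The plan is to read the discrete sum as a Riemann sum and control it uniformly by the Lipschitz constant of the summand. Fix $l$ and write $\kappa_j = a_l + (j-(l-1)N-1)\frac{b_l-a_l}{N-1}$ (or any node placement with mesh $h=\frac{b_l-a_l}{N}$ up to $O(1/N)$ shifts; the argument is insensitive to this). Parametrize the segment by $\zeta = is$, $s\in[a_l,b_l]$, so $\d\zeta = i\,\d s$. The right-hand side becomes
\[
\int_{ia_l}^{ib_l}\frac{2i\,r_l(\zeta)}{k-\zeta}\frac{\d\zeta}{2\pi i}=\frac{i}{\pi}\int_{a_l}^{b_l}\frac{r_l(is)}{k-is}\,\d s ,
\]
while the left-hand side is $\frac{i}{\pi}\sum_j h\, f_k(\kappa_j)$ with $f_k(s)=\frac{r_l(is)}{k-is}$. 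So the claim is exactly that the Riemann sum of $f_k$ on the uniform grid converges to $\int_{a_l}^{b_l} f_k(s)\,\d s$, uniformly in $k\in\C\setminus\bigcup_l K_{l+}$.

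For the estimate, set $\delta=\operatorname{dist}(i[a_l,b_l],\C\setminus K_{l+})>0$. This is positive because $i[a_l,b_l]$ is compact and $K_{l+}$ is open. For $k$ outside $K_{l+}$ we then have $|k-is|\ge\delta$ for all $s\in[a_l,b_l]$. Since $r_l$ is analytic near the closed segment, both $|r_l(is)|\le C_0$ and $|\frac{d}{ds}r_l(is)|\le C_1$ hold there. Hence
\[
|f_k(s)|\le C_0/\delta,\qquad |f_k'(s)|\le C_1/\delta + C_0/\delta^2,
\]
uniformly in such $k$. The standard estimate $\bigl|\sum_j h f(s_j)-\int f\bigr|\le (b_l-a_l)\,h\,\|f'\|_\infty + O(h\|f\|_\infty)$, where the second term accounts for endpoint and node-offset discrepancies, then gives an error of $O(1/N)$ with constant independent of $k$. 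This proves the first limit.

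The second limit follows the same way. Parametrize $\zeta=-is$, $s\in[a_l,b_l]$, so that $\int_{-ib_l}^{-ia_l}\frac{2i r_l(\zeta)}{k-\zeta}\frac{\d\zeta}{2\pi i}=\frac{i}{\pi}\int_{a_l}^{b_l}\frac{r_l(-is)}{k+is}\,\d s$. Here $r_l$ on $i[-b_l,-a_l]$ is understood as extended, e.g.\ by the symmetry $r_l(-\zeta)$ or by the analytic assumption near the mirrored intervals, which is implicit in the statement. The distance to $\C\setminus K_{l-}$ plays the role of $\delta$. The only point requiring care is the uniformity in $k$, including $k\to\infty$. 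This is handled because the bounds above depend only on $\delta$; as $|k|\to\infty$ both sides in fact tend to $0$. I do not expect a genuine obstacle: the main step is simply making the bound $|k-\zeta|\ge\delta$ explicit.
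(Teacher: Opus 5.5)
Your proposal is correct and follows the same route as the paper. The paper gives no argument of its own beyond citing Proposition 2.3 of \cite{Girotti-1}, whose proof is this Riemann-sum comparison, made uniform in $k$ by the bound $|k-\zeta|\ge\delta>0$ off $K_{l\pm}$ together with bounds on $r_l$ and its derivative near the closed segment. The only slip is cosmetic: your explicit node formula has spacing $\frac{b_l-a_l}{N-1}$ rather than the assumed $\frac{b_l-a_l}{N}$, but, as you note, this only changes the error at order $O(1/N)$.
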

\begin{proof}
The proof is similar as Proposition 2.3 in \cite{Girotti-1}.
\end{proof}
	According to the proposition above and a small norm argument, the jump conditions of $Z(k)$ transform into the following formula
	\begin{gather}
				Z_+(k) = Z_-(k)\begin{cases}
\begin{pmatrix} 1 & 0\\ e^{-2i\theta(x,t;k)}\int_{ia_j}^{ib_j} \frac{2i  r_j(\zeta)}{k - \zeta}\frac{\d \zeta}{2\pi i} & 1\end{pmatrix},
			&\text{$k\in\Gamma_{j+},\  j=1,2, \ldots, n $},\\[5ex]
\begin{pmatrix} 1 & -e^{2i\theta(x,t;k) }\int_{-ib_j}^{-ia_j} \frac{2i  r_j(\zeta)}{k - \zeta}\frac{\d \zeta}{2\pi i}\\0 & 1\end{pmatrix},
			&\text{$k\in\Gamma_{j-},\  j=1,2, \ldots, n $}.
		\end{cases}
			\end{gather}
	Next, we define the following transform to eliminate its jump on the contours $\Gamma_{j+}$ and $\Gamma_{j-}$,
	\begin{equation}
		X(k) =Z(k)
		\begin{cases}
\begin{pmatrix} 1 & 0\\ -e^{-2i\theta(x,t;k) }\int_{ia_j}^{ib_j} \frac{2i  r_j(\zeta)}{k - \zeta}\frac{\d \zeta}{2\pi i} & 1\end{pmatrix},
			&\text{$k$ inside the loop $\Gamma_{j+},\  j=1,2, \ldots, n $},\\[5ex]
			\begin{pmatrix} 1 & -e^{2i\theta(x,t;k) }\int_{-ib_j}^{-ia_j} \frac{2i  r_j(\zeta)}{k - \zeta}\frac{\d \zeta}{2\pi i}\\0 & 1\end{pmatrix},
			&\text{$k$ inside the loop $\Gamma_{j-},\  j=1,2, \ldots, n $},\\
			 I ,
			&\text{elsewhere}.
		\end{cases}
	\end{equation}
	Because the integrals have jumps across the intervals $i(a_j,b_j)$ and $i(-b_j,-a_j), j=1,2,\ldots, n$, there still have jumps across those intervals. We obtain a RH  problem for $X(x,t;k)$ by using the Sokhotski-Plemelj formula.
\begin{RHP}\label{RHP2}
Find a $2\times 2$ matrix-valued function $X(x,t;k)$ with the following properties
\begin{enumerate}
\item X(x,t;k) is analytic for $k\in \C \backslash (\bigcup\limits_{j=1}^{n}(i(a_j,b_j)\cup i(-b_j,-a_j)))$.
\item For $k \in \bigcup\limits_{j=1}^{n}(i(a_j,b_j)\cup i(-b_j,-a_j))$, the boundary values $X_{+}(k)$ are taken from the left side of the contour  and the boundary values $X_{-}(k)$ are taken from the right, then we have the following jump relation
	\begin{align}
		&X_{+}(k) = X_{-}(k) \begin{cases}
			\begin{pmatrix} {1}& 0\\   {2i   r_j(k) e^{ -2 i\theta(x,t;k)} }  & {1} \end{pmatrix}, &  k \in i(a_j,b_j),\ j=1,2,\ldots, n,\\
			\begin{pmatrix} {1}& {2i   r_j(k) e^{ 2 i\theta(x,t;k)}}\\0  & {1} \end{pmatrix}, &  k \in i(-b_j,-a_j),\ j=1,2,\ldots, n,
		\end{cases}
\end{align}	
where the contours $i(a_j,b_j)$ and $i(-b_j,-a_j)$ are  both oriented upwards.
\item $X(k) =  I  + \mathcal{O}\le(\frac{1}{k}\ri) ,\qquad k \rightarrow \infty.$
\end{enumerate}
\end{RHP}	
	As explained in \cite{Girotti-2}, we can recover the soliton gas solution $u(x,t)$ from
	\begin{gather}
		u(x,t)= 2i\lim_{k \to \infty} k X(x,t;k)_{12}.
	\end{gather}
\section{Behavior of the potential $u(x,0)$ as $x\to \pm\infty$}
	Here we consider the case $t=0$, we set $\Sigma_j=i(a_j,b_j)$, $\Sigma_{-j}=i(-b_j,-a_j)$, $j=1,2,\ldots, n$, then $X(x;k)$ satisfies the following jump relation
\begin{align}\label{X1}
		&X_+ (x;k) = X_-(x;k) \begin{cases}
\displaystyle \begin{pmatrix} 1 & 0\\ 2ir_j(k) e^{-2ik x} & 1 \end{pmatrix}, &\quad
k \in  \Sigma_j,\ j=1,2,\ldots, n,\\[3ex]
\displaystyle \begin{pmatrix} 1 & 2i r_j(k)e^{2ik x}\\0  & 1 \end{pmatrix}, & \quad
k \in  \Sigma_{-j},\ j=1,2,\ldots, n.
		\end{cases}
	\end{align}
When $x\to -\infty$, we notice that the off-diagonal elements of the jump matrices in  \eqref{X1} are exponentially decaying. Therefore, by a small norm argument, we find
 $$X(x;k) =  I  + \mathcal{O}\le(e^{-c|x|}\ri),\ x\to -\infty,$$
 where $c$ is a positive constant. Hence, the mKdV soliton gas solution $u(x,0)$ is exponentially decreasing as $x\to -\infty$.
To analyze the large $x$ asymptotics as  $x\to+\infty$, we introduce the scalar  function $g(k)$  which satisfies the following properties:

The function $g(k)$ is analytic for $k\in \C \backslash i[-b_n, b_n]$ and satisfies the jump conditions:
\begin{align}
& g_+(k) + g_-(k) = 2k ,& k \in \bigcup\limits_{j=1}^{n}\le(\Sigma_j\cup\Sigma_{-j}\ri) \label{gconstraint1}, \\
& g_+ (k) - g_-(k) = \Omega_0, & k \in i[-a_1,a_1]  \label{gconstraint2},\\
& g_+ (k) - g_-(k) = \Omega_j, & k \in i[b_{j},a_{j+1}],\ j=1,2,\ldots,n-1  \label{gconstraint3},\\
& g_+ (k) - g_-(k) = \Omega_{-j}, & k \in i[-a_{j+1},-b_{j}],\ j=1,2,\ldots,n-1 \label{gconstraint4},\\
& g(k) =\mathcal{O}\le(\frac{1}{k}\ri), & k \to  \infty \, \label{gconstraint5}.
\end{align}
The function $g(k)$ is given by
\begin{equation}\label{gk}
g(k)=k-\int_{ib_n}^{k}\frac{\zeta^{2n}+c_1\zeta^{2(n-1)}+\cdots+c_{n-1}\zeta^2+c_n}{R(\zeta)}\d\zeta,
\end{equation}
where $R(k)=\sqrt{\prod\limits_{j=1}^n(k^2+a_j^2)(k^2+b_j^2)}$, and we choose the cuts of $R(k)$ to be the intervals $\Sigma_j$ and $\Sigma_{-j}$, $j=1,2,\ldots, n$.
The constants $c_j$ are chosen so that
\begin{align}
&\int_{-ia_1}^{ia_1}\frac{\zeta^{2n}+c_1\zeta^{2(n-1)}+\cdots+c_{n-1}\zeta^2+c_n}{R(\zeta)}\d\zeta=0,\label{c1}\\
&\int_{ib_j}^{ia_{j+1}}\frac{\zeta^{2n}+c_1\zeta^{2(n-1)}+\cdots+c_{n-1}\zeta^2+c_n}{R(\zeta)}\d\zeta=0,\ j=1,2,\ldots,n-1.\label{c2}
\end{align}
To represent $c_{j}$ exactly, we first define the following $n\times n$ matrices $A$ and $D_j$, $j=1,2,\ldots, n$,
\begin{align*}
A=\begin{pmatrix}
\int_{ib_{n-1}}^{ia_{n}}\frac{\zeta^{2(n-1)}}{R(\zeta)}\d\zeta & \int_{ib_{n-1}}^{ia_{n}}\frac{\zeta^{2(n-2)}}{R(\zeta)}\d\zeta &
\cdots &
\int_{ib_{n-1}}^{ia_{n}}\frac{1}{R(\zeta)}\d\zeta\\
\int_{ib_{n-2}}^{ia_{n-1}}\frac{\zeta^{2(n-1)}}{R(\zeta)}\d\zeta & \int_{ib_{n-2}}^{ia_{n-1}}\frac{\zeta^{2(n-2)}}{R(\zeta)}\d\zeta &
\cdots &
\int_{ib_{n-2}}^{ia_{n-1}}\frac{1}{R(\zeta)}\d\zeta\\
\cdots&\cdots&\cdots&\cdots\\
\int_{-ia_1}^{ia_1}\frac{\zeta^{2(n-1)}}{R(\zeta)}\d\zeta & \int_{-ia_1}^{ia_1}\frac{\zeta^{2(n-2)}}{R(\zeta)}\d\zeta &
\cdots &
\int_{-ia_1}^{ia_1}\frac{1}{R(\zeta)}\d\zeta
\end{pmatrix},
\end{align*}
\begin{align*}
D_1=\begin{pmatrix}
\int_{ib_{n-1}}^{ia_{n}}\frac{-\zeta^{2n}}{R(\zeta)}\d\zeta & \int_{ib_{n-1}}^{ia_{n}}\frac{\zeta^{2(n-2)}}{R(\zeta)}\d\zeta &
\cdots &
\int_{ib_{n-1}}^{ia_{n}}\frac{1}{R(\zeta)}\d\zeta\\
\int_{ib_{n-2}}^{ia_{n-1}}\frac{-\zeta^{2n}}{R(\zeta)}\d\zeta & \int_{ib_{n-2}}^{ia_{n-1}}\frac{\zeta^{2(n-2)}}{R(\zeta)}\d\zeta &
\cdots &
\int_{ib_{n-2}}^{ia_{n-1}}\frac{1}{R(\zeta)}\d\zeta\\
\cdots&\cdots&\cdots&\cdots\\
\int_{-ia_1}^{ia_1}\frac{-\zeta^{2n}}{R(\zeta)}\d\zeta & \int_{-ia_1}^{ia_1}\frac{\zeta^{2(n-2)}}{R(\zeta)}\d\zeta &
\cdots &
\int_{-ia_1}^{ia_1}\frac{1}{R(\zeta)}\d\zeta
\end{pmatrix},
\end{align*}
\begin{equation*}
D_j=\begin{array}{l}\begin{pmatrix}
\scriptstyle\int_{ib_{n-1}}^{ia_{n}}\frac{\zeta^{2(n-1)}}{R(\zeta)}\d\zeta &
\scriptstyle\cdots &
\scriptstyle\int_{ib_{n-1}}^{ia_{n}}\frac{\zeta^{2(n-j+1)}}{R(\zeta)}\d\zeta &
\scriptstyle\int_{ib_{n-1}}^{ia_{n}}\frac{-\zeta^{2n}}{R(\zeta)}\d\zeta &
\scriptstyle\int_{ib_{n-1}}^{ia_{n}}\frac{\zeta^{2(n-j-1)}}{R(\zeta)}\d\zeta &
\scriptstyle\cdots &
\scriptstyle\int_{ib_{n-1}}^{ia_{n}}\frac{1}{R(\zeta)}\d\zeta\\
\scriptstyle\int_{ib_{n-2}}^{ia_{n-1}}\frac{\zeta^{2(n-1)}}{R(\zeta)}\d\zeta &
\scriptstyle\cdots &
\scriptstyle\int_{ib_{n-2}}^{ia_{n-1}}\frac{\zeta^{2(n-j+1)}}{R(\zeta)}\d\zeta &
\scriptstyle\int_{ib_{n-2}}^{ia_{n-1}}\frac{-\zeta^{2n}}{R(\zeta)}\d\zeta &
\scriptstyle\int_{ib_{n-2}}^{ia_{n-1}}\frac{\zeta^{2(n-j-1)}}{R(\zeta)}\d\zeta &
\scriptstyle\cdots &
\scriptstyle\int_{ib_{n-2}}^{ia_{n-1}}\frac{1}{R(\zeta)}\d\zeta\\
\scriptstyle\cdots&\scriptstyle\cdots&\scriptstyle\cdots&\scriptstyle\cdots&\scriptstyle\cdots&\scriptstyle\cdots&\scriptstyle\cdots\\
\scriptstyle\int_{-ia_1}^{ia_1}\frac{\zeta^{2(n-1)}}{R(\zeta)}\d\zeta &
\scriptstyle\cdots &
\scriptstyle\int_{-ia_1}^{ia_1}\frac{\zeta^{2(n-j+1)}}{R(\zeta)}\d\zeta &
\scriptstyle\int_{-ia_1}^{ia_1}\frac{-\zeta^{2n}}{R(\zeta)}\d\zeta &
\scriptstyle\int_{-ia_1}^{ia_1}\frac{\zeta^{2(n-j-1)}}{R(\zeta)}\d\zeta &
\scriptstyle\cdots &
\scriptstyle\int_{-ia_1}^{ia_1}\frac{1}{R(\zeta)}\d\zeta
\end{pmatrix},
\end{array}
\end{equation*}
for $j=2,3,\ldots,n-1$, and
\begin{align*}
D_n=\begin{pmatrix}
\int_{ib_{n-1}}^{ia_{n}}\frac{\zeta^{2(n-1)}}{R(\zeta)}\d\zeta & \int_{ib_{n-1}}^{ia_{n}}\frac{\zeta^{2(n-2)}}{R(\zeta)}\d\zeta &
\cdots &
\int_{ib_{n-1}}^{ia_{n}}\frac{\zeta^{2}}{R(\zeta)}\d\zeta&
\int_{ib_{n-1}}^{ia_{n}}\frac{-\zeta^{2n}}{R(\zeta)}\d\zeta\\
\int_{ib_{n-2}}^{ia_{n-1}}\frac{\zeta^{2(n-1)}}{R(\zeta)}\d\zeta & \int_{ib_{n-2}}^{ia_{n-1}}\frac{\zeta^{2(n-2)}}{R(\zeta)}\d\zeta &
\cdots &
\int_{ib_{n-2}}^{ia_{n-1}}\frac{\zeta^{2}}{R(\zeta)}\d\zeta&
\int_{ib_{n-2}}^{ia_{n-1}}\frac{-\zeta^{2n}}{R(\zeta)}\d\zeta\\
\cdots&\cdots&\cdots&\cdots&\ldots\\
\int_{-ia_1}^{ia_1}\frac{\zeta^{2(n-1)}}{R(\zeta)}\d\zeta & \int_{-ia_1}^{ia_1}\frac{\zeta^{2(n-2)}}{R(\zeta)}\d\zeta &
\cdots &
\int_{-ia_1}^{ia_1}\frac{\zeta^{2}}{R(\zeta)}\d\zeta&
\int_{-ia_1}^{ia_1}\frac{-\zeta^{2n}}{R(\zeta)}\d\zeta
\end{pmatrix}.
\end{align*}
Utilizing Cramer's Law, the constants $c_j$ can be expressed by
\begin{align}\label{c_j}
c_j=\frac{\det D_j}{\det A}, j=1,2,\ldots, n.
\end{align}
From \eqref{gconstraint2}--\eqref{gconstraint4}, the integral constants $\Omega_0$, $\Omega_j$ and $\Omega_{-j}$ can be expressed by
\begin{align}
&\Omega_{n-1}=-2\int_{ib_n}^{ia_n}\frac{\zeta^{2n}+c_1\zeta^{2(n-1)}+\cdots +c_n}{R_+(\zeta)}\d\zeta,\\
&\Omega_{j}=\Omega_{n-1}+\Omega_{n-2}+\cdots+\Omega_{j+1}-2\int_{ib_{j+1}}^{ia_{j+1}}\frac{\zeta^{2n}+c_1\zeta^{2(n-1)}+\cdots +c_n}{R_+(\zeta)}\d\zeta,\ j=0,1,\ldots, n-2,\\
&\Omega_{-j}=\Omega_{j}, \ j=1,2,\ldots,n-1.
\end{align}
Next, we define the function $f(k)$ which satisfies the following conditions:
\begin{align}
&f_+(k) f_-(k)=\frac{1}{2r_j(k)},\ \ &k\in\Sigma_j,\ j=1,2,\ldots,n,\\
&\frac{f_+(k)}{f_-(k)}=e^{i\Delta_j},\ \ &k\in i[b_j,a_{j+1}],\ j=1,2,\ldots,n-1,\\
&\frac{f_+(k)}{f_-(k)}=e^{i\Delta_0},\ \ &k\in i[-a_1,a_1],\\
&\frac{f_+(k)}{f_-(k)}=e^{i\Delta_{-j}},\ \ &k\in i[-a_{j+1},-b_j],\ j=1,2,\ldots,n-1,\\
&f_+(k) f_-(k)=2r_j(k),\ \ &k\in\Sigma_{-j},\ j=1,2,\ldots,n,\\
&f(k)=1+\mathcal{O}(\frac{1}{k}),\ \ &k \rightarrow  \infty.\label{fconstant}
\end{align}
By using the Sokhotski-Plemelj formula, we get
\begin{equation}\label{f}
\begin{aligned}
\begin{array}{l}
f(k)=\exp\le( \frac{R(k)}{2\pi i}\left(\sum\limits_{j=1}^{n}\int_{\Sigma_j}\frac{\log(\frac{1}{2r_j(\zeta)}) }{R_+(\zeta)(\zeta-k)}\d\zeta+
\sum\limits_{j=1}^{n-1}\int_{ ib_j}^{ia_{j+1}}\frac{i\Delta_j}{R(\zeta)(\zeta-k)}\d\zeta
+\int_{-ia_1}^{ia_1}\frac{i\Delta_0}{R(\zeta)(\zeta-k)}\d\zeta\right.\right.\\
\left.\left.\ \ \ \ \ \ \ \
+\sum\limits_{j=1}^{n}\int_{\Sigma_{-j}}\frac{\log(2r_j(\zeta))}{R_+(\zeta)(\zeta-k)}\d\zeta+
\sum\limits_{j=1}^{n-1}\int_{-ia_{j+1}}^{-ib_{j}}\frac{i\Delta_{-j}}{R(\zeta)(\zeta-k)}\d\zeta
\ri) \ri) .
\end{array}
\end{aligned}
\end{equation}
From the normalization condition \eqref{fconstant}, we obtain
\begin{equation}\label{Delta}
\begin{array}{l}
\sum\limits_{j=1}^{n}\int_{\Sigma_j}\frac{\log(\frac{1}{2r_j(\zeta)}) \zeta^l}{R_+(\zeta)}\d\zeta+
\sum\limits_{j=1}^{n-1}\int_{ ib_j}^{ia_{j+1}}\frac{i\Delta_j\zeta^l}{R(\zeta)}\d\zeta
+\int_{-ia_1}^{ia_1}\frac{i\Delta_0\zeta^l}{R(\zeta)}\d\zeta
+\sum\limits_{j=1}^{n}\int_{\Sigma_{-j}}\frac{\log(2r_j(\zeta))\zeta^l}{R_+(\zeta)}\d\zeta\\+
\sum\limits_{j=1}^{n-1}\int_{-ia_{j+1}}^{-ib_{j}}\frac{i\Delta_{-j}\zeta^l}{R(\zeta)}\d\zeta=0,\ l=0,1,\ldots,2n-2.
\end{array}
\end{equation}
Then the constants $\Delta_0$, $\Delta_j$ and $\Delta_{-j}$, $j=1,2,\ldots,n-1$, can be uniquely determined by the system \eqref{Delta}.

Using the functions $g(k)$ and $f(k)$, we define a new matrix-valued function
\begin{gather}
T(k)=X(k)e^{ixg(k)\sigma_3}f(k)^{\sigma_3},
\end{gather}
where
$$f(k)^{\sigma_3}=
\begin{pmatrix}
f(k) & 0 \\
0 & f(k)^{-1}
\end{pmatrix},\ \sigma_3:=\begin{pmatrix} 1&0\\0&-1\end{pmatrix},$$
then the jump relation for $T(k)$ is
	\begin{align}
		\label{RH_T}
		T_+(k)=T_-(k)\begin{cases}
			\displaystyle \begin{pmatrix} e^{ix\le(g_+(k) - g_-(k)\ri)}\dfrac{f_+(k)}{f_-(k)} & 0\\ i & e^{-ix\le(g_+(k) - g_-(k)\ri)}\dfrac{f_-(k)}{f_+(k)} \end{pmatrix},  k \in  \Sigma_j,\ j=1,2,\ldots,n,&\\[3ex]
			\displaystyle \begin{pmatrix} e^{ix\le(g_+(k) - g_-(k)\ri)}\dfrac{f_+(k)}{f_-(k)}&  i\\0 & e^{-ix\le(g_+(k) - g_-(k)\ri)}\dfrac{f_-(k)}{f_+(k)} \end{pmatrix},   k \in \Sigma_{-j},\ j=1,2,\ldots,n,& \\
			\displaystyle \begin{pmatrix} e^{i(x\Omega_j+\Delta_j) } & 0\\0& e^{-i(x\Omega_j+\Delta_j)} \end{pmatrix},  k \in  i[b_{j},a_{j+1}],\ j=1,2,\ldots,n-1,& \\
\displaystyle \begin{pmatrix} e^{i(x\Omega_0+\Delta_0) } & 0\\0& e^{-i(x\Omega_0+\Delta_0)} \end{pmatrix},  k \in  i[-a_1,a_1],&\\
\displaystyle \begin{pmatrix} e^{i(x\Omega_{-j}+\Delta_{-j}) } & 0\\0& e^{-i(x\Omega_{-j}+\Delta_{-j})} \end{pmatrix},  k \in  i[-a_{j+1},-b_{j}],\ j=1,2,\ldots,n-1.&
		\end{cases}
	\end{align}
Assume that the function $r_j(k)$ has analytic continuation off the imaginary axis:
\begin{gather}
\hat{r}_j(k)|_{k\in i[a_j,b_j]}=r_j(k),\ j=1,2,\ldots,n,
\end{gather}
where $\hat{r}_j(k)$ is analytic in the lens around $\Sigma_j$ (see Figure 1).
The jumps for $T(k)$ admit the following factorizations on $k\in \Sigma_j\cup\Sigma_{-j},\ j=1,2,\ldots,n$,
\begin{align*}
&\begin{pmatrix} (2r_j(k))^{-1} f_-(k)^{-2} e^{ix(g_+(k)-g_-(k))} & 0 \\ i  & (2r_j(k))^{-1} f_+(k)^{-2} e^{-ix(g_+(k)-g_-(k))} \end{pmatrix}=\\
&\quad\quad\begin{pmatrix} 1 & -i (2r_j(k))^{-1} f_-(k)^{-2} e^{-2ix(g_-(k)-k)} \\ 0 & 1 \end{pmatrix}
	   \begin{pmatrix} 0 & i \\ i & 0 \end{pmatrix}
	   \begin{pmatrix} 1 &  -i (2r_j(k))^{-1} f_+(k)^{-2} e^{-2ix(g_+(k)-k)}  \\ 0 & 1 \end{pmatrix},
\end{align*}
for $k \in \Sigma_{j},\ j=1,2,\ldots,n$,
\begin{align*}
&\begin{pmatrix} (2r_j(k))^{-1} f_+(k)^{2} e^{ix(g_+(k)-g_-(k))} & i \\ 0  & (2r_j(k))^{-1} f_-(k)^{2} e^{-ix(g_+(k)-g_-(k))} \end{pmatrix}=\\
&\quad\quad\begin{pmatrix} 1 & 0\\ -i (2r_j(k))^{-1} f_-(k)^{2} e^{2ix(g_-(k)-k)} & 1 \end{pmatrix}
	   \begin{pmatrix} 0 & i \\ i & 0 \end{pmatrix}
	   \begin{pmatrix} 1 & 0\\ -i (2r_j(k))^{-1} f_+(k)^{2} e^{2ix(g_+(k)-k)}  & 1 \end{pmatrix},
\end{align*}
for $k \in \Sigma_{-j},\ j=1,2,\ldots,n$.

Using the decomposition above, we define
\begin{equation}\label{S}
	 S(k) =T(k)
	\begin{cases}
	   \begin{pmatrix} 1 & -i (2\hat{r}_j(k))^{-1} f(k)^{-2} e^{-2i x(g(k)-k)} \\ 0 & 1 \end{pmatrix},
	& k\in \text{lens right of $\Sigma_{j},\ j=1,2,\ldots,n$}, \\
	   \begin{pmatrix} 1 & i (2\hat{r}_j(k))^{-1} f(k)^{-2} e^{-2i x(g(k)-k)} \\ 0 & 1 \end{pmatrix},
	& k\in \text{lens left of $\Sigma_{j},\ j=1,2,\ldots,n$}, \\
	   \begin{pmatrix} 1 & 0 \\  \frac{-i f(k)^2 e^{2i x(g(k)-k)}}{2\hat{r}_j(k)} & 1 \end{pmatrix},
	& k\in \text{lens right of $\Sigma_{-j},\ j=1,2,\ldots,n$}, \\
	   \begin{pmatrix} 1 & 0 \\  \frac{i f(k)^2 e^{2i x(g(k)-k)}}{2\hat{r}_j(k)} & 1 \end{pmatrix},
	& k\in \text{lens left of $\Sigma_{-j},\ j=1,2,\ldots,n$}, \\
	   I , & \text{elsewhere .}
	\end{cases}
\end{equation}
Then the matrix-valued function $S(k)$ satisfies the following RH problem.
\begin{RHP}\label{RHP3}
Find a $2\times 2$ matrix-valued function $S(x;k)$ with the following properties
\begin{enumerate}
\item $S(x;k)$ is analytic for $k\in \C\backslash( i[-b_n,b_n]\cup \bigcup\limits_{j=1}^{n}(\mathcal{ C}_j\cup\mathcal{ C}_{-j}))$, where the jump contours $\mathcal{C}_{j}$ and $\mathcal{C}_{-j}$ are depicted in Figure 1.
\item For $k\in i[-b_n,b_n]\cup \bigcup\limits_{j=1}^{n}(\mathcal{ C}_j\cup\mathcal{ C}_{-j})$, the boundary values $S_{+}(k)$ are taken from the left side of the contour  and the boundary values $S_{-}(k)$ are taken from the right, then we have the following jump relation
\begin{equation}
\label{VS}
\begin{split}
S_+(k)=S_-(k)\begin{cases}
\begin{pmatrix} 1 &  \dfrac{-ie^{-2i x(g(k)-k)}}{2\hat{r}_j(k)f(k)^{2}} \\ 0 & 1 \end{pmatrix},
	 &k\in\mathcal{C}_j ,\ j=1,2,\ldots,n, \\
\begin{pmatrix} 1 & 0 \\  \dfrac{-i f(k)^2 e^{2i x(g(k)-k)}}{2\hat{r}_j(k)} & 1 \end{pmatrix},
	 &k\in  \mathcal{C}_{-j},\ j=1,2,\ldots,n, \\
\begin{pmatrix} 0 & i \\ i & 0 \end{pmatrix},
	 &k\in \bigcup\limits_{j=1}^{n}(\Sigma_{j}\cup \Sigma_{-j}), \\
\begin{pmatrix} e^{i(x{\Omega_0}+\Delta_0)} & 0 \\ 0 & e^{-i(x{\Omega_0}+\Delta_0)} \end{pmatrix},
 &k\in i[-a_1,a_1],\\
\begin{pmatrix} e^{i(x{\Omega_j}+\Delta_j)} & 0 \\ 0 & e^{-i(x{\Omega_j}+\Delta_j)} \end{pmatrix},
 &k\in i[b_j,a_{j+1}],\ j=1,2,\ldots,n-1,\\
\begin{pmatrix} e^{i(x{\Omega_{-j}}+\Delta_{-j})} & 0 \\ 0 & e^{-i(x{\Omega_{-j}}+\Delta_{-j})} \end{pmatrix},
 &k\in i[-a_{j+1},-b_j],\ j=1,2,\ldots,n-1.\\
	\end{cases}
\end{split}
\end{equation}

\item $S(k) =  I  + \mathcal{O}\le(\frac{1}{k}\ri) ,\qquad k \rightarrow \infty.$
\end{enumerate}
\end{RHP}
In order to eliminate the jumps on $\mathcal{C}_{j}$ and $\mathcal{C}_{-j}$, we need the following lemma.
\begin{lemma}
The following inequalities hold
\begin{align}
\label{in1}
&\operatorname{Re} 2i\le(g(k)-k\ri) >0\, ,\quad k\in \bigcup_{j=1}^{n}{\mathcal C}_j \backslash\{ia_1,i b_1,\ldots,ia_n,i b_n\},\\
\label{in2}
&\operatorname{Re} 2i\le(g(k)-k \ri) <0\, ,\quad  k\in \bigcup_{j=1}^{n}{\mathcal C}_{-j}\backslash\{-ia_1,-i b_1,\ldots,-ia_n,-i b_n \}.
\end{align}
\end{lemma}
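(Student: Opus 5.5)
The plan is to study the function $h(k):=2i\le(g(k)-k\ri)$ near each band and use the Cauchy--Riemann equations. By \eqref{gk},
\[
h'(k)=-\frac{2i\,P(k)}{R(k)},\qquad P(k)=k^{2n}+c_1k^{2(n-1)}+\cdots+c_n .
\]
On a band $\Sigma_j$ we have $g_++g_-=2k$ by \eqref{gconstraint1}, so $h_+=-h_-$. I will show $h_\pm$ is purely imaginary on $\Sigma_j$. Then $\operatorname{Re}h=0$ there, and the sign of $\operatorname{Re}h$ just off the band is set by the normal derivative.

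\emph{Step 1: reality of $P$.} The coefficients $c_l$ are real. Indeed, for $\zeta=is$ on a gap, $R(\zeta)$ is real or purely imaginary with a fixed phase on each gap, so the conditions \eqref{c1}--\eqref{c2} become real linear equations for the $c_l$. By the Cramer formula \eqref{c_j}, the $c_l$ are real. Hence $P(is)$ is a real even polynomial in $s$.

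\emph{Step 2: locating the zeros of $P$.} Each normalization integral in \eqref{c1}--\eqref{c2} vanishes over a gap, and $R$ keeps a constant phase on that gap. So $P(is)$ changes sign in each of $(-a_1,a_1)$ and $(b_j,a_{j+1})$ for $j=1,\dots,n-1$.

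By evenness, the gap $(-a_1,a_1)$ forces a sign change at some $\pm s_0$, or a double zero at $0$; this accounts for two zeros. Each gap $(b_j,a_{j+1})$ and its mirror gap together give $2(n-1)$ zeros. That is $2n$ zeros in total, which equals the degree. Therefore $P$ has no zeros on any band $i[a_j,b_j]$, and $P(is)$ has constant sign there.

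\emph{Step 3: $h$ on the band and its transverse derivative.} Because $R_+$ is purely real on the cut in the $is$-variable (with the appropriate branch), $h_+(is)$ has constant argument along $\Sigma_j$. Moreover $h_+(ib_n)=0$. Integrating gap by gap, the gap conditions and the real-valuedness of the $\Omega$'s imply $\operatorname{Re}h_\pm\equiv0$ on every $\Sigma_j$.

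Now write $k=is+\varepsilon$ with $\varepsilon$ small and real. Then
\[
\operatorname{Re}h(is+\varepsilon)=\varepsilon\,\operatorname{Re}h_\pm'(is)+O(\varepsilon^2),
\]
and $h'_+=-2iP/R_+$ is real and of one sign on the band.

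I must check that the left and right sides give the same sign. For the $+$ side we use $h_+'$ and $\varepsilon<0$; for the $-$ side we use $h_-'=-h_+'$ and $\varepsilon>0$. The two sign flips cancel, so $\operatorname{Re}h$ has a single strict sign on both lips of the lens.

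To fix that sign I will evaluate at one convenient point, for instance near $ib_n$. Near $ib_n$ one has
\[
h\sim C\,(k-ib_n)^{3/2}\ \text{ or }\ C\,(k-ib_n)^{1/2},
\]
and the sign can be read from the large-$k$ behavior $h\approx-2ik$ together with the fact that $h$ has no zero between $ib_n$ and $\infty$. This should give $\operatorname{Re}h>0$.

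Since the lenses $\mathcal C_j$ can be taken arbitrarily close to $\Sigma_j$, and $P/R_+$ is bounded away from zero on compact subsets of the open band, \eqref{in1} follows. Near the endpoints the local square-root behavior $h\sim c\,(k-ia_j)^{1/2}$ must be handled. Here the opening angle of the lens must be chosen inside the sector where $\operatorname{Re}h>0$; that sector has angle $\pi$ on each side, so any lens meeting the endpoint nontangentially... but not along the gap direction works.

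Finally, \eqref{in2} follows from the symmetry $g(-k)=-g(k)$, which holds because $g$ is odd by construction. This gives $h(-k)=-h(k)$, and $\mathcal C_{-j}=-\mathcal C_j$.

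The main obstacle is Step 2, the zero count: showing that exactly one zero pair of $P$ lies in each gap and none lies on a band. I would first try the sign-change argument above. The delicate point is confirming that $R$ has constant phase on each gap, and that the gap $(-a_1,a_1)$ genuinely yields two zeros rather than a degenerate configuration.
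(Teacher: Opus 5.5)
Your route is the paper's. You write $g'-1=-P/R$, locate the zeros of $P$ in the gaps, note $\operatorname{Re}h_\pm=0$ on the bands, and read the sign of $\operatorname{Re}h$ beside $\Sigma_j$ off the real number $h_+'(is)$. Your degree count in Step 2 correctly justifies the factorization $P(k)=\prod_{l=0}^{n-1}(k^2+k_l^2)$ with $k_0\in(0,a_1)$ and $k_l\in(b_l,a_{l+1})$, which the paper only asserts.

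The gap is in fixing the sign. Your evaluation near $ib_n$, via $h\approx-2ik$ and monotonicity of the real function $h(is)$ on $(b_n,\infty)$, is valid. But it only fixes the sign beside $\Sigma_n$, i.e.\ on $\mathcal{C}_n$. Nothing in the proposal carries that sign to $\Sigma_1,\dots,\Sigma_{n-1}$, and this is not automatic. On $\Sigma_j$, $P(is)=\prod_l(k_l^2-s^2)$ has exactly $j$ negative factors, so its sign $(-1)^j$ flips from band to band.

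What is missing is the branch bookkeeping for $R$:
\begin{itemize}
\item with $R>0$ on $\mathbb{R}$, one has $R(is)>0$ on $(0,a_1)$;
\item $R(is)$ has sign $(-1)^j$ on the gap $(b_j,a_{j+1})$;
\item $R_+(is)=(-1)^j\,i\,|R(is)|$ on $\Sigma_j$.
\end{itemize}
So $R_+$ is purely imaginary on the bands, not real as you wrote; it is $h_+'$ that is real. The two alternations cancel: $P/R_+=-i|P/R|$, hence $h_+'(is)=-2|P(is)/R(is)|<0$ on every band. This is exactly the paper's statement $\operatorname{Im}(g_+'-1)>0$ on each $\Sigma_j$. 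Your transverse-derivative step then gives $\operatorname{Re}h>0$ beside every band, and the symmetry argument handles $\mathcal{C}_{-j}$.

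Alternatively, you can avoid all sign tracking with a global argument built from facts you already have. The jumps of $h$ across the gaps are the purely imaginary constants $2i\Omega_j$, so $\operatorname{Re}h$ is harmonic in $\mathbb{C}_+\setminus\bigcup_j\Sigma_j$. It vanishes on the bands (your Step 3). It also vanishes on $\mathbb{R}$, where $g$ is real because $g'=1-P/R$ is real there and $g\to0$ at infinity. At infinity, $\operatorname{Re}h-2\operatorname{Im}k=\operatorname{Re}(2ig)\to0$. The minimum principle then gives $\operatorname{Re}h>0$ on all of $\mathbb{C}_+\setminus\bigcup_j\Sigma_j$, hence on any lens, endpoints included. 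A minor point: near $ib_n$ the local behaviour is $h\sim C(k-ib_n)^{1/2}$, not $(k-ib_n)^{3/2}$, since $P(ib_n)\neq0$.
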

\begin{proof}
From equations \eqref{gk}, \eqref{c1} and \eqref{c2}, We can rewrite the derivative of the function $g(k)-k$ as
\begin{gather}
g'(k)-1=-\frac{\prod\limits_{j=0}^{n-1}(k^2+k_j^2)}{R(k)},
\end{gather}
where $k_0\in (0, a_1)$ and $k_j \in (b_j, a_{j+1}),\ j=1,\ldots, n-1$. It follows that, for $k\in \Sigma_{j},\ j=1,\ldots, n$,
\begin{gather}
\Im (g_+'(k)-1)=-\Im (g_-'(k)-1)>0.
\end{gather}
Hence, $\Im (g(k)-k)<0$ on the left and right of $\Sigma_{j}$ and  $k \in \mathcal{C}_{j}$, which implies that \eqref{in1} holds. The inequality \eqref{in2} can be proven similarly.
\end{proof}
Using the lemma above, we know that the off-diagonal entries in the jumps along the left and right lenses are exponentially small as $x\to +\infty$ and that these jumps asymptotically approximate the unit matrix outside a small neighborhood of the endpoints  $\pm i a_j$ and $\pm i b_j$, $j=1,2,\ldots,n$. Then we get the following model problem.
\begin{RHP}\label{RHP4}
Find a $2\times 2$ matrix-valued function $S^{\infty}(k)$ satisfies the following properties
\begin{enumerate}
\item $S^{\infty}(k)$ is analytic in $\mathbb{C}\setminus i[-b_n,b_n]$.
\item For $k\in i[-b_n, b_n] $, the boundary values $S^{\infty}_{\pm}(k)$ satisfy the following jump relation
\begin{gather}
\label{Sinfinity1}
S^{\infty}_+(k) = S^{\infty}_-(k)
\begin{cases}
\begin{pmatrix} 0 & i \\ i & 0 \end{pmatrix},
	& k\in \Sigma_{j}\cup \Sigma_{-j} ,\ j=1,2,\ldots,n, \\
\begin{pmatrix} e^{i(x{\Omega_0}+\Delta_0)} & 0 \\ 0 & e^{-i(x{\Omega_0}+\Delta_0)} \end{pmatrix},
& k\in i[-a_1,a_1],\\
\begin{pmatrix} e^{i(x{\Omega_j}+\Delta_j)} & 0 \\ 0 & e^{-i(x{\Omega_j}+\Delta_j)} \end{pmatrix},
& k\in i[b_j,a_{j+1}],\ j=1,2,\ldots,n-1,\\
\begin{pmatrix} e^{i(x{\Omega_{-j}}+\Delta_{-j})} & 0 \\ 0 & e^{-i(x{\Omega_{-j}}+\Delta_{-j})} \end{pmatrix},
& k\in i[-a_{j+1},-b_j],\ j=1,2,\ldots,n-1.\\
\end{cases}
\end{gather}
 \item
$
 S^{\infty}(k)= I +\mathcal{O}\le(\frac{1}{k}\ri),\quad k\to\infty.
$
\end{enumerate}
\end{RHP}
\begin{figure}
\centering
\scalebox{.75}{
\begin{tikzpicture}[>=stealth]op
\path (0,0) coordinate (O);
\coordinate (eta0) at (0,0.4);    \coordinate (eta0c) at ($-1*(eta0)$);
\coordinate (eta1) at (0,1);    \coordinate (eta1c) at ($-1*(eta1)$);
\coordinate (eta2) at (0,3);       \coordinate (eta2c) at ($-1*(eta2)$);
\coordinate (eta3) at (0,3.6);   \coordinate (eta3c) at ($-1*(eta3)$);
\coordinate (eta4) at (0,0.0);
\coordinate (eta5) at (0,0.2); \coordinate (eta5c) at ($-1*(eta5)$);

\draw[->- = .5, dashed,  thick] (eta2)--(eta3)
  node[pos=0.5, pin={ }] {};
\draw[->- = .5,  thick] (eta1)--(eta2)
  node[pos=0.5, pin={}] {};

\draw[->- = .5, dashed, thick] (eta3c)--(eta2c)
  node[pos=0.5, pin={ }] {};
\draw[->- = .5,  thick] (eta2c)--(eta1c)
  node[pos=0.5, pin={}] {};

\draw[->- = .5, dashed,  thick] (eta0)--(eta1)
  node[pos=0.5, pin={}] {};
\draw[->- = .5, dashed,  thick] (eta1c)--(eta0c)
  node[pos=0.5, pin={}] {};

\draw[->- = .7,thick] (eta1) .. controls + (30:1.5cm) and + (-30:1.5cm) .. (eta2)
  node[pos=.25, right] {$\mathcal{C}_{j}$, $j=1,2,\ldots,n$};
\draw[->- = .7,thick] (eta1) .. controls + (150:1.5cm) and + (-150:1.5cm) .. (eta2)
  node[pos=.25, left] {$\mathcal{C}_{j}$};

\draw[->- = .7,thick] (eta2c) .. controls + (30:1.5cm) and + (-30:1.5cm) .. (eta1c)
  node[pos=.25, right] {$\mathcal{C}_{-j}$, $j=1,2,\ldots,n$}
  node[pos=0.5, pin={}] {};
\draw[->- = .7,thick] (eta2c) .. controls + (150:1.5cm) and + (-150:1.5cm) .. (eta1c)
  node[pos=.25, left] {$\mathcal{C}_{-j}$};

\draw[fill] (eta4) circle [radius=0.03];
\draw[fill] (eta5) circle [radius=0.03];
\draw[fill] (eta5c) circle [radius=0.03];
\draw[fill] (eta2) circle [radius=0.06] node[left] {$ib_{j}$};
\draw[fill] (eta1) circle [radius=0.06] node[left] {$ia_{j}$};

\draw[fill] (eta2c) circle [radius=0.06] node[left] {$-i b_{j}$};
\draw[fill] (eta1c) circle [radius=0.06] node[left] {$-i a_{j}$};

\end{tikzpicture}
 }

\label{openinglenses1}
\caption{Opening lenses and the jump contours for $S(k)$.}
\end{figure}

In order to construct the solution of RH problem \ref{RHP4}, we first introduce a scalar function $h(x;k)$ which satisfies the following properties:
\begin{align}
& h_+(k) + h_-(k) = 0 ,& k \in \bigcup_{j=1}^{n}(\Sigma_{j}\cup\Sigma_{-j}), \\
& h_+ (k) - h_-(k) = i(x\Omega_0+\Delta_0), & k \in i[-a_1,a_1]  ,\\
& h_+ (k) - h_-(k) = i(x\Omega_j+\Delta_j), & k \in i[b_{j},a_{j+1}],\ j=1,2,\ldots,n-1,\\
& h_+ (k) - h_-(k) = i(x\Omega_{-j}+\Delta_{-j}), & k \in i[-a_{j+1},-b_{j}],\ j=1,2,\ldots,n-1.
\end{align}
Using the Sokhotski-Plemelj formula, the scalar function $h(x;k)$ can be expressed as
\begin{equation}
\begin{array}{l}
h(x;k)=\frac{R(k)}{2\pi i}\le(\sum\limits_{j=1}^{n-1}\int_{ ib_{j}}^{ia_{j+1}}\frac{i(x\Omega_j+\Delta_j)}{R(\zeta)(\zeta-k)}\d\zeta+\int_{ -ia_1}^{ia_1}\frac{i(x\Omega_0+\Delta_0)}{R(\zeta)(\zeta-k)}\d\zeta+
 \sum\limits_{j=1}^{n-1}\int_{ -ia_{j+1}}^{-ib_{j}}\frac{i(x\Omega_{-j}+\Delta_{-j})}{R(\zeta)(\zeta-k)}\d\zeta\ri).
 \end{array}
\end{equation}
Suppose that $R(k)$ has the following asymptotic expansion as k tends to infinity
\begin{gather}
R(k)=k^{2n}+R_1 k^{2(n-1)}+R_2 k^{2(n-2)}+\cdots+R_n+\mathcal{O}\le(\frac{1}{k^2}\ri),\ k\to \infty,
\end{gather}
where $R_{j}$ can be expressed by $a_j$ and $b_{j}$, $j=1,2,\ldots,n$.
Next, we consider the asymptotic expansion of $h(k)$ as $k$ approaches infinity,
\begin{align}
h(k)=h_{2n-1} k^{2n-1}+h_{2n-2} k^{2n-2}+\cdots+h_0+\mathcal{O}\le(\frac{1}{k}\ri),\ k\to \infty,
\end{align}
where
\begin{equation}\label{h0123}
\begin{aligned}\begin{array}{l}
h_{2n-1}=-\frac{1}{2\pi }\le(\sum\limits_{j=1}^{n-1}\int_{ ib_{j}}^{ia_{j+1}}\frac{x\Omega_j+\Delta_j}{R(\zeta)}\d\zeta+\int_{ -ia_1}^{ia_1}\frac{x\Omega_0+\Delta_0}{R(\zeta)}\d\zeta+\sum\limits_{j=1}^{n-1}\int_{ -ia_{j+1}}^{-ib_{j}}\frac{x\Omega_{-j}+\Delta_{-j}}{R(\zeta)}\d\zeta\ri),\\
h_{2n-1-2l}=-\frac{1}{2\pi }\le(\sum\limits_{j=1}^{n-1}\int_{ ib_{j}}^{ia_{j+1}}\frac{(x\Omega_j+\Delta_j)(\zeta^{2l}+\sum\limits_{p=1}^{l}R_p\zeta^{2(l-p)})}{R(\zeta)}\d\zeta+\int_{ -ia_1}^{ia_1}\frac{(x\Omega_0+\Delta_0)(\zeta^{2l}+\sum\limits_{p=1}^{l}R_p\zeta^{2(l-p)})}{R(\zeta)}\d\zeta\right.\\
\left.\ \ \ \ \ \ \ \  \ \ \ \ \ \ \ \ \
+\sum\limits_{j=1}^{n-1}\int_{ -ia_{j+1}}^{-ib_{j}}\frac{(x\Omega_{-j}+\Delta_{-j})(\zeta^{2l}+\sum\limits_{p=1}^{l}R_p\zeta^{2(l-p)})}{R(\zeta)}\d\zeta\ri),\ l=1,2,\ldots,n-1,\\
h_{2n-2}=-\frac{1}{2\pi }\le(\sum\limits_{j=1}^{n-1}\int_{ ib_{j}}^{ia_{j+1}}\frac{(x\Omega_j+\Delta_j)\zeta}{R(\zeta)}\d\zeta+\int_{ -ia_1}^{ia_1}\frac{(x\Omega_0+\Delta_0)\zeta}{R(\zeta)}\d\zeta+\sum\limits_{j=1}^{n-1}\int_{ -ia_{j+1}}^{-ib_{j}}\frac{(x\Omega_{-j}+\Delta_{-j})\zeta}{R(\zeta)}\d\zeta\ri), \\
h_{2n-2-2l}=-\frac{1}{2\pi }\le(\sum\limits_{j=1}^{n-1}\int_{ ib_{j}}^{ia_{j+1}}\frac{(x\Omega_j+\Delta_j)(\zeta^{2l}+\sum\limits_{p=1}^{l}R_p\zeta^{2(l-p)})\zeta}{R(\zeta)}\d\zeta+\int_{ -ia_1}^{ia_1}\frac{(x\Omega_0+\Delta_0)(\zeta^{2l}+\sum\limits_{p=1}^{l}R_p\zeta^{2(l-p)})\zeta}{R(\zeta)}\d\zeta\right.\\
\left.\ \ \ \ \ \ \ \  \ \ \ \ \ \ \ \ \
+\sum\limits_{j=1}^{n-1}\int_{ -ia_{j+1}}^{-ib_{j}}\frac{(x\Omega_{-j}+\Delta_{-j})(\zeta^{2l}+\sum\limits_{p=1}^{l}R_p\zeta^{2(l-p)})\zeta}{R(\zeta)}\d\zeta\ri), \ l=1,2,\ldots,n-1.
\end{array}
\end{aligned}
\end{equation}

Next, we define a matrix-valued function $Q(x;k)$ to eliminate the jumps over the intervals $i[-a_{j+1}, -b_{j}]$, $i[-a_1,a_1]$ and $i[b_j,a_{j+1}]$,
\begin{align}
Q(x;k)= e^{h_0\sigma_3}S^{\infty}(k) e^{-h(k)\sigma_3},
\end{align}
then $Q(k)$ satisfies
\begin{align}
&Q_+(k)=Q_-(k)\begin{pmatrix}0&i\\i&0  \end{pmatrix},\ \ k \in \bigcup\limits_{j=1}^{n}(\Sigma_j\cup\Sigma_{-j}),\label{Q1}\\
&Q(k)\rightarrow\begin{pmatrix}e^{-\sum\limits_{j=1}^{2n-1}h_{j}k^{j}}&0\\0&e^{\sum\limits_{j=1}^{2n-1}h_{j}k^{j}}\end{pmatrix},  \  k\to \infty\label{Q2}.
\end{align}
Next, we turn to construct a solution with respect to $Q(k)$ that satisfies conditions \eqref{Q1} and \eqref{Q2}. We  first introduce  a two-sheeted  Riemann surface $\mathfrak{X}$ of genus $2n-1$ (see Figure 2),
\begin{align}
\mathfrak{X}=\{(k,\eta)|\eta^2=\prod\limits_{j=1}^{n}(k^2+a_{j}^2)(k^2+b_{j}^2)\},
\end{align}
where the first sheet of $\mathfrak{X}$ is identified by the fact that $R(k)>0$ for $\Im k=0$. We fix a canonical homology basis on $\mathfrak{X}$ by choosing $\mathcal A_{j}$ and $\mathcal A_{n-1+j}$  to encircle $\Sigma_{-(n-j)}$ and $\Sigma_{j}$ clockwise on the first sheet, respectively. We choose $\mathcal B_{j}$  to pass from the positive side of $\Sigma_{-n}$ to $\Sigma_{-(n-j)}$ on sheet 1 and from the negative side of $\Sigma_{-(n-j)}$ to $\Sigma_{-n}$ on the sheet 2, and  $\mathcal B_{n-1+j}$  to pass from the positive side of $\Sigma_{-n}$ to $\Sigma_{j}$ on sheet 1 and from the negative side of $\Sigma_{j}$ to $\Sigma_{-n}$ on the sheet 2 for $j=1,2,\ldots,n-1$.
Set
\begin{align}
\omega_j(k)=\int_{-ia_n}^{k}\psi_j(\zeta)\d \zeta,\ j=1,2,\ldots,2n-1,
\end{align}
where $d \omega_j(\mathcal{P})$ is a basis holomorphic differential on $\mathfrak{X}$:
\begin{align}
\psi_j(\zeta)=\frac{\sum\limits_{l=1}^{2n-1}c_{jl}\zeta^{2n-1-l}}{R(\zeta)},
\end{align}
the coefficients $c_{jl}$ are determined by
$$\oint_{\mathcal{A}_l}\d{\omega}_j(\mathcal{P})=\delta_{jl},\ j,l=1,2,\ldots,2n-1.$$
The entries of $(2n-1)\times (2n-1)$ period matrix $B$ are given by
\begin{gather}
B_{jl}=\oint_{\mathcal{B}_l}\d{\omega}_j(\mathcal{P}),\ j,l=1,2,\ldots,2n-1,
\end{gather}
and the integral paths $\mathcal{A}_l$ and $\mathcal{B}_l$ are shown in Figure 2, then the lattice $\Lambda$ in $\C^{2n-1}$ is generated by the linear combinations, with integer coefficients, of the vectors $\mathbf{e}_j$ and $B\mathbf{e}_j$, where $\mathbf{e}_j=(0,\ldots,0,1,0,\ldots,0)^{\top}$ is the j-th basis vector in $\C^{2n-1}$.
\begin{figure}[th]
\centering
\scalebox{.9}{
\begin{tikzpicture}[>=stealth]
\path (0,0) coordinate (O);

\coordinate (TL) at (-2,5);
\coordinate (TR) at (7,5);
\coordinate (BL) at (-4,3);
\coordinate (BR) at (5,3);
\coordinate (INF1) at (5.6,4.5);

\coordinate (shift) at (0,-2.3);
\coordinate (TL2) at ($  (TL) + (shift)  $);
\coordinate (TR2) at ($  (TR) + (shift)  $);
\coordinate (BL2) at ($  (BL) + (shift)  $);
\coordinate (BR2) at ($  (BR) + (shift)  $);
\coordinate (INF2) at ($ (INF1) + (shift) $);


\coordinate (eta1) at (2.2,4);
\coordinate (eta2) at (3.2,4);
\coordinate (eta3) at (4.2,4);
\coordinate (eta4) at (5.2,4);

\coordinate (-eta1) at (0.2,4);
\coordinate (-eta2) at (-0.8,4);
\coordinate (-eta3) at (-1.8,4);
\coordinate (-eta4) at (-2.8,4);
\draw (TL) -- (TR) -- (BR) -- (BL) -- cycle;
\node[ label={[label distance= -0.3cm, below, xshift= -0.1cm]$\times$}]  at (INF1) {$\infty^+$};
\draw (TL2) -- (TR2) -- (BR2) -- (BL2) -- cycle;
\node[ label={[label distance= -0.3cm, below, xshift= -0.1cm]$\times$}]  at (INF2) {$\infty^-$};

\draw[->-=0.5] (eta1) -- (eta2);
\draw[->-=0.5] (eta3) -- (eta4);
\draw[->-=0.5] (-eta2) -- (-eta1);
\draw[->-=0.5] (-eta4) -- (-eta3);

\draw[->-=0.5] ($(eta2)+(shift)$) -- ($ (eta1) +(shift) $);
\draw[->-=0.5] ($(eta4)+(shift)$) -- ($ (eta3) +(shift) $);
\draw[->-=0.5] ($(-eta1)+(shift)$) -- ($ (-eta2) +(shift) $);
\draw[->-=0.5] ($(-eta3)+(shift)$) -- ($ (-eta4) +(shift) $);

\coordinate (eta0) at (1.2,4);
\coordinate (eta01) at (1,4);
\coordinate (eta02) at (1.4,4);
\draw[fill] (eta0) circle [radius=0.02];
\draw[fill] (eta01) circle [radius=0.02];
\draw[fill] (eta02) circle [radius=0.02];
\coordinate (eta00) at (1.2,1.7);
\coordinate (eta001) at (1,1.7);
\coordinate (eta002) at (1.4,1.7);
\draw[fill] (eta00) circle [radius=0.02];
\draw[fill] (eta001) circle [radius=0.02];
\draw[fill] (eta002) circle [radius=0.02];
\foreach \pos/\label in {eta1/ia_j, eta2/ib_j, eta3/ia_n, eta4/i b_n, -eta1/-ia_{n-j},-eta2/-ib_{n-j},-eta3/-i a_{n}, -eta4/-i b_{n}}{
\node[circle,fill=black, inner sep=0pt,minimum size=3pt,label=below:{\tiny $\label$}] at  (\pos) {};
\node[circle,fill=black, inner sep=0pt,minimum size=3pt,label=below:{\tiny $\label$}] at  ($ (\pos)+(shift) $)  {};
}

\foreach \pos in {eta1,eta2,-eta1,-eta2}{
\draw[dashed, black!30] (\pos) -- ($ (\pos) + (shift) $);
}
\foreach \pos in {eta3,eta4,-eta3,-eta4}{
\draw[dashed, black!30] (\pos) -- ($ (\pos) + (shift) $);
}
\draw[->- = .25, red] ($ 0.5*(-eta4)+0.5*(-eta3)  $) .. controls + (100:1cm) and + (100:1cm) .. ($ 0.5*(eta4)+0.5*(eta3) $);
\draw[->- = .25, red] ($ 0.5*(-eta4)+0.5*(-eta3)  $) .. controls + (100:.7cm) and + (100:.7cm) .. ($ 0.5*(eta2)+0.5*(eta1) $);
\draw[->- = .25, red] ($ 0.5*(-eta4)+0.5*(-eta3)  $) .. controls + (100:.4cm) and + (100:.4cm) .. ($ 0.5*(-eta1)+0.5*(-eta2) $);
\draw[->- = .25, red]  ($ 0.5*(-eta1)+0.5*(-eta2)  + (shift) $) .. controls + (-100:0.4cm) and + (-100:.4cm) .. ($ 0.5*(-eta4)+0.5*(-eta3) +(shift) $);
\draw[->- = .25, red]  ($ 0.5*(eta1)+0.5*(eta2)  + (shift) $) .. controls + (-100:0.7cm) and + (-100:.7cm) .. ($ 0.5*(-eta4)+0.5*(-eta3) +(shift) $);
\draw[->- = .25, red]  ($ 0.5*(eta3)+0.5*(eta4)  + (shift) $) .. controls + (-100:1cm) and + (-100:1cm) .. ($ 0.5*(-eta4)+0.5*(-eta3) +(shift) $);

\draw[red!30, dashed] ($ 0.5*(-eta1)+0.5*(-eta2)  $) -- ++ (shift);
\draw[red!30, dashed] ($ 0.5*(-eta3)+0.5*(-eta4)  $) -- ++ (shift);
\draw[red!30, dashed] ($ 0.5*(eta1)+0.5*(eta2)  $) -- ++ (shift);
\draw[red!30, dashed] ($ 0.5*(eta3)+0.5*(eta4)  $) -- ++ (shift);

\node[above, red] at (-1.3,3.8) {\small $\scriptstyle{\mathcal B}_{j}$};
\node[above, red] at (1.2,4) {\small $\scriptstyle{\mathcal B}_{n-1+j}$};
\node[above, red] at (3.7,4) {\small $\scriptstyle{\mathcal B}_{2n-1}$};
\draw[->- = .25, blue] ( $0.5*(eta1)+0.5*(eta2) $) ellipse (0.8cm and .4cm);
\draw[->- = .25, blue] ( $0.5*(eta3)+0.5*(eta4) $) ellipse (0.8cm and .4cm);
\draw[->- = .25, blue] ( $0.5*(-eta1)+0.5*(-eta2) $) ellipse (0.8cm and .4cm);
\node[blue, above] at (4.9,3.9) {\small $\scriptstyle\mathcal A_{2n-1}$};
\node[blue, above] at (2.9,3.9) {\small $\scriptstyle\mathcal A_{n-1+j}$};
\node[blue, above] at (0.1,3.9) {\small $\scriptstyle\mathcal A_{j}$};
\end{tikzpicture}
}
\caption{The homology basis for the Riemann surface
$\mathfrak{X}$ associated with
$R(k) = \sqrt{ \prod\limits_{j=1}^{n}(k^2+a_{j}^2)(k^2 +b_{j}^2)}$.
}
\label{fig:homology}
\end{figure}

Next, we define the Abel mapping $A_{j}(\mathcal{P}): \mathfrak{X}\rightarrow\C^{2n-1}/\Lambda$ by
\begin{gather}\label{A}
A_{j}(\mathcal{P})=\int_{\mathcal{P}_0}^{\mathcal{P}}\d \omega_j(\mathcal{\tilde{P}}),\ j=1,2,\ldots,2n-1,
\end{gather}
where $\mathcal{\tilde{P}}$ is the integration variable, and the fixed point $\mathcal{P}_0$ satisfies $\tilde{\pi}(\mathcal{P}_0)=-ia_n$, $k=\tilde{\pi}(\mathcal{P})$ is the standard projection of $\mathcal{P}=(k,\eta)\in \mathfrak{X}$ on the Riemann sphere $ \mathbb{CP}^1$. The Abelian integral $\mathbf{A}(k)=(A_1(k), A_2(k), \ldots, A_{2n-1}(k))^{\top}$ is considered in the upper sheet of $\mathfrak{X}$, we observe that
\begin{equation}\label{Ajump}
\begin{array}{l}
\begin{aligned}
&\mathbf{A} _+(k)-\mathbf{A} _-(k)=0 \ (mod\  \Z^{2n-1}), &k\in \bigcup\limits_{j=1}^{n-1}(i(-a_{j+1},-b_{j})\cup i(b_{j},a_{j+1}))\cup i(-a_1,a_1),\\
&\mathbf{A} _+(k)+\mathbf{A} _-(k)=0\ (mod\  \Z^{2n-1}), &k\in \Sigma_{-n},\\
&\mathbf{A} _+(k)+\mathbf{A} _-(k)=B \mathbf{e_j}, &k\in \Sigma_{-(n-j)},\ j=1,2,\ldots,n-1,\\
&\mathbf{A} _+(k)+\mathbf{A} _-(k)=B \mathbf{e_{n-1+j}}, &k\in \Sigma_{j},\ j=1,2,\ldots,n.
\end{aligned}
\end{array}
\end{equation}
Next, we define the additional $2n-1$ Abel integrals as follows:
\begin{align}
\varsigma_j(k)=\int_{-ia_n}^{k}\varphi_j(\zeta)\d \zeta, \ \varphi_j(\zeta)=\frac{\sum\limits_{l=1}^{4n-1}s_{jl}\zeta^{4n-1-l}}{R(\zeta)},\ j=1,2,\ldots,2n-1,
\end{align}
where $s_{jl}$ are determined by
\begin{align}\label{varinfy}
&\varsigma_j(k)\to k^{j}+\mathcal{O}(1),\ \ \text{as}\  k \to \infty^+,\ j=1,2,\ldots,2n-1,\\
&\oint_{\mathcal{A}_{l}}\varphi_j(\zeta)\d  \zeta=0,\ j,l=1,2,\ldots,2n-1.
\end{align}
Then we set
\begin{align}\label{UVW}
U_{j,l}=\oint_{\mathcal{B}_{l}}\varphi_{j}(\zeta)\d  \zeta,\ j,l=1,2,\ldots,2n-1,
\end{align}
and
\begin{align}\label{J123}
J_{j}=\lim\limits_{k\to \infty}\varsigma_{j}(k)-k^{j},\ j=1,2,\ldots,2n-1.
\end{align}
We obtain that
\begin{align}\label{varsigmajump}
&\varsigma_{j+}(k)+\varsigma_{j-}(k)=U_{j,l},\ k\in \Sigma_{-(n-l)},\ l=1,2,\ldots,n-1,\ j=1,2,\ldots, 2n-1,\\
&\varsigma_{j+}(k)+\varsigma_{j-}(k)=U_{j,n-1+l},\ k\in \Sigma_{l},\ l=1,2,\ldots,n,\ j=1,2,\ldots, 2n-1.
\end{align}
Next, we introduce the Riemann-theta function
\begin{align}
\Theta(\mathbf{v})=\sum\limits_{\mathbf{w}\in \Z^{2n-1}}e^{\pi i(B\mathbf{w},\mathbf{w})+2\pi i (\mathbf{w},\mathbf{v})}
\end{align}
where $\mathbf{v}=(v_1,v_2,\ldots,v_{2n-1})^{\top}$, $\mathbf{w}=(w_1,w_2,\ldots,w_{2n-1})^{\top}$ and $(\mathbf{w},\mathbf{v})=w_1 v_1+w_2 v_2+\cdots+w_{2n-1} v_{2n-1}$. The Riemann-theta function satisfies the periodicity relations
\begin{align}\label{Theta}
\Theta(\mathbf{v}+\mathbf{e}_j)=\Theta(\mathbf{v}),\ \ \Theta(\mathbf{v}+B\mathbf{e}_j)=e^{-\pi iB_{jj}- 2\pi iv_j}\Theta(\mathbf{v}),\ \ j=1,2,\ldots, 2n-1.
\end{align}
Finally, we define a function $\gamma(k)$ by
\begin{align}
\gamma(k)=\le(\prod\limits_{j=1}^{n}\frac{(k-ib_{j})(k+i a_{j})}{(k-ia_{j})(k+i b_{j})} \ri)^{\frac{1}{4}},
\end{align}
which is analytic in $\C\backslash(\bigcup\limits_{j=1}^{n} (\Sigma_{j}\cup\Sigma_{-j}))$ and normalized such that $\gamma(k)\to 1$ as $k\to \infty^+$, $\gamma(k)$ satisfies the following jump condition
\begin{align}\label{gamma}
\gamma_+(k)=i\gamma_-(k),\ k\in \bigcup\limits_{j=1}^{n} (\Sigma_{j}\cup\Sigma_{-j}) .
\end{align}
Then the function $\gamma(k)-\frac{1}{\gamma(k)}$ has $2n$ zeros on $\mathfrak{X}$ counting multiplicity. We use $\infty^{+}$, $\mathcal{P}_1$, $\mathcal{P}_2$, $\ldots$, $\mathcal{P}_{2n-1}$ represents these $2n$  zeros. Let $D$ denote the divisor $D:=\mathcal{P}_1+\mathcal{P}_2+\cdots+\mathcal{P}_{2n-1}$, the constant vector $\mathbf{d}$ is defined by
\begin{align}\label{d}
\mathbf{d}=\mathbf{A}(D)+\mathbf{K},
\end{align}
where  $\mathbf{A}(D)=\sum\limits_{j=1}^{2n-1}\mathbf{A}(\mathcal{P}_j)$ and $\mathbf{K}$ is the Riemann-theta constant vector defined by
\begin{align}
K_{j}=\frac{1}{2}\sum_{l=1}^{2n-1}B_{lj}-\frac{j}{2},\ j=1,2,\ldots,2n-1.
\end{align}
Here we have utilized the strategy in \cite{DeiftItsZhou,KVSD} such that the zeros of $\Theta(\mathbf{A}(k)+\mathbf{d})$ and $\Theta(-\mathbf{A}(k)+\mathbf{d})$ can be eliminated by the zeros of $\gamma(k)+\frac{1}{\gamma(k)}$ and $\gamma(k)-\frac{1}{\gamma(k)}$.
\begin{proposition}
The matrix solution $Q(k)$ of RH problem \eqref{Q1}--\eqref{Q2} is given by following formula
\begin{equation}\label{3.62}
\begin{array}{l}
Q_{11}(x;k)=\frac{1}{2}(\gamma(k)+\frac{1}{\gamma(k)})\frac{\Theta(\mathbf{A}(\infty)+\mathbf{d})}{\Theta(\mathbf{A}(k)+\mathbf{d})} \frac{\Theta(\mathbf{A}(k)+\mathbf{d}-\frac{1}{2 \pi i}\sum\limits_{j=1}^{2n-1}\mathbf{U}_{j}h_{j})}{\Theta(\mathbf{A}(\infty)+\mathbf{d}-\frac{1}{2 \pi i}\sum\limits_{j=1}^{2n-1}\mathbf{U}_{j}h_{j})} \exp(-\sum\limits_{j=1}^{2n-1}(\varsigma_j-J_j)h_j),\\
Q_{12}(x;k)=\frac{1}{2}(\gamma(k)-\frac{1}{\gamma(k)})\frac{\Theta(\mathbf{A}(\infty)+\mathbf{d})}{\Theta(\mathbf{A}(k)-\mathbf{d})} \frac{\Theta(\mathbf{A}(k)-\mathbf{d}+\frac{1}{2 \pi i}\sum\limits_{j=1}^{2n-1}\mathbf{U}_{j}h_{j})}{\Theta(\mathbf{A}(\infty)+\mathbf{d}-\frac{1}{2 \pi i}\sum\limits_{j=1}^{2n-1}\mathbf{U}_{j}h_{j})} \exp(\sum\limits_{j=1}^{2n-1}(\varsigma_j+J_j)h_j),\\
Q_{21}(x;k)=\frac{1}{2}(\gamma(k)-\frac{1}{\gamma(k)})\frac{\Theta(\mathbf{A}(\infty)+\mathbf{d})}{\Theta(\mathbf{A}(k)-\mathbf{d})} \frac{\Theta(\mathbf{A}(k)-\mathbf{d}-\frac{1}{2 \pi i}\sum\limits_{j=1}^{2n-1}\mathbf{U}_{j}h_{j})}{\Theta(\mathbf{A}(\infty)+\mathbf{d}+\frac{1}{2 \pi i}\sum\limits_{j=1}^{2n-1}\mathbf{U}_{j}h_{j})} \exp(-\sum\limits_{j=1}^{2n-1}(\varsigma_j+J_j)h_j),\\
Q_{22}(x;k)=\frac{1}{2}(\gamma(k)+\frac{1}{\gamma(k)})\frac{\Theta(\mathbf{A}(\infty)+\mathbf{d})}{\Theta(\mathbf{A}(k)+\mathbf{d})} \frac{\Theta(\mathbf{A}(k)+\mathbf{d}+\frac{1}{2 \pi i}\sum\limits_{j=1}^{2n-1}\mathbf{U}_{j}h_{j})}{\Theta(\mathbf{A}(\infty)+\mathbf{d}+\frac{1}{2 \pi i}\sum\limits_{j=1}^{2n-1}\mathbf{U}_{j}h_{j})} \exp(\sum\limits_{j=1}^{2n-1}(\varsigma_j-J_j)h_j),
\end{array}
\end{equation}
where $\mathbf{U}_{j}=(U_{j,1},U_{j,2},\ldots,U_{j,2n-1})^{\top}\in \C^{2n-1}$.
\end{proposition}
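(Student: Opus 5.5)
The plan is to verify directly that the four entries in \eqref{3.62} define a function which is analytic off $\bigcup_{j}(\Sigma_j\cup\Sigma_{-j})$, satisfies the jump \eqref{Q1}, and has the behaviour \eqref{Q2} at infinity. Uniqueness then follows in the standard way: $\det Q\equiv 1$, because $\det Q$ has no jump and is bounded at infinity after the scalar exponential normalisation. The formula is built as a product of three factors.

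\textbf{Step 1 (factorisation).} Write each entry as a product of:
\begin{itemize}
\item the $\gamma$-factor $\frac12(\gamma\pm\gamma^{-1})$;
\item a ratio of theta functions;
\item the scalar exponential $\exp(\mp\sum_j(\varsigma_j\mp J_j)h_j)$.
\end{itemize}

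\textbf{Step 2 (jumps on the cuts).} On a cut $\Sigma_l$ or $\Sigma_{-l}$ the factors behave as follows.
\begin{itemize}
\item By \eqref{gamma}, $\gamma_+=i\gamma_-$, so $(\gamma+\gamma^{-1})_+=i(\gamma-\gamma^{-1})_-$ and $(\gamma-\gamma^{-1})_+=i(\gamma+\gamma^{-1})_-$. This interchanges the diagonal and off-diagonal $\gamma$-factors and produces the factor $i$ of the jump matrix $\left(\begin{smallmatrix}0&i\\ i&0\end{smallmatrix}\right)$.
\item By \eqref{Ajump}, $\mathbf A_+=-\mathbf A_-+B\mathbf e_m$, with $m$ the index attached to the cut, and $\mathbf A_+=-\mathbf A_-$ on $\Sigma_{-n}$. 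Using the evenness of $\Theta$ and the quasi-periodicity \eqref{Theta}, we get $\Theta(\mathbf A_++\mathbf d)=e^{-\pi iB_{mm}+2\pi i(A_{m-}-d_m)}\Theta(\mathbf A_--\mathbf d)$. So the theta ratio in $Q_{11,+}$ becomes that of $Q_{12,-}$ up to explicit exponentials. In these exponentials the $\mathbf A$- and $\mathbf d$-dependent parts cancel between numerator and denominator, leaving only $\exp\big(\pm\sum_j U_{j,m}h_j\big)$.
\item That leftover factor is cancelled by the exponential factor, since $\varsigma_{j+}+\varsigma_{j-}=U_{j,m}$ by \eqref{varsigmajump}. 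Hence $e^{-\sum(\varsigma_j-J_j)h_j}$ on the $+$ side equals $e^{\sum(\varsigma_j+J_j)h_j}$ on the $-$ side times $e^{-\sum U_{j,m}h_j}$.
\end{itemize}
Checking these exponent signs column by column gives $Q_{+,11}=iQ_{-,12}$ and so on, which is \eqref{Q1}.

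\textbf{Step 3 (no jumps on the gaps).} On the gaps $\mathbf A$ jumps by integers, and $\Theta$ is $\mathbb Z^{2n-1}$-periodic. The $\varsigma_j$ have zero $\mathcal A$-periods, so they have no jump there. The $\gamma$-factors are analytic on the gaps. Hence nothing jumps across the gaps.

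\textbf{Step 4 (analyticity, the main obstacle).} The denominators $\Theta(\mathbf A(k)\pm\mathbf d)$ may vanish, and their zeros must be cancelled. By Riemann's vanishing theorem with $\mathbf d=\mathbf A(D)+\mathbf K$, the function $\Theta(\mathbf A(\mathcal P)-\mathbf d)$ on $\mathfrak X$ (cut along the $\mathcal B$ cycles) is either identically zero or vanishes exactly at the $2n-1$ points of $D$. I would first show it is not identically zero; to do so I would argue that $D$ is non-special, because the points $\mathcal P_j$ lie one in each gap and on distinct real ovals. Then:
\begin{itemize}
\item on the first sheet, the zeros of $\Theta(\mathbf A(k)-\mathbf d)$ are the zeros of $\gamma-\gamma^{-1}$ other than $\infty^+$;
\item the zeros of $\Theta(\mathbf A(k)+\mathbf d)$ correspond to the points of $D$ on the second sheet, since $\mathbf A$ is odd under the sheet involution; these are exactly the zeros of $\gamma+\gamma^{-1}$ on the first sheet.
\end{itemize}
Thus the simple zeros cancel, following \cite{DeiftItsZhou,KVSD}. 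The theta functions in the numerators with the shift $\frac1{2\pi i}\sum\mathbf U_jh_j$ are harmless. At the branch points, $\gamma^{\pm1}$ has at most $|k\mp ia|^{-1/4}$ singularities, which gives the admissible behaviour.

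\textbf{Step 5 (behaviour at infinity).} As $k\to\infty^+$ we have $\gamma\to1$, so $\gamma-\gamma^{-1}\to0$. The diagonal theta ratios tend to $1$ by the choice of normalising constants. By \eqref{varinfy} and \eqref{J123}, $\varsigma_j-J_j=k^j+o(1)$. Hence $Q_{11}\sim e^{-\sum h_jk^j}$ and $Q_{12}=o(e^{\sum h_jk^j})$, which gives \eqref{Q2}.

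I expect the delicate point to be the sign bookkeeping in Step 2 together with the non-vanishing claim in Step 4.
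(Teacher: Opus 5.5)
Your proposal is correct and follows the paper's own proof. The paper likewise checks $Q_{11+}=iQ_{12-}$ and the analogous relations on the cuts using \eqref{Ajump}, \eqref{varsigmajump}, \eqref{Theta} and \eqref{gamma}, reads off \eqref{Q2} from $\gamma\to1$, \eqref{varinfy} and \eqref{J123}, and leaves the cancellation of the zeros of $\Theta(\mathbf{A}(k)\pm\mathbf{d})$ against those of $\gamma\pm\gamma^{-1}$ to \cite{DeiftItsZhou,KVSD}; your Step 4 spells this out correctly, since the $\mathcal{P}_j$ project to $2n-1$ distinct gap points and so $D$ is non-special.

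The one point left implicit both in your argument and in the paper is that the $k$-independent constants $\Theta(\mathbf{A}(\infty)+\mathbf{d}\mp\frac{1}{2\pi i}\sum_{j}\mathbf{U}_jh_j)$ must be nonzero for the formula to make sense, and this needs its own justification.
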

\begin{proof}
Using the relation \eqref{Ajump}, \eqref{varsigmajump} and \eqref{Theta}, we obtain
\begin{equation*}
\begin{array}{l}
\frac{\Theta(\mathbf{A}(\infty)+\mathbf{d})}{\Theta(\mathbf{A}_{+}(k)+\mathbf{d})} \frac{\Theta(\mathbf{A}_{+}(k)+\mathbf{d}-\frac{1}{2 \pi i}\sum\limits_{j=1}^{2n-1}\mathbf{U}_{j}h_{j})}{\Theta(\mathbf{A}(\infty)+\mathbf{d}-\frac{1}{2 \pi i}\sum\limits_{j=1}^{2n-1}\mathbf{U}_{j}h_{j})} \exp(-\sum\limits_{j=1}^{2n-1}(\varsigma_{j+}-J_j)h_j)\\
=\frac{\Theta(\mathbf{A}(\infty)+\mathbf{d})}{\Theta(-\mathbf{A}_{-}(k)+B\mathbf{e}_l+\mathbf{d})} \frac{\Theta(-\mathbf{A}_{-}(k)+B\mathbf{e}_l+\mathbf{d}-\frac{1}{2 \pi i}\sum\limits_{j=1}^{2n-1}\mathbf{U}_{j}h_{j})}{\Theta(\mathbf{A}(\infty)+\mathbf{d}-\frac{1}{2 \pi i}\sum\limits_{j=1}^{2n-1}\mathbf{U}_{j}h_{j})} \exp(-\sum\limits_{j=1}^{2n-1}(-\varsigma_{j-}+U_{j,l}-J_j)h_j)\\
=\frac{\Theta(\mathbf{A}(\infty)+\mathbf{d})}{\Theta(-\mathbf{A}_{-}(k)+\mathbf{d})} \frac{\Theta(-\mathbf{A}_{-}(k)+\mathbf{d}-\frac{1}{2 \pi i}\sum\limits_{j=1}^{2n-1}\mathbf{U}_{j}h_{j})}{\Theta(\mathbf{A}(\infty)+\mathbf{d}-\frac{1}{2 \pi i}\sum\limits_{j=1}^{2n-1}\mathbf{U}_{j}h_{j})} \exp(-\sum\limits_{j=1}^{2n-1}(-\varsigma_{j-}-J_j)h_j)\\
=\frac{\Theta(\mathbf{A}(\infty)+\mathbf{d})}{\Theta(\mathbf{A}_{-}(k)-\mathbf{d})} \frac{\Theta(\mathbf{A}_{-}(k)-\mathbf{d}+\frac{1}{2 \pi i}\sum\limits_{j=1}^{2n-1}\mathbf{U}_{j}h_{j})}{\Theta(\mathbf{A}(\infty)+\mathbf{d}-\frac{1}{2 \pi i}\sum\limits_{j=1}^{2n-1}\mathbf{U}_{j}h_{j})} \exp(\sum\limits_{j=1}^{2n-1}(\varsigma_{j-}+J_j)h_j),
\end{array}
\end{equation*}
for $k\in \Sigma_{-(n-l)},\ l=1,2,\ldots,n-1$. Noting \eqref{gamma}, we have $Q_{11+}(k)=i Q_{12-}(k)$ for $k\in \Sigma_{-(n-l)}$. Similarly, for $k\in \bigcup\limits_{j=1}^{n}(\Sigma_{j}\cup \Sigma_{-j})$, we have $Q_{11+}(k)=i Q_{12-}(k)$, $Q_{12+}(k)=i Q_{11-}(k)$, $Q_{21+}(k)=i Q_{22-}(k)$ and $Q_{22+}(k)=iQ_{21-}(k)$.
Using the fact  $\gamma(k)\to 1$ as $k\to \infty$, \eqref{varinfy} and \eqref{J123}, we obtain
$$Q(k)\to \begin{pmatrix}e^{-\sum\limits_{j=1}^{2n-1}h_j k^j}&0\\0&e^{\sum\limits_{j=1}^{2n-1}h_j k^j}\end{pmatrix}, \ \ k\to \infty,$$ namely, the condition \eqref{Q2} is satisfied.
\end{proof}

Finally, the matrix solution $S^{\infty}(k)$ of RH problem \ref{RHP4} is given by
\begin{align}\label{3.63}
S^{\infty}(k)=e^{-h_0\sigma_3} Q(k) e^{h(k)\sigma_3}.
\end{align}
Next, we show the construction of a local parametrix at $k=ib_n$ with the help of the modified Bessel functions. Introduce a local variable $\mu=\mu(x;k)$ in a disk $B_{\rho}^{(ib_n)}$ as follows:
\begin{gather}
\sqrt{\mu}=\frac12[ix(g(k)-k)],\ k\in B_{\rho}^{(ib_n)}=\le\{ k \in \mathbb{C} \le| \,  \le|k - ib_{n}\ri|< \rho \ri.  \ri\},
\end{gather}
with branch cut for $\sqrt{\mu}$ is $\mu\in (-\infty,0]$, which corresponds to $k\in i(a_n, b_n)$. Then we introduce the model parametrix $\Psi_{Bes}(\mu)$ as in \cite{Girotti-1}, which satisfies the following jump relation
\begin{align}
\Psi_{Bes+}(\mu)=\Psi_{Bes-}(\mu)\begin{cases}
\begin{pmatrix}
1&0\\1&1
\end{pmatrix},\ \  \arg \mu=\pm \frac{2\pi}{3} ,\\
\begin{pmatrix}
0&1\\-1&0
\end{pmatrix},\ \ \arg \mu=\pi,
\end{cases}
\end{align}
with asymptotics at infinity
\begin{align}
\Psi_{Bes}(\mu)=(2\pi \mu^{\frac12})^{-\frac12 \sigma_3}\frac{1}{\sqrt{2}}\begin{pmatrix}1&i\\i&1 \end{pmatrix}\le(  I +\mathcal{O}(\frac{1}{\mu^{\frac12}})\ri)e^{2\mu^{\frac12}\sigma_3}.
\end{align}
Then we consider
\begin{equation}\label{Eq12}
\begin{array}{l}
P^{(1)}(k) =  S(k) \le(\frac{ e^{  -i \pi  /4}}{\sqrt{2\hat{r}_n(k)} f(k)}
\right)^{\sigma_{3}},\quad k \in B^{(ib_n)}_{\rho},
\end{array}
\end{equation}
and
\begin{equation}\label{Eq13}
\begin{array}{l}
P^{(2)}(\mu) = P^{(1)}(k(\mu)) e^{-2 \mu^{\frac{1}{2}} \sigma_3} \begin{pmatrix}0&1\\1&0 \end{pmatrix}\ , \quad \mu \in \C.
\end{array}
\end{equation}
It is easy to check that $P^{(2)}(\mu)$ has the same jump as $\Psi_{Bes}(\mu)$, hence the local parametrix around the point $ib_n$ is
\begin{gather}
P^{ib_n}(k)=D(k)\Psi_\Bes(\mu(k)) \begin{pmatrix} 0&1\\1&0\end{pmatrix} e^{2\mu(k)^{\frac{1}{2}}\sigma_3} \left(
\frac{ e^{ -i \pi  /4}}{\sqrt{2 \hat{r}_n(k)} f(k)}
\right)^{-\sigma_{3}},\ k \in B^{(ib_n)}_{\rho},
\end{gather}
where
\begin{gather}
D(k)
=S^{\infty}(k)
\left(
\frac{ e^{ -i \pi  /4}}{\sqrt{2\hat{r}_n(k)} f(k)}
\right)^{\sigma_{3}}
\frac{1}{\sqrt{2}} \begin{pmatrix}-i& 1 \\ 1  &  -i\end{pmatrix} \le(2\pi \mu^{\frac{1}{2}}\ri)^{\frac{1}{2}\sigma_3}, \quad  k \in B^{(ib_n)}_{\rho}.
\end{gather}
Therefore, we have the following matching condition on the boundary $\partial B^{(ib_n)}_{\rho} $:
\begin{gather}
P^{ib_n}(k)(S^{\infty}(k))^{-1}= I +\mathcal{O}(\frac{1}{x}),\  \text{as}\ x\to +\infty.
\end{gather}
The local parametrix of other endpoints can be constructed similarly.
 We define the following error function $ {\mathcal{E}}(k)$:
 \begin{gather}
	 {\mathcal{E}}(k) =  {S}(k) \left( {P}(k) \right)^{-1},
\text{where}\\
	 {P}(k) = \begin{cases}
		 {S}^{\infty}(k),  & k\in\mathbb{C}\backslash (\bigcup\limits_{j=1}^{n}B^{(\pm i a_j)}_{\rho}\cup \bigcup\limits_{j=1}^{n}B^{(\pm i b_j)}_{\rho}), \\
         {P}^{ia_j}(k), & k\in B^{(ia_j)}_{\rho},\ j=1,2,\ldots,n, \\
         {P}^{ib_j}(k), & k\in B^{(ib_j)}_{\rho},\ j=1,2,\ldots,n, \\
         {P}^{-ia_j}(k), & k\in B^{(-ia_j)}_{\rho},\ j=1,2,\ldots,n, \\
         {P}^{-ib_j}(k), & k\in B^{(-ib_j)}_{\rho},\ j=1,2,\ldots,n.
	\end{cases}
\end{gather}

For some $c>0$, the matrix $ {\mathcal{E}}(x;k)$ satisfies
 \begin{gather}
 \begin{array}{l}
  {\mathcal{E}}_+(k) = \begin{cases}
  {\mathcal{E}}_-(k) (  I  + \mathcal{O}( \displaystyle { e^{-cx} }) ),\  k\in \bigcup\limits_{j=1}^{n}({\mathcal{C}}_j\cup {\mathcal{C}}_{-j})\backslash (\bigcup\limits_{j=1}^{n}(B^{(\pm i a_j)}_{\rho}\cup B^{(\pm i b_j)}_{\rho})), \\
  {\mathcal{E}}_-(k) (  I  + \mathcal{O}(\displaystyle x^{-1} ) ),\ k\in \bigcup\limits_{j=1}^{n}(\partial B^{(\pm i a_j)}_{\rho}\cup \partial B^{(\pm i b_j)}_{\rho}),
 \end{cases}
 \end{array}
 \end{gather}
 and
 \begin{gather}  {\mathcal{E}}(k) =  I  + \mathcal{O}(\frac{1}{k}), \qquad \text{as }k \to \infty \ .
 \end{gather}
Therefore, by a small norm argument, we conclude that
\begin{gather}
\mathcal{E}(x;k)= I +\mathcal{O}(\frac{1}{x}),
\end{gather}
uniformly in $k$ as $x\to +\infty$. Taking into account all the transformations we performed, we find that
\begin{gather}\label{3.77}
u(x,0)=\lim\limits_{k\to\infty}2i k X_{12}(x;k)=\lim\limits_{k\to\infty}2i k S^{\infty}_{12}(x;k)+\mathcal{O}(\frac{1}{x}).
\end{gather}
Substituting \eqref{3.62} and \eqref{3.63} into \eqref{3.77}, we obtain the following theorem.
\begin{thm}
In the regime $x \to +\infty$, $t=0$, the mKdV soliton gas solution $u(x,0)$ has the following asymptotic behaviour
\begin{gather}
\begin{array}{l}
u(x,0)=\alpha_1\frac{\Theta(\mathbf{A}(\infty)-\mathbf{d}+\frac{1}{2\pi i}\sum\limits_{j=1}^{2n-1}\mathbf{U}_{j}h_{j})}{\Theta(\mathbf{A}(\infty)+\mathbf{d}-\frac{1}{2\pi i}\sum\limits_{j=1}^{2n-1}\mathbf{U}_{j}h_{j})} \frac{\Theta(\mathbf{A}(\infty)+\mathbf{d})}{\Theta(\mathbf{A}(\infty)-\mathbf{d})} \exp(2\sum\limits_{j=1}^{2n-1}J_jh_j-2h_0)+\mathcal{O}(\frac{1}{x}),
\end{array}
\end{gather}
where $\mathbf{A}$, $\mathbf{d}$, $\mathbf{U}_{j}$, $h_j$ and $J_j$  are defined by \eqref{A}, \eqref{d}, \eqref{UVW}, \eqref{h0123}, \eqref{J123}, respectively, and $\alpha_1=\sum\limits_{j=1}^{n}(b_{j}-a_{j}).$
\end{thm}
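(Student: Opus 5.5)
The plan is to read the theorem off from \eqref{3.77}, using the explicit formula \eqref{3.63} $S^{\infty}(k)=e^{-h_0\sigma_3}Q(k)e^{h(k)\sigma_3}$ together with the entries of $Q$ in \eqref{3.62}. The error term $\mathcal{O}(1/x)$ is already justified by the small-norm estimate $\mathcal{E}=I+\mathcal{O}(1/x)$. Indeed, $X=\mathcal{E}\,P\,f^{-\sigma_3}e^{-ixg\sigma_3}$ for large $k$. Since $g=\mathcal{O}(1/k)$ and $f=1+\mathcal{O}(1/k)$, both $f^{-\sigma_3}$ and $e^{-ixg\sigma_3}$ are diagonal, so they leave the $(1,2)$ entry's $1/k$ coefficient unaffected to leading order. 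Hence
\[
\lim_{k\to\infty}kX_{12}=\lim_{k\to\infty}kS^\infty_{12}+\mathcal{O}(1/x),
\]
where the $\mathcal{O}(1/x)$ bound on the residue of $\mathcal{E}$ comes from its Cauchy integral representation. So the real work is computing $\lim_{k\to\infty}2ik\,S^\infty_{12}(k)$.

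First, $S^\infty_{12}(k)=e^{-h_0}Q_{12}(k)e^{-h(k)}$. As $k\to\infty$ we have $h(k)=\sum_{j=1}^{2n-1}h_jk^j+h_0+\mathcal{O}(1/k)$. From \eqref{varinfy} and \eqref{J123}, $\varsigma_j(k)=k^j+J_j+o(1)$. The exponential in $Q_{12}$ is $\exp\bigl(\sum_j(\varsigma_j+J_j)h_j\bigr)$; combined with $e^{-h(k)}$, the polynomial growth cancels. What remains is
\[
\exp\Bigl(2\sum_{j=1}^{2n-1}J_jh_j-h_0\Bigr),
\]
and together with the prefactor $e^{-h_0}$ this gives the factor $\exp(2\sum_j J_jh_j-2h_0)$ in the statement. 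The theta quotients in $Q_{12}$ are evaluated at $\mathbf{A}(\infty)$ by continuity. This produces exactly
\[
\frac{\Theta(\mathbf{A}(\infty)-\mathbf{d}+\frac{1}{2\pi i}\sum_j\mathbf{U}_jh_j)}{\Theta(\mathbf{A}(\infty)+\mathbf{d}-\frac{1}{2\pi i}\sum_j\mathbf{U}_jh_j)}\cdot\frac{\Theta(\mathbf{A}(\infty)+\mathbf{d})}{\Theta(\mathbf{A}(\infty)-\mathbf{d})}.
\]
One should double-check that the $\mathcal{O}(1/k)$ corrections in $\varsigma_j$ and $h$ do not contribute at leading order. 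They do not, because they multiply a quantity that already decays like $1/k$.

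The remaining ingredient, and the only genuinely new computation, is the $1/k$ coefficient of $\frac12(\gamma-\gamma^{-1})$. Write
\[
\log\gamma(k)=\frac14\sum_{j=1}^n\Bigl[\log\Bigl(1-\tfrac{ib_j}{k}\Bigr)+\log\Bigl(1+\tfrac{ia_j}{k}\Bigr)-\log\Bigl(1-\tfrac{ia_j}{k}\Bigr)-\log\Bigl(1+\tfrac{ib_j}{k}\Bigr)\Bigr].
\]
Its expansion is
\[
\log\gamma(k)=\frac{1}{4k}\sum_j\bigl(-2ib_j+2ia_j\bigr)+\mathcal{O}(k^{-2})=-\frac{i}{2k}\sum_j(b_j-a_j)+\mathcal{O}(k^{-2}).
\]
Then $\frac12(\gamma-\gamma^{-1})=\log\gamma+\mathcal{O}(k^{-3})=-\frac{i\alpha_1}{2k}+\cdots$. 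Multiplying by $2ik$ gives $\alpha_1$, which is the constant in the theorem.

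The main obstacle is not this algebra but the legitimacy of the evaluation. The ratio $\Theta(\mathbf{A}(\infty)+\mathbf{d})/\Theta(\mathbf{A}(k)-\mathbf{d})$ must have no pole at $\infty^+$, which requires $\Theta(\mathbf{A}(\infty)-\mathbf{d})\neq0$. I would handle this by invoking the divisor choice \eqref{d}: by Riemann's vanishing theorem, $\Theta(\mathbf{A}(\mathcal{P})-\mathbf{d})$ vanishes exactly at the zeros $\infty^+,\mathcal{P}_1,\ldots$ of $\gamma-\gamma^{-1}$, so the product is regular there. In that degenerate reading, the leading coefficient should be obtained as a derivative quotient rather than by naive substitution. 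I would first verify nondegeneracy using the symmetry $k\mapsto -\bar k$ of the surface, which places the zeros $\mathcal{P}_j$ in the gaps and keeps $\mathbf{d}$ off the theta divisor relative to $\infty^+$. Following \cite{DeiftItsZhou,KVSD}, I would then adopt the formula as stated.
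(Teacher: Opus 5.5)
Your route is the paper's own: there the theorem is obtained simply by substituting \eqref{3.62} and \eqref{3.63} into \eqref{3.77}. Your handling of the $(1,2)$ entry, the cancellation producing $\exp\bigl(2\sum_j J_jh_j-2h_0\bigr)$, and the coefficient $\tfrac12(\gamma-\gamma^{-1})=-\tfrac{i\alpha_1}{2k}+\mathcal{O}(k^{-2})$ correctly supply the details the paper omits.

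Your last paragraph needs correcting, because its central claim is false: $\Theta(\mathbf{A}(\mathcal{P})-\mathbf{d})$ does not vanish at $\infty^+$.
\begin{itemize}
\item Since $\mathbf{d}=\mathbf{A}(D)+\mathbf{K}$ with $D=\mathcal{P}_1+\cdots+\mathcal{P}_{2n-1}$, Riemann's theorem says that this theta function, if not identically zero, has exactly $2n-1$ zeros (the genus). These are the points of $D$, and $\infty^+$ was deliberately left out of $D$.
\item The zero of $\gamma-\gamma^{-1}$ at $\infty^+$ is meant to survive. It is what makes $Q_{12}=\mathcal{O}(1/k)$, as \eqref{Q2} requires.
\item Under your ``degenerate reading'', $Q_{12}$ would tend to a nonzero constant and the normalization would fail. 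That reading is therefore impossible, no derivative quotient arises, and the direct substitution you actually carried out is the right one.
\end{itemize}

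A symmetry argument can still contribute two things. First, it gives nonspeciality of $D$: $\gamma^4-1$ has exactly one zero over each of the $2n-1$ gaps. This ensures $\Theta(\mathbf{A}(\mathcal{P})-\mathbf{d})\not\equiv0$, and then $\Theta(\mathbf{A}(\infty)-\mathbf{d})\neq0$ follows at once. Second, it is what you would need for the nonvanishing of the $x$-dependent denominator $\Theta\bigl(\mathbf{A}(\infty)+\mathbf{d}-\frac{1}{2\pi i}\sum_j\mathbf{U}_jh_j\bigr)$, which the paper also takes for granted.
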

\section{Behavior of the potential $u(x,t)$ as $t\rightarrow +\infty$}
 Recall that the RH  problem to $X(x,t;k)$ for the soliton gas
 \begin{align}\label{Xtjump}
		&X_{+}(k) = X_{-}(k) \begin{cases}
			\begin{pmatrix} {1}& 0\\   {2i   r_j(k) e^{ -2 i\theta(x,t;k)} }  & {1} \end{pmatrix}, &  k \in i(a_j,b_j),\ j=1,2,\ldots, n,\\
			\begin{pmatrix} {1}& {2i   r_j(k) e^{ 2 i\theta(x,t;k)}}\\0  & {1} \end{pmatrix}, &  k \in i(-b_j,-a_j),\ j=1,2,\ldots, n.
		\end{cases}
\end{align}		
In the region  $\xi=\frac{x}{t}<4 a_1^2$, it becomes evident that the phases in the jumps are exponentially decaying as $t \to + \infty$. Through a small norm argument, we can infer that	
\begin{gather}
X(k) = I  + \mathcal{O}\le( {e^{-2ta_1(4a_1^2-\xi)}} \ri), \qquad \text{as } t\to + \infty \text{ with }  \xi<4a_1^2,
\end{gather}
and the potential $u(x,t)$ becomes trivial.

As $4a_1^2<\xi<\xi_2$, where $\xi_2$ is defined by \eqref{xi2}, there exist a rarefaction wave region and a periodic travelling wave region. We introduce a speed $\xi_1$ to distinguish these two regions
\begin{align}\label{xi1}
\xi_1=b_1^2W_1(\frac{a_1^2}{b_1^2}),\quad W_1(m_{\beta_1})=\frac{4(1-m_{\beta_1})K(m_{\beta_1})}{E(m_{\beta_1})}+2(1+m_{\beta_1}),
\end{align}
where $K(m_{\beta_1})=\int_0^{1}\frac{\d s}{\sqrt{(1-s^2)(1-m_{\beta_1} s^2)}}$ and $E(m_{\beta_1})=\int_0^{1}\frac{\sqrt{1-m_{\beta_1} s^2}}{\sqrt{1-s^2}}\d s$ are the complete elliptic integrals of first and second kind, respectively. The $g$-function introduced in the literature \cite{Girotti-2} can be directly applied to region $(4a_1^2,\xi_2)$, where the jumps on $\bigcup\limits_{j=2}^{n}(\Sigma_{j}\cup\Sigma_{-j})$ can be eliminated, with a proof similar to Lemma 4.9 in \cite{Girotti-2}. Then for $4a_1^2<\xi<\xi_2$, the asymptotic behavior of the mKdV soliton gas solution in the large time limit is
\begin{equation}
				\label{u_dnt}
				u(x,t) = (\beta_1 + a_1)
				\Jac \left( (\beta_1 + a_1) ( x - 2(a_1^2+\beta_1^2) t - x_0), m_{1\beta_1} \right)+ \mathcal{O}\le(t^{-1}\ri),
			\end{equation}
			where $\Jac\le(z,m_{1\beta_1} \ri)$ is the Jacobi elliptic function with modulus $m_{1\beta_1}=\frac{4\beta_1a_1}{(\beta_1+a_1)^2}$,
 $x_0 = \frac{K(m_{\beta_1})(\Delta_{\beta_1}-\pi )}{\beta_1 \pi }$ and $\Delta_{\beta_1}=-i (\int_{0}^{ia_1}\frac{\d s}{\sqrt{(s^2+\beta_1^2)(s^2+a_1^2)}})^{-1}(\int_{\Sigma_{1,\beta_1}}\frac{\log (2r_1(s))}{\sqrt{(s^2+\beta_1^2)(s^2+a_1^2)}_+}\d s)$.
 The parameter $\beta_1=\beta_1(\xi) \in (a_1, b_1)$ for $\xi\in (4a_1^2, \xi_1)$ and satisfies the Whitham modulation equation $\xi=W_1(\frac{a_1^2}{\beta_1^2})$, while for $\xi \in (\xi_1, \xi_2)$, $\beta_1(\xi)=b_1$ (see \cite{Girotti-2} for details). When $\xi>\xi_2$, there are $2n-2$ distinct asymptotic regions associated with high genus Riemann-theta functions, which we will investigate in the following two sections.

\section{The first genus $2m-1$ sector : $\xi_{2m-2} <\xi< \xi_{2m-1}$}

Assume that $\xi_{2m-2}$ and $\xi_{2m-1}$, $(m=2,3,\ldots,n)$, are defined by \eqref{xi2} and \eqref{xi3}, respectively. For $\xi_{2m-2} <\xi< \xi_{2m-1}$, we split the contours $\Sigma_m$ and $\Sigma_{-m}$ in the following way: let ${\beta_{m}}\in (a_m, b_m)$ and define the sub intervals $\Sigma_{m,{\beta_{m}}}=i(a_m,{\beta_{m}})$ and $\Sigma_{-m,{\beta_{m}}}=i(-{\beta_{m}},-a_m)$, where ${\beta_{m}}$ is determined by the Whitham evolution equation \eqref{alpha}. Then the off-diagonal terms of jump conditions \eqref{Xtjump} on two growing bands $i(a_m, {\beta_{m}})$ and $i(-{\beta_{m}}, -a_m)$ are exponentially increasing as $t\to\infty$. In this setting, we need to introduce two new scalar functions $g_{{\beta_{m}}}(x,t;k)$ and $f_{{\beta_{m}}}(k)$.

The function $g_{{\beta_{m}}}(x,t;k)$ needs to satisfy the following properties:
\begin{enumerate}
\item{}
$g_{{\beta_{m}}}(x,t;k)$ is analytic in $\C\backslash i[-{\beta_{m}},{\beta_{m}}]$.

\item{}
$g_{{\beta_{m}}}(x,t;k)$ satisfies the following jump conditions:
\begin{align}
&g_{{\beta_{m}}+}(k)+ g_{{\beta_{m}}-}(k)= 2kx+8k^3t, & k\in \Sigma_{m,\beta_{m}}\cup\Sigma_{-m,\beta_{m}} \cup\bigcup\limits_{j=1}^{m-1}(\Sigma_j\cup\Sigma_{-j}) ,\label{g31}\\
&g_{{\beta_{m}}+}(k)-g_{{\beta_{m}}-}(k)=\Omega_{{\beta_{m}},0} ,& k \in  i[-a_1,a_1], \label{g32}\\
&g_{{\beta_{m}}+}(k)-g_{{\beta_{m}}-}(k)=\Omega_{{\beta_{m}},j} ,& k \in i[b_{j},a_{j+1}],\ j=1,2,\ldots,m-1, \label{g33}\\
&g_{{\beta_{m}}+}(k)-g_{{\beta_{m}}-}(k)=\Omega_{{\beta_{m}},{-j}} ,& k \in  i[-a_{j+1},-b_{j}],\ j=1,2,\ldots,m-1. \label{g34}
\end{align}
\item{}
$g_{{\beta_{m}}}(k) = \mathcal{O}\le(\frac{1}{k}\ri), k \rightarrow  \infty \ .$

\item{}
$g_{{\beta_{m}}}(x,t;k)$ has the following asymptotic behavior near the endpoints $\pm ia_j$, $\pm ib_j$ and $\pm i{\beta_{m}}$
\begin{align}
&[g_{{\beta_{m}}}(k)-kx-4k^3t]^{'}=\mathcal{O}\le((k\mp ia_j)^{-\frac{1}{2}}\ri), k \rightarrow \pm ia_{j}, \ j=1,2,\ldots,m,\\
&[g_{{\beta_{m}}}(k)-kx-4k^3t]^{'}=\mathcal{O}\le((k\mp ib_j)^{-\frac{1}{2}}\ri), k \rightarrow \pm ib_{j},\ j=1,2,\ldots,m-1,\\
&g_{{\beta_{m}}}(k)-kx-4k^3t=\mathcal{O}\le((k\mp i{\beta_{m}})^\frac{3}{2}\ri), k \rightarrow \pm i{\beta_{m}}.\label{axi}
\end{align}
\item{}
$g_{{\beta_{m}}}(x,t;k)$ satisfies the following inequalities
\begin{align}
&\Im (g_{{\beta_{m}}}(k)-xk-4k^3t)>c t, k\in i({\beta_{m}},b_m]\cup\bigcup\limits_{j=m+1}^{n}\Sigma_{j},\\
&\Im (g_{{\beta_{m}}}(k)-xk-4k^3t)<-c t, k\in i[-b_m,-{\beta_{m}})\cup\bigcup\limits_{j=m+1}^{n}\Sigma_{-j}.
\end{align}
\end{enumerate}
To solve the scalar RH problem, we observe that the derivative of $g_{{\beta_{m}}}(k)$ with respect to $k$ satisfies the following properties:
\begin{enumerate}
\item{}
$g_{{\beta_{m}}}'(x,t;k)$ is analytic in $\C\backslash i[-{\beta_{m}},{\beta_{m}}]$.

\item{}
$g_{{\beta_{m}}}'(x,t;k)$ satisfies the following jump conditions:
\begin{align*}
&g_{{\beta_{m}}+}'(k)+ g_{{\beta_{m}}-}'(k)= 2x+24k^2t, &k\in\Sigma_{m,{\beta_{m}}}\cup\Sigma_{-m,{\beta_{m}}}\cup\bigcup\limits_{j=1}^{m-1}(\Sigma_j\cup\Sigma_{-j}) ,\\
&g_{{\beta_{m}}+}'(k)-g_{{\beta_{m}}-}'(k)=0 ,& k \in  i[-a_1,a_1]\cup \bigcup\limits_{j=1}^{m-1}(i[b_j,a_{j+1}] \cup i[-a_{j+1}, -b_j]).
\end{align*}
\item{}
$g_{{\beta_{m}}}'(k) = \mathcal{O}\le(\frac{1}{k^2}\ri), k \rightarrow  \infty \ .$
\end{enumerate}
From above, we can write the explicit expression of the derivative of $g_{{\beta_{m}}}(k)$ as follows:
\begin{equation}\label{gad}
\begin{array}{l}
g_{{\beta_{m}}}'(k)=x+12k^2t-x \frac{k^{2m}+\sum\limits_{j=1}^{m} \tilde{c}_j k^{2(m-j)}}{R_{{\beta_{m}}}(k)}-12t \frac{k^{2(m+1)}+\frac{1}{2}(a_m^2+\beta_{m}^2+\sum\limits_{j=1}^{m-1}(a_j^2+b_j^2)) k^{2m}+\sum\limits_{j=1}^{m} \hat{c}_j k^{2(m-j)}}{R_{{\beta_{m}}}(k)},
\end{array}
\end{equation}
where $$R_{{\beta_{m}}}(k)=\sqrt{(k^2+a_m^2)(k^2+\beta_{m}^2)\prod\limits_{j=1}^{m-1}(k^2+a_j^2)(k^2+b_j^2)},$$ and the conditions \eqref{g31}--\eqref{g34} imply that  the constants $ \tilde{c}_j $ and $ \hat{c}_j $ have to satisfy the following equations
\begin{align}
&\int_{ib_l}^{ia_{l+1}}\frac{\zeta^{2m}+\sum\limits_{j=1}^{m} \tilde{c}_j \zeta^{2(m-j)}}{R_{{\beta_{m}}}(\zeta)}\d\zeta=0,\ l=1,2,\ldots,m-1,\label{tildec1}\\ &\int_{-ia_1}^{ia_1}\frac{\zeta^{2m}+\sum\limits_{j=1}^{m} \tilde{c}_j \zeta^{2(m-j)}}{R_{{\beta_{m}}}(\zeta)}\d\zeta=0,\label{tildec2}\\
&\int_{ib_l}^{ia_{l+1}}\frac{\zeta^{2(m+1)}+\frac{1}{2}(a_m^2+\beta_{m}^2+\sum\limits_{j=1}^{m-1}(a_j^2+b_j^2)) \zeta^{2m}+\sum\limits_{j=1}^{m} \hat{c}_j \zeta^{2(m-j)}}{R_{{\beta_{m}}}(\zeta)}\d\zeta=0,\ l=1,2,\ldots,m-1,\label{hatc1}\\ &\int_{-ia_1}^{ia_1}\frac{\zeta^{2(m+1)}+\frac{1}{2}(a_m^2+\beta_{m}^2+\sum\limits_{j=1}^{m-1}(a_j^2+b_j^2)) \zeta^{2m}+\sum\limits_{j=1}^{m} \hat{c}_j \zeta^{2(m-j)}}{R_{{\beta_{m}}}(\zeta)}\d\zeta=0\label{hatc2}.
\end{align}
In order to solve the system of equations  \eqref{tildec1}--\eqref{tildec2} regarding $ \tilde{c}_j $, we define the following matrices
\begin{align}
\tilde{A}=\begin{pmatrix}
\int_{ib_{m-1}}^{ia_{m}}\frac{\zeta^{2(m-1)}}{R_{\beta_{m}}(\zeta)}\d\zeta & \int_{ib_{m-1}}^{ia_{m}}\frac{\zeta^{2(m-2)}}{R_{\beta_{m}}(\zeta)}\d\zeta &
\cdots &
\int_{ib_{m-1}}^{ia_{m}}\frac{1}{R_{\beta_{m}}(\zeta)}\d\zeta\\
\int_{ib_{m-2}}^{ia_{m-1}}\frac{\zeta^{2(m-1)}}{R_{\beta_{m}}(\zeta)}\d\zeta & \int_{ib_{m-2}}^{ia_{m-1}}\frac{\zeta^{2(m-2)}}{R_{\beta_{m}}(\zeta)}\d\zeta &
\cdots &
\int_{ib_{m-2}}^{ia_{m-1}}\frac{1}{R_{\beta_{m}}(\zeta)}\d\zeta\\
\cdots&\cdots&\cdots&\cdots\\
\int_{-ia_1}^{ia_1}\frac{\zeta^{2(m-1)}}{R_{\beta_{m}}(\zeta)}\d\zeta & \int_{-ia_1}^{ia_1}\frac{\zeta^{2(m-2)}}{R_{\beta_{m}}(\zeta)}\d\zeta &
\cdots &
\int_{-ia_1}^{ia_1}\frac{1}{R_{\beta_{m}}(\zeta)}\d\zeta
\end{pmatrix},
\end{align}
\begin{align}
\tilde{D}_1=\begin{pmatrix}
\int_{ib_{m-1}}^{ia_{m}}\frac{-\zeta^{2m}}{R_{\beta_{m}}(\zeta)}\d\zeta & \int_{ib_{m-1}}^{ia_{m}}\frac{\zeta^{2(m-2)}}{R_{\beta_{m}}(\zeta)}\d\zeta &
\cdots &
\int_{ib_{m-1}}^{ia_{m}}\frac{1}{R_{\beta_{m}}(\zeta)}\d\zeta\\
\int_{ib_{m-2}}^{ia_{m-1}}\frac{-\zeta^{2m}}{R_{\beta_{m}}(\zeta)}\d\zeta & \int_{ib_{m-2}}^{ia_{m-1}}\frac{\zeta^{2(m-2)}}{R_{\beta_{m}}(\zeta)}\d\zeta &
\cdots &
\int_{ib_{m-2}}^{ia_{m-1}}\frac{1}{R_{\beta_{m}}(\zeta)}\d\zeta\\
\cdots&\cdots&\cdots&\cdots\\
\int_{-ia_1}^{ia_1}\frac{-\zeta^{2m}}{R_{\beta_{m}}(\zeta)}\d\zeta & \int_{-ia_1}^{ia_1}\frac{\zeta^{2(m-2)}}{R_{\beta_{m}}(\zeta)}\d\zeta &
\cdots &
\int_{-ia_1}^{ia_1}\frac{1}{R_{\beta_{m}}(\zeta)}\d\zeta
\end{pmatrix},
\end{align}
\begin{gather*}
\tilde{D}_j=\begin{array}{l}\begin{pmatrix}
\scriptstyle \int_{ib_{m-1}}^{ia_{m}}\frac{\zeta^{2(m-1)}}{R_{\beta_{m}}(\zeta)}\d\zeta &
\scriptstyle \cdots &
\scriptstyle \int_{ib_{m-1}}^{ia_{m}}\frac{\zeta^{2(m-j+1)}}{R_{\beta_{m}}(\zeta)}\d\zeta &
\scriptstyle \int_{ib_{m-1}}^{ia_{m}}\frac{-\zeta^{2m}}{R_{\beta_{m}}(\zeta)}\d\zeta &
\scriptstyle \int_{ib_{m-1}}^{ia_{m}}\frac{\zeta^{2(m-j-1)}}{R_{\beta_{m}}(\zeta)}\d\zeta &
\scriptstyle \cdots &
\scriptstyle \int_{ib_{m-1}}^{ia_{m}}\frac{1}{R_{\beta_{m}}(\zeta)}\d\zeta\\
\scriptstyle \int_{ib_{m-2}}^{ia_{m-1}}\frac{\zeta^{2(m-1)}}{R_{\beta_{m}}(\zeta)}\d\zeta &
\scriptstyle \cdots &
\scriptstyle \int_{ib_{m-2}}^{ia_{m-1}}\frac{\zeta^{2(m-j+1)}}{R_{\beta_{m}}(\zeta)}\d\zeta &
\scriptstyle \int_{ib_{m-2}}^{ia_{m-1}}\frac{-\zeta^{2m}}{R_{\beta_{m}}(\zeta)}\d\zeta &
\scriptstyle \int_{ib_{m-2}}^{ia_{m-1}}\frac{\zeta^{2(m-j-1)}}{R_{\beta_{m}}(\zeta)}\d\zeta &
\scriptstyle \cdots &
\scriptstyle \int_{ib_{m-2}}^{ia_{m-1}}\frac{1}{R_{\beta_{m}}(\zeta)}\d\zeta\\
\scriptstyle \cdots& \scriptstyle \cdots& \scriptstyle\cdots& \scriptstyle\cdots& \scriptstyle\cdots&\scriptstyle\cdots&\scriptstyle\cdots\\
\scriptstyle \int_{-ia_1}^{ia_1}\frac{\zeta^{2(m-1)}}{R_{\beta_{m}}(\zeta)}\d\zeta &
\scriptstyle \cdots &
\scriptstyle \int_{-ia_1}^{ia_1}\frac{\zeta^{2(m-j+1)}}{R_{\beta_{m}}(\zeta)}\d\zeta &
\scriptstyle \int_{-ia_1}^{ia_1}\frac{-\zeta^{2m}}{R_{\beta_{m}}(\zeta)}\d\zeta &
\scriptstyle \int_{-ia_1}^{ia_1}\frac{\zeta^{2(m-j-1)}}{R_{\beta_{m}}(\zeta)}\d\zeta &
\scriptstyle \cdots &
\scriptstyle \int_{-ia_1}^{ia_1}\frac{1}{R_{\beta_{m}}(\zeta)}\d\zeta
\end{pmatrix},
\end{array}
\end{gather*}
for $j=2,3,\ldots,m-1$,
\begin{align*}
\tilde{D}_m=\begin{pmatrix}
\int_{ib_{m-1}}^{ia_{m}}\frac{\zeta^{2(m-1)}}{R_{\beta_{m}}(\zeta)}\d\zeta & \int_{ib_{m-1}}^{ia_{m}}\frac{\zeta^{2(m-2)}}{R_{\beta_{m}}(\zeta)}\d\zeta &
\cdots &
\int_{ib_{m-1}}^{ia_{m}}\frac{\zeta^{2}}{R_{\beta_{m}}(\zeta)}\d\zeta&
\int_{ib_{m-1}}^{ia_{m}}\frac{-\zeta^{2m}}{R_{\beta_{m}}(\zeta)}\d\zeta\\
\int_{ib_{m-2}}^{ia_{m-1}}\frac{\zeta^{2(m-1)}}{R_{\beta_{m}}(\zeta)}\d\zeta & \int_{ib_{m-2}}^{ia_{m-1}}\frac{\zeta^{2(m-2)}}{R_{\beta_{m}}(\zeta)}\d\zeta &
\cdots &
\int_{ib_{m-2}}^{ia_{m-1}}\frac{\zeta^{2}}{R_{\beta_{m}}(\zeta)}\d\zeta&
\int_{ib_{m-2}}^{ia_{m-1}}\frac{-\zeta^{2m}}{R_{\beta_{m}}(\zeta)}\d\zeta\\
\cdots&\cdots&\cdots&\cdots&\ldots\\
\int_{-ia_1}^{ia_1}\frac{\zeta^{2(m-1)}}{R_{\beta_{m}}(\zeta)}\d\zeta & \int_{-ia_1}^{ia_1}\frac{\zeta^{2(m-2)}}{R_{\beta_{m}}(\zeta)}\d\zeta &
\cdots &
\int_{-ia_1}^{ia_1}\frac{\zeta^{2}}{R_{\beta_{m}}(\zeta)}\d\zeta&
\int_{-ia_1}^{ia_1}\frac{-\zeta^{2m}}{R_{\beta_{m}}(\zeta)}\d\zeta
\end{pmatrix}.
\end{align*}
Then the constants $ \tilde{c}_j $ can be represented as
\begin{gather}
 \tilde{c}_j =\frac{\det \tilde{D}_j}{\det \tilde{A}},\ j=1,2,\ldots,m.
\end{gather}
In order to solve the system \eqref{hatc1}--\eqref{hatc2}, we define
\begin{align}
\hat{D}_1=\begin{pmatrix}
\int_{ib_{m-1}}^{ia_{m}}\frac{-\zeta^{2(m+1)}}{R_{\beta_{m}}(\zeta)}\d\zeta & \int_{ib_{m-1}}^{ia_{m}}\frac{\zeta^{2(m-2)}}{R_{\beta_{m}}(\zeta)}\d\zeta &
\cdots &
\int_{ib_{m-1}}^{ia_{m}}\frac{1}{R_{\beta_{m}}(\zeta)}\d\zeta\\
\int_{ib_{m-2}}^{ia_{m-1}}\frac{-\zeta^{2(m+1)}}{R_{\beta_{m}}(\zeta)}\d\zeta & \int_{ib_{m-2}}^{ia_{m-1}}\frac{\zeta^{2(m-2)}}{R_{\beta_{m}}(\zeta)}\d\zeta &
\cdots &
\int_{ib_{m-2}}^{ia_{m-1}}\frac{1}{R_{\beta_{m}}(\zeta)}\d\zeta\\
\cdots&\cdots&\cdots&\cdots\\
\int_{-ia_1}^{ia_1}\frac{-\zeta^{2(m+1)}}{R_{\beta_{m}}(\zeta)}\d\zeta & \int_{-ia_1}^{ia_1}\frac{\zeta^{2(m-2)}}{R_{\beta_{m}}(\zeta)}\d\zeta &
\cdots &
\int_{-ia_1}^{ia_1}\frac{1}{R_{\beta_{m}}(\zeta)}\d\zeta
\end{pmatrix},
\end{align}
\begin{equation*}
\hat{D}_j=\begin{array}{l}
\begin{pmatrix}
\scriptstyle\int_{ib_{m-1}}^{ia_{m}}\frac{\zeta^{2(m-1)}}{R_{\beta_{m}}(\zeta)}\d\zeta &
\scriptstyle\cdots &
\scriptstyle\int_{ib_{m-1}}^{ia_{m}}\frac{\zeta^{2(m-j+1)}}{R_{\beta_{m}}(\zeta)}\d\zeta &
\scriptstyle\int_{ib_{m-1}}^{ia_{m}}\frac{-\zeta^{2(m+1)}}{R_{\beta_{m}}(\zeta)}\d\zeta &
\scriptstyle\int_{ib_{m-1}}^{ia_{m}}\frac{\zeta^{2(m-j-1)}}{R_{\beta_{m}}(\zeta)}\d\zeta &
\scriptstyle\cdots &
\scriptstyle\int_{ib_{m-1}}^{ia_{m}}\frac{1}{R_{\beta_{m}}(\zeta)}\d\zeta\\
\scriptstyle\int_{ib_{m-2}}^{ia_{m-1}}\frac{\zeta^{2(m-1)}}{R_{\beta_{m}}(\zeta)}\d\zeta &
\scriptstyle\cdots &
\scriptstyle\int_{ib_{m-2}}^{ia_{m-1}}\frac{\zeta^{2(m-j+1)}}{R_{\beta_{m}}(\zeta)}\d\zeta &
\scriptstyle\int_{ib_{m-2}}^{ia_{m-1}}\frac{-\zeta^{2(m+1)}}{R_{\beta_{m}}(\zeta)}\d\zeta &
\scriptstyle\int_{ib_{m-2}}^{ia_{m-1}}\frac{\zeta^{2(m-j-1)}}{R_{\beta_{m}}(\zeta)}\d\zeta &
\scriptstyle\cdots &
\scriptstyle\int_{ib_{m-2}}^{ia_{m-1}}\frac{1}{R_{\beta_{m}}(\zeta)}\d\zeta\\
\scriptstyle\cdots&\scriptstyle\cdots&\scriptstyle\cdots&\scriptstyle\cdots&\scriptstyle\cdots&\scriptstyle\cdots&\scriptstyle\cdots\\
\scriptstyle\int_{-ia_1}^{ia_1}\frac{\zeta^{2(m-1)}}{R_{\beta_{m}}(\zeta)}\d\zeta &
\scriptstyle\cdots &
\scriptstyle\int_{-ia_1}^{ia_1}\frac{\zeta^{2(m-j+1)}}{R_{\beta_{m}}(\zeta)}\d\zeta &
\scriptstyle\int_{-ia_1}^{ia_1}\frac{-\zeta^{2(m+1)}}{R_{\beta_{m}}(\zeta)}\d\zeta &
\scriptstyle\int_{-ia_1}^{ia_1}\frac{\zeta^{2(m-j-1)}}{R_{\beta_{m}}(\zeta)}\d\zeta &
\scriptstyle\cdots &
\scriptstyle\int_{-ia_1}^{ia_1}\frac{1}{R_{\beta_{m}}(\zeta)}\d\zeta
\end{pmatrix},
\end{array}
\end{equation*}
for $j=2,3,\ldots,m-1$,
\begin{align*}
\hat{D}_m=\begin{pmatrix}
\int_{ib_{m-1}}^{ia_{m}}\frac{\zeta^{2(m-1)}}{R_{\beta_{m}}(\zeta)}\d\zeta & \int_{ib_{m-1}}^{ia_{m}}\frac{\zeta^{2(m-2)}}{R_{\beta_{m}}(\zeta)}\d\zeta &
\cdots &
\int_{ib_{m-1}}^{ia_{m}}\frac{\zeta^{2}}{R_{\beta_{m}}(\zeta)}\d\zeta&
\int_{ib_{m-1}}^{ia_{m}}\frac{-\zeta^{2(m+1)}}{R_{\beta_{m}}(\zeta)}\d\zeta\\
\int_{ib_{m-2}}^{ia_{m-1}}\frac{\zeta^{2(m-1)}}{R_{\beta_{m}}(\zeta)}\d\zeta & \int_{ib_{m-2}}^{ia_{m-1}}\frac{\zeta^{2(m-2)}}{R_{\beta_{m}}(\zeta)}\d\zeta &
\cdots &
\int_{ib_{m-2}}^{ia_{m-1}}\frac{\zeta^{2}}{R_{\beta_{m}}(\zeta)}\d\zeta&
\int_{ib_{m-2}}^{ia_{m-1}}\frac{-\zeta^{2(m+1)}}{R_{\beta_{m}}(\zeta)}\d\zeta\\
\cdots&\cdots&\cdots&\cdots&\ldots\\
\int_{-ia_1}^{ia_1}\frac{\zeta^{2(m-1)}}{R_{\beta_{m}}(\zeta)}\d\zeta & \int_{-ia_1}^{ia_1}\frac{\zeta^{2(m-2)}}{R_{\beta_{m}}(\zeta)}\d\zeta &
\cdots &
\int_{-ia_1}^{ia_1}\frac{\zeta^{2}}{R_{\beta_{m}}(\zeta)}\d\zeta&
\int_{-ia_1}^{ia_1}\frac{-\zeta^{2(m+1)}}{R_{\beta_{m}}(\zeta)}\d\zeta
\end{pmatrix}.
\end{align*}
Then the constants $\hat{c}_{j}$ can be determined by the following equations
\begin{gather}
\hat{c}_{j}=\frac{1}{2}(a_m^2+\beta_{m}^2+\sum\limits_{l=1}^{m-1}(a_l^2+b_l^2))  \tilde{c}_j +\frac{\det \hat{D}_j}{\det \tilde{A}},\ j=1,2,\ldots,m.
\end{gather}

In order to construct local models to account for the locally non-uniform behavior around the endpoints $\pm i{\beta_{m}}$, we require that the  function $g_{{\beta_{m}}}(k)-kx-4k^3 t$ behaves as $(k \mp i{\beta_{m}})^{\frac{3}{2}}$ near $k=\pm i{\beta_{m}} $. Hence, for $\xi\in(\xi_{2m-2},\xi_{2m-1})$, ${\beta_{m}}(\xi)\in (a_m,b_m)$ and the Whitham evolution equation holds
\begin{equation}
\label{alpha}
\xi=\dfrac{x}{t}= -12\dfrac{(i{\beta_{m}})^{2(m+1)}+\frac{1}{2}(a_m^2+\beta_{m}^2+\sum\limits_{j=1}^{m-1}(a_j^2+b_j^2))(i{\beta_{m}})^{2m}
+\sum\limits_{j=1}^{m} \hat{c}_j (i{\beta_{m}})^{2(m-j)}}{(i{\beta_{m}})^{2m}+\sum\limits_{j=1}^{m} \tilde{c}_j (i{\beta_{m}})^{2(m-j)}},
\end{equation}
which gives the modulation equation determining the motion of branch point $\beta_m$ as a function of $\xi$.
\begin{lemma}
Function ${\beta_{m}}(\xi)$ determined by \eqref{alpha} is a monotonically increasing function with respect to $\xi$, this is $\dfrac{\d {\beta_{m}}(\xi)}{\d \xi}>0$. Moreover, $\lim\limits_{\xi\to\xi_{2m-2}}{\beta_{m}}(\xi)=a_{m}$ and $\lim\limits_{\xi\to\xi_{2m-1}}{\beta_{m}}(\xi)=b_{m}$.
\end{lemma}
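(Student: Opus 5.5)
Write $\xi=W_m(\beta_m)$, where $W_m(\beta)$ denotes the right-hand side of \eqref{alpha} viewed as a function of $\beta\in(a_m,b_m)$, with $\tilde c_j=\tilde c_j(\beta)$ and $\hat c_j=\hat c_j(\beta)$ given by Cramer's rule. The plan is to show $W_m'(\beta)>0$ on $(a_m,b_m)$. The implicit function theorem then gives a smooth inverse $\beta_m(\xi)$ with $\d\beta_m/\d\xi=1/W_m'(\beta_m)>0$. The endpoint statements follow once we identify $\xi_{2m-2}=W_m(a_m^+)$ and $\xi_{2m-1}=W_m(b_m)$, which we take as the definitions \eqref{xi2}, \eqref{xi3}. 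This needs continuity of $W_m$ up to $\beta=a_m$ (the band $\Sigma_{m,\beta}$ collapses there) and $\beta=b_m$, checked directly from the integral formulas.

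\textbf{Structural rewriting.} Set $P(k)=k^{2m}+\sum\tilde c_jk^{2(m-j)}$ and $Q(k)=k^{2(m+1)}+\tfrac12(\cdots)k^{2m}+\sum\hat c_jk^{2(m-j)}$, so \eqref{alpha} reads $xP(i\beta)+12tQ(i\beta)=0$. Equivalently, the numerator $N(k):=xP(k)+12tQ(k)$ of $g_\beta'-x-12k^2t$ vanishes at $\pm i\beta$. This is exactly the $(k\mp i\beta)^{3/2}$ condition \eqref{axi}. Hence
\[
g_\beta'(k)-x-12k^2t=-12t\,(k^2+\beta^2)\frac{\prod_{l=0}^{m-1}(k^2+\kappa_l^2)}{R_\beta(k)},
\]
where the normalization conditions \eqref{tildec1}--\eqref{hatc2} force one zero $\kappa_0\in(0,a_1)$ and one $\kappa_l\in(b_l,a_{l+1})$ in each gap, by the same sign-change argument as in the proof of the lemma giving \eqref{in1}.

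\textbf{Derivative computation.} Differentiate the identity $\xi P(i\beta;\beta)+12Q(i\beta;\beta)=0$ in $\beta$ at fixed $\xi$. Because $N$ has a double-type vanishing after dividing by $R_\beta$, the total derivative reduces to
\[
W_m'(\beta)\,P(i\beta)=-\bigl(\partial_\beta[\xi P+12Q]\bigr)(i\beta),
\]
where $\partial_\beta$ acts on the coefficients $\tilde c_j,\hat c_j$ and the explicit $\beta^2$ term. Differentiating \eqref{tildec1}--\eqref{hatc2} in $\beta$ uses $\partial_\beta R_\beta^{-1}=-\beta R_\beta^{-1}/(k^2+\beta^2)$. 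The derivative of the polynomial $\xi P+12Q$ in $\beta$ is then a polynomial of degree $\le 2(m-1)$ in $k^2$ (plus a known term). Its coefficients solve a linear system with matrix $\tilde A$ and right-hand side built from the integrals $\int \frac{N(\zeta)}{(\zeta^2+\beta^2)R_\beta(\zeta)}\d\zeta$ over the gaps. Since $N/(\zeta^2+\beta^2)=12t\prod(\zeta^2+\kappa_l^2)$, these are integrals of a sign-definite-per-gap integrand. By Cramer's rule, $W_m'(\beta)$ becomes a ratio of the form
\[
W_m'(\beta)=c\,\beta\,\frac{\det M_1\,\det M_2}{(\det\tilde A)^2\,P(i\beta)^2}\cdot(\text{explicit positive factor}),
\]
with $M_1,M_2$ being $m\times m$ matrices of period-type integrals over $i[-a_1,a_1]$ and the gaps $i[b_l,a_{l+1}]$. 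One of them has an extra weight $(\zeta^2+\beta^2)^{-1}$ or $\prod(\zeta^2+\kappa_l^2)$.

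\textbf{Main obstacle: signs of the determinants.} I expect the key difficulty to be proving that $\det\tilde A$, $\det M_1$ and $\det M_2$ are nonzero with a fixed sign. The approach is to write each determinant, via multilinearity in rows, as an $m$-fold integral
\[
\int_{\gamma_1}\!\cdots\!\int_{\gamma_m}\det\bigl(\zeta_r^{2(m-s)}\bigr)_{r,s}\prod_r\frac{w(\zeta_r)}{R_\beta(\zeta_r)}\,\d\zeta_1\cdots\d\zeta_m .
\]
Here $\gamma_r$ runs over the $m$ gaps, ordered so that the $\zeta_r^2$ (which are negative reals) are strictly ordered. The Vandermonde determinant in $\zeta_r^2$ then has constant sign on the product domain. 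On each gap, $1/R_\beta$ is real up to a fixed power of $i$ and has constant sign, determined by counting the branch points above it. The weights $w$ are positive there. So the integrand has constant sign and the determinant is nonzero, with computable sign. Tracking the powers of $i$ and sign counts yields positivity of the full product. Continuity in $\beta$ together with nonvanishing gives a uniform sign, so $W_m'>0$ on all of $(a_m,b_m)$. The same determinant-sign argument also yields $\xi_{2m-1}<\xi_{2m}$ later.
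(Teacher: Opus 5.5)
There is a genuine gap in your derivative step, and it reverses the sign. Put $n(k)=\xi P(k)+12Q(k)$ (your $N$ divided by $t$). Differentiating $n(i\beta;\beta,W_m(\beta))=0$ in $\beta$ gives
\[
W_m'(\beta)\,P(i\beta)=-(\partial_\beta n)(i\beta)-i\,n'(i\beta),
\]
where $n'$ is the $k$-derivative produced by the moving evaluation point $k=i\beta$. You discard this term, citing a ``double-type vanishing''. But $n$ has only a simple zero at $i\beta$: $n=(k^2+\beta^2)n_1$ with $n_1(k)=12\prod_l(k^2+\kappa_l^2)$ and $n_1(i\beta)\neq0$. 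This simple zero is precisely what makes $n/R_\beta\sim(k-i\beta)^{1/2}$. So $i\,n'(i\beta)=-2\beta\,n_1(i\beta)\neq0$.

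Your linear system for $\partial_\beta n$ can in fact be solved outright. The quotient $(\partial_\beta n-\beta n_1)/R_\beta$ has an even numerator of degree at most $2m-2$, because the $k^{2m}$ terms cancel thanks to the $\tfrac12\beta^2k^{2m}$ term in $Q$. Its integrals over all $m$ gaps vanish, so it is identically zero since $\det\tilde A\neq0$; that is, $\partial_\beta n=\beta n_1$. Your formula therefore reads $W_m'(\beta)=-\beta\,n_1(i\beta)/P(i\beta)$.

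As polynomials in $k^2$, $P$ and $n_1$ each have exactly one root $-p_l^2$, resp. $-\kappa_l^2$, with $p_l,\kappa_l$ in each of the gaps $(0,a_1),(b_1,a_2),\ldots,(b_{m-1},a_m)$, hence below $a_m<\beta$. So both $P(i\beta)$ and $n_1(i\beta)$ have sign $(-1)^m$, and your expression is \emph{negative}. For example, when $m=1$ and $\beta\gg a_1$ one has $\xi\approx6\beta^2$, but your formula gives $-12\beta$ instead of $12\beta$. No tracking of powers of $i$ can rescue positivity.

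There is a second, independent error. The weight $\prod_l(\zeta^2+\kappa_l^2)$ is not sign-definite on any gap, since each gap contains one of the $\pm i\kappa_l$. A determinant carrying that weight therefore does not fall under your Vandermonde multiple-integral argument.

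With the chain-rule term restored you get
\[
W_m'(\beta)=\frac{\beta\,n_1(i\beta)}{P(i\beta)}=12\beta\,\frac{\prod_l(\beta^2-\kappa_l^2)}{\prod_l(\beta^2-p_l^2)}>0,
\]
which is a short zero-counting proof, different from the paper's. The paper instead differentiates the explicit form $\xi=12(\cdots+\tfrac12\beta_m^2+\det H/\det G)$. Via Laplace-expansion identities it reaches $12\beta_m\det\tilde M\det\tilde N/(\det G)^2$. There the weights $1/\hat R_{\beta_m}$ and $(\beta_m^2-\zeta^2)^2/\hat R_{\beta_m}$ are positive on the gaps, so the Vandermonde representation genuinely applies.

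The two results agree. By Heine's multiple-integral formula in the paper's real variables, $P(i\beta)=(-1)^m\det G/\det\tilde M$ and $n_1(i\beta)=12(-1)^m\det\tilde N/\det G$. Your treatment of the endpoint limits, identifying $\xi_{2m-2}$ and $\xi_{2m-1}$ with the values of $W_m$ at $a_m$ and $b_m$, matches the paper's.
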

\begin{proof}
Define $\hat{R}_{{\beta_{m}}}(\zeta)=\sqrt{(\zeta^2-a_m^2)(\zeta^2-\beta_{m}^2)\prod\limits_{j=1}^{m-1}(\zeta^2-a_j^2)(\zeta^2-b_j^2)}$. From equation \eqref{alpha}, we arrive at
\begin{equation}
\begin{aligned}
\xi
&=12\dfrac{\beta_{m}^{2(m+1)}-\frac{1}{2}(a_m^2+\beta_{m}^2+\sum\limits_{l=1}^{m-1}(a_l^2+b_l^2))\beta_{m}^{2m}
-\sum\limits_{j=1}^{m} \hat{c}_j (-1)^{j}\beta_{m}^{2(m-j)}}
{\beta_{m}^{2m}+\sum\limits_{j=1}^{m} \tilde{c}_j (-1)^{j}\beta_{m}^{2(m-j)}}\\
&=12\le(-\frac{1}{2}(a_m^2+\sum\limits_{l=1}^{m-1}(a_l^2+b_l^2))+\frac{1}{2}{\beta_{m}}^2+\frac{\det H}{\det G} \ri),
\end{aligned}
\end{equation}
where $H$ and $G$ are two $m\times m$ matrices given by
\begin{align}
H=\begin{pmatrix}
\int_{b_{m-1}}^{a_{m}}\frac{\zeta^{2m}(\beta_{m}^2-\zeta^2)}{\hat{R}_{\beta_{m}}(\zeta)}\d\zeta & \int_{b_{m-1}}^{a_{m}}\frac{\zeta^{2(m-2)}(\beta_{m}^2-\zeta^2)}{\hat{R}_{\beta_{m}}(\zeta)}\d\zeta &
\cdots &
\int_{b_{m-1}}^{a_{m}}\frac{\beta_{m}^2-\zeta^2}{\hat{R}_{\beta_{m}}(\zeta)}\d\zeta\\
\int_{b_{m-2}}^{a_{m-1}}\frac{\zeta^{2m}(\beta_{m}^2-\zeta^2)}{\hat{R}_{\beta_{m}}(\zeta)}\d\zeta & \int_{b_{m-2}}^{a_{m-1}}\frac{\zeta^{2(m-2)}(\beta_{m}^2-\zeta^2)}{\hat{R}_{\beta_{m}}(\zeta)}\d\zeta &
\cdots &
\int_{b_{m-2}}^{a_{m-1}}\frac{\beta_{m}^2-\zeta^2}{\hat{R}_{\beta_{m}}(\zeta)}\d\zeta\\
\cdots&\cdots&\cdots&\cdots\\
\int_{-a_1}^{a_1}\frac{\zeta^{2m}(\beta_{m}^2-\zeta^2)}{\hat{R}_{\beta_{m}}(\zeta)}\d\zeta & \int_{-a_1}^{a_1}\frac{\zeta^{2(m-2)}(\beta_{m}^2-\zeta^2)}{\hat{R}_{\beta_{m}}(\zeta)}\d\zeta &
\cdots &
\int_{-a_1}^{a_1}\frac{\beta_{m}^2-\zeta^2}{\hat{R}_{\beta_{m}}(\zeta)}\d\zeta
\end{pmatrix},
\end{align}

\begin{align}
G=\begin{pmatrix}
\int_{b_{m-1}}^{a_{m}}\frac{\zeta^{2(m-1)}(\beta_{m}^2-\zeta^2)}{\hat{R}_{\beta_{m}}(\zeta)}\d\zeta & \int_{b_{m-1}}^{a_{m}}\frac{\zeta^{2(m-2)}(\beta_{m}^2-\zeta^2)}{\hat{R}_{\beta_{m}}(\zeta)}\d\zeta &
\cdots &
\int_{b_{m-1}}^{a_{m}}\frac{\beta_{m}^2-\zeta^2}{\hat{R}_{\beta_{m}}(\zeta)}\d\zeta\\
\int_{b_{m-2}}^{a_{m-1}}\frac{\zeta^{2(m-1)}(\beta_{m}^2-\zeta^2)}{\hat{R}_{\beta_{m}}(\zeta)}\d\zeta & \int_{b_{m-2}}^{a_{m-1}}\frac{\zeta^{2(m-2)}(\beta_{m}^2-\zeta^2)}{\hat{R}_{\beta_{m}}(\zeta)}\d\zeta &
\cdots &
\int_{b_{m-2}}^{a_{m-1}}\frac{\beta_{m}^2-\zeta^2}{\hat{R}_{\beta_{m}}(\zeta)}\d\zeta\\
\cdots&\cdots&\cdots&\cdots\\
\int_{-a_1}^{a_1}\frac{\zeta^{2(m-1)}(\beta_{m}^2-\zeta^2)}{\hat{R}_{\beta_{m}}(\zeta)}\d\zeta & \int_{-a_1}^{a_1}\frac{\zeta^{2(m-2)}(\beta_{m}^2-\zeta^2)}{\hat{R}_{\beta_{m}}(\zeta)}\d\zeta &
\cdots &
\int_{-a_1}^{a_1}\frac{\beta_{m}^2-\zeta^2}{\hat{R}_{\beta_{m}}(\zeta)}\d\zeta
\end{pmatrix}.
\end{align}
Next, we consider the derivation of $\xi(\beta_{m})$ with respect to ${\beta_{m}}$
\begin{equation}
\dfrac{\d\xi(\beta_{m})}{\d{\beta_{m}}}=
12\frac{\beta_{m}(\det G)^2+\frac{\d (\det H)}{\d {\beta_m}} \det G-\frac{\d (\det G)}{\d {\beta_m}} \det H }{(\det G)^2}.
\end{equation}
Then $\frac{\d (\det H)}{\d {\beta_m}}$ can be written in the following form
\begin{gather*}
\frac{\d (\det H)}{\d {\beta_m}}=\beta_m\sum\limits_{j=1}^{m} \det H_j,
\end{gather*}
where $H_j$ are $m\times m$ matrices defined by
\begin{align*}
H_1=\begin{pmatrix}
\int_{b_{m-1}}^{a_{m}}\frac{\zeta^{2m}}{\hat{R}_{\beta_{m}}(\zeta)}\d\zeta & \int_{b_{m-1}}^{a_{m}}\frac{\zeta^{2(m-2)}(\beta_{m}^2-\zeta^2)}{\hat{R}_{\beta_{m}}(\zeta)}\d\zeta &
\cdots &
\int_{b_{m-1}}^{a_{m}}\frac{\beta_{m}^2-\zeta^2}{\hat{R}_{\beta_{m}}(\zeta)}\d\zeta\\
\int_{b_{m-2}}^{a_{m-1}}\frac{\zeta^{2m}}{\hat{R}_{\beta_{m}}(\zeta)}\d\zeta & \int_{b_{m-2}}^{a_{m-1}}\frac{\zeta^{2(m-2)}(\beta_{m}^2-\zeta^2)}{\hat{R}_{\beta_{m}}(\zeta)}\d\zeta &
\cdots &
\int_{b_{m-2}}^{a_{m-1}}\frac{\beta_{m}^2-\zeta^2}{\hat{R}_{\beta_{m}}(\zeta)}\d\zeta\\
\cdots&\cdots&\cdots&\cdots\\
\int_{-a_1}^{a_1}\frac{\zeta^{2m}}{\hat{R}_{\beta_{m}}(\zeta)}\d\zeta & \int_{-a_1}^{a_1}\frac{\zeta^{2(m-2)}(\beta_{m}^2-\zeta^2)}{\hat{R}_{\beta_{m}}(\zeta)}\d\zeta &
\cdots &
\int_{-a_1}^{a_1}\frac{\beta_{m}^2-\zeta^2}{\hat{R}_{\beta_{m}}(\zeta)}\d\zeta
\end{pmatrix},
\end{align*}
\begin{align*}
H_2=\begin{pmatrix}
\int_{b_{m-1}}^{a_{m}}\frac{\zeta^{2m}(\beta_{m}^2-\zeta^2)}{\hat{R}_{\beta_{m}}(\zeta)}\d\zeta & \int_{b_{m-1}}^{a_{m}}\frac{\zeta^{2(m-2)}}{\hat{R}_{\beta_{m}}(\zeta)}\d\zeta &
\int_{b_{m-1}}^{a_{m}}\frac{\zeta^{2(m-3)}(\beta_{m}^2-\zeta^2)}{\hat{R}_{\beta_{m}}(\zeta)}\d\zeta &
\cdots &
\int_{b_{m-1}}^{a_{m}}\frac{\beta_{m}^2-\zeta^2}{\hat{R}_{\beta_{m}}(\zeta)}\d\zeta\\
\int_{b_{m-2}}^{a_{m-1}}\frac{\zeta^{2m}(\beta_{m}^2-\zeta^2)}{\hat{R}_{\beta_{m}}(\zeta)}\d\zeta & \int_{b_{m-2}}^{a_{m-1}}\frac{\zeta^{2(m-2)}}{\hat{R}_{\beta_{m}}(\zeta)}\d\zeta &
\int_{b_{m-2}}^{a_{m-1}}\frac{\zeta^{2(m-3)}(\beta_{m}^2-\zeta^2)}{\hat{R}_{\beta_{m}}(\zeta)}\d\zeta &
\cdots &
\int_{b_{m-2}}^{a_{m-1}}\frac{\beta_{m}^2-\zeta^2}{\hat{R}_{\beta_{m}}(\zeta)}\d\zeta\\
\cdots&\cdots&\cdots&\cdots&\cdots\\
\int_{-a_1}^{a_1}\frac{\zeta^{2m}(\beta_{m}^2-\zeta^2)}{\hat{R}_{\beta_{m}}(\zeta)}\d\zeta & \int_{-a_1}^{a_1}\frac{\zeta^{2(m-2)}}{\hat{R}_{\beta_{m}}(\zeta)}\d\zeta &
\int_{-a_1}^{a_{1}}\frac{\zeta^{2(m-3)}(\beta_{m}^2-\zeta^2)}{\hat{R}_{\beta_{m}}(\zeta)}\d\zeta &
\cdots &
\int_{-a_1}^{a_1}\frac{\beta_{m}^2-\zeta^2}{\hat{R}_{\beta_{m}}(\zeta)}\d\zeta
\end{pmatrix},
\end{align*}
\begin{align*}
H_j=\begin{pmatrix}
\int_{b_{m-1}}^{a_{m}}\frac{\zeta^{2m}(\beta_{m}^2-\zeta^2)}{\hat{R}_{\beta_{m}}(\zeta)}\d\zeta & \int_{b_{m-1}}^{a_{m}}\frac{\zeta^{2(m-2)}(\beta_{m}^2-\zeta^2)}{\hat{R}_{\beta_{m}}(\zeta)}\d\zeta &
\cdots &
\int_{b_{m-1}}^{a_{m}}\frac{\zeta^{2(m-j)}}{\hat{R}_{\beta_{m}}(\zeta)}\d\zeta &
\cdots&
\int_{b_{m-1}}^{a_{m}}\frac{\beta_{m}^2-\zeta^2}{\hat{R}_{\beta_{m}}(\zeta)}\d\zeta\\
\int_{b_{m-2}}^{a_{m-1}}\frac{\zeta^{2m}(\beta_{m}^2-\zeta^2)}{\hat{R}_{\beta_{m}}(\zeta)}\d\zeta & \int_{b_{m-2}}^{a_{m-1}}\frac{\zeta^{2(m-2)}(\beta_{m}^2-\zeta^2)}{\hat{R}_{\beta_{m}}(\zeta)}\d\zeta &
\cdots &
\int_{b_{m-2}}^{a_{m-1}}\frac{\zeta^{2(m-j)}}{\hat{R}_{\beta_{m}}(\zeta)}\d\zeta &
\cdots&
\int_{b_{m-2}}^{a_{m-1}}\frac{\beta_{m}^2-\zeta^2}{\hat{R}_{\beta_{m}}(\zeta)}\d\zeta\\
\cdots&\cdots&\cdots&\cdots&\cdots&\cdots\\
\int_{-a_1}^{a_1}\frac{\zeta^{2m}(\beta_{m}^2-\zeta^2)}{\hat{R}_{\beta_{m}}(\zeta)}\d\zeta & \int_{-a_1}^{a_1}\frac{\zeta^{2(m-2)}(\beta_{m}^2-\zeta^2)}{\hat{R}_{\beta_{m}}(\zeta)}\d\zeta &
\cdots &
\int_{-a_1}^{a_{1}}\frac{\zeta^{2(m-j)}}{\hat{R}_{\beta_{m}}(\zeta)}\d\zeta &
\cdots&
\int_{-a_1}^{a_1}\frac{\beta_{m}^2-\zeta^2}{\hat{R}_{\beta_{m}}(\zeta)}\d\zeta
\end{pmatrix},
\end{align*}
for $j=3,4,\ldots,m$.
$\frac{\d (\det G)}{\d {\beta_m}}$ can be expressed as
\begin{gather*}
\frac{\d (\det G)}{\d {\beta_m}}=\beta_m\sum\limits_{j=1}^{m} \det G_j,
\end{gather*}
where $G_j$ are $m\times m$  matrices given by
\begin{align*}
G_1=\begin{pmatrix}
\int_{b_{m-1}}^{a_{m}}\frac{\zeta^{2(m-1)}}{\hat{R}_{\beta_{m}}(\zeta)}\d\zeta & \int_{b_{m-1}}^{a_{m}}\frac{\zeta^{2(m-2)}(\beta_{m}^2-\zeta^2)}{\hat{R}_{\beta_{m}}(\zeta)}\d\zeta &
\cdots &
\int_{b_{m-1}}^{a_{m}}\frac{\beta_{m}^2-\zeta^2}{\hat{R}_{\beta_{m}}(\zeta)}\d\zeta\\
\int_{b_{m-2}}^{a_{m-1}}\frac{\zeta^{2(m-1)}}{\hat{R}_{\beta_{m}}(\zeta)}\d\zeta & \int_{b_{m-2}}^{a_{m-1}}\frac{\zeta^{2(m-2)}(\beta_{m}^2-\zeta^2)}{\hat{R}_{\beta_{m}}(\zeta)}\d\zeta &
\cdots &
\int_{b_{m-2}}^{a_{m-1}}\frac{\beta_{m}^2-\zeta^2}{\hat{R}_{\beta_{m}}(\zeta)}\d\zeta\\
\cdots&\cdots&\cdots&\cdots\\
\int_{-a_1}^{a_1}\frac{\zeta^{2(m-1)}}{\hat{R}_{\beta_{m}}(\zeta)}\d\zeta & \int_{-a_1}^{a_1}\frac{\zeta^{2(m-2)}(\beta_{m}^2-\zeta^2)}{\hat{R}_{\beta_{m}}(\zeta)}\d\zeta &
\cdots &
\int_{-a_1}^{a_1}\frac{\beta_{m}^2-\zeta^2}{\hat{R}_{\beta_{m}}(\zeta)}\d\zeta
\end{pmatrix},
\end{align*}
\begin{align*}
G_2=\begin{pmatrix}
\int_{b_{m-1}}^{a_{m}}\frac{\zeta^{2(m-1)}(\beta_{m}^2-\zeta^2)}{\hat{R}_{\beta_{m}}(\zeta)}\d\zeta & \int_{b_{m-1}}^{a_{m}}\frac{\zeta^{2(m-2)}}{\hat{R}_{\beta_{m}}(\zeta)}\d\zeta &
\int_{b_{m-1}}^{a_{m}}\frac{\zeta^{2(m-3)}(\beta_{m}^2-\zeta^2)}{\hat{R}_{\beta_{m}}(\zeta)}\d\zeta &
\cdots &
\int_{b_{m-1}}^{a_{m}}\frac{\beta_{m}^2-\zeta^2}{\hat{R}_{\beta_{m}}(\zeta)}\d\zeta\\
\int_{b_{m-2}}^{a_{m-1}}\frac{\zeta^{2(m-1)}(\beta_{m}^2-\zeta^2)}{\hat{R}_{\beta_{m}}(\zeta)}\d\zeta & \int_{b_{m-2}}^{a_{m-1}}\frac{\zeta^{2(m-2)}}{\hat{R}_{\beta_{m}}(\zeta)}\d\zeta &
\int_{b_{m-2}}^{a_{m-1}}\frac{\zeta^{2(m-3)}(\beta_{m}^2-\zeta^2)}{\hat{R}_{\beta_{m}}(\zeta)}\d\zeta &
\cdots &
\int_{b_{m-2}}^{a_{m-1}}\frac{\beta_{m}^2-\zeta^2}{\hat{R}_{\beta_{m}}(\zeta)}\d\zeta\\
\cdots&\cdots&\cdots&\cdots&\cdots\\
\int_{-a_1}^{a_1}\frac{\zeta^{2(m-1)}(\beta_{m}^2-\zeta^2)}{\hat{R}_{\beta_{m}}(\zeta)}\d\zeta & \int_{-a_1}^{a_1}\frac{\zeta^{2(m-2)}}{\hat{R}_{\beta_{m}}(\zeta)}\d\zeta &
\int_{-a_1}^{a_{1}}\frac{\zeta^{2(m-3)}(\beta_{m}^2-\zeta^2)}{\hat{R}_{\beta_{m}}(\zeta)}\d\zeta &
\cdots &
\int_{-a_1}^{a_1}\frac{\beta_{m}^2-\zeta^2}{\hat{R}_{\beta_{m}}(\zeta)}\d\zeta
\end{pmatrix},
\end{align*}
\begin{align*}
G_j=\begin{pmatrix}
\int_{b_{m-1}}^{a_{m}}\frac{\zeta^{2(m-1)}(\beta_{m}^2-\zeta^2)}{\hat{R}_{\beta_{m}}(\zeta)}\d\zeta & \int_{b_{m-1}}^{a_{m}}\frac{\zeta^{2(m-2)}(\beta_{m}^2-\zeta^2)}{\hat{R}_{\beta_{m}}(\zeta)}\d\zeta &
\cdots &
\int_{b_{m-1}}^{a_{m}}\frac{\zeta^{2(m-j)}}{\hat{R}_{\beta_{m}}(\zeta)}\d\zeta &
\cdots&
\int_{b_{m-1}}^{a_{m}}\frac{\beta_{m}^2-\zeta^2}{\hat{R}_{\beta_{m}}(\zeta)}\d\zeta\\
\int_{b_{m-2}}^{a_{m-1}}\frac{\zeta^{2(m-1)}(\beta_{m}^2-\zeta^2)}{\hat{R}_{\beta_{m}}(\zeta)}\d\zeta & \int_{b_{m-2}}^{a_{m-1}}\frac{\zeta^{2(m-2)}(\beta_{m}^2-\zeta^2)}{\hat{R}_{\beta_{m}}(\zeta)}\d\zeta &
\cdots &
\int_{b_{m-2}}^{a_{m-1}}\frac{\zeta^{2(m-j)}}{\hat{R}_{\beta_{m}}(\zeta)}\d\zeta &
\cdots&
\int_{b_{m-2}}^{a_{m-1}}\frac{\beta_{m}^2-\zeta^2}{\hat{R}_{\beta_{m}}(\zeta)}\d\zeta\\
\cdots&\cdots&\cdots&\cdots&\cdots&\cdots\\
\int_{-a_1}^{a_1}\frac{\zeta^{2(m-1)}(\beta_{m}^2-\zeta^2)}{\hat{R}_{\beta_{m}}(\zeta)}\d\zeta & \int_{-a_1}^{a_1}\frac{\zeta^{2(m-2)}(\beta_{m}^2-\zeta^2)}{\hat{R}_{\beta_{m}}(\zeta)}\d\zeta &
\cdots &
\int_{-a_1}^{a_{1}}\frac{\zeta^{2(m-j)}}{\hat{R}_{\beta_{m}}(\zeta)}\d\zeta &
\cdots&
\int_{-a_1}^{a_1}\frac{\beta_{m}^2-\zeta^2}{\hat{R}_{\beta_{m}}(\zeta)}\d\zeta
\end{pmatrix},
\end{align*}
for $j=3,4,\ldots,m$.
We find that
\begin{align*}
&(\det G)^2+\det H_1 \det G- \det G_1 \det H\\
&=\det \begin{pmatrix}
\int_{b_{m-1}}^{a_{m}}\frac{\zeta^{2(m-1)}(\beta_{m}^2-\zeta^2)^2}{\hat{R}_{\beta_{m}}(\zeta)}\d\zeta & \int_{b_{m-1}}^{a_{m}}\frac{\zeta^{2(m-2)}(\beta_{m}^2-\zeta^2)}{\hat{R}_{\beta_{m}}(\zeta)}\d\zeta &
\cdots &
\int_{b_{m-1}}^{a_{m}}\frac{\beta_{m}^2-\zeta^2}{\hat{R}_{\beta_{m}}(\zeta)}\d\zeta\\
\int_{b_{m-2}}^{a_{m-1}}\frac{\zeta^{2(m-1)}(\beta_{m}^2-\zeta^2)^2}{\hat{R}_{\beta_{m}}(\zeta)}\d\zeta & \int_{b_{m-2}}^{a_{m-1}}\frac{\zeta^{2(m-2)}(\beta_{m}^2-\zeta^2)}{\hat{R}_{\beta_{m}}(\zeta)}\d\zeta &
\cdots &
\int_{b_{m-2}}^{a_{m-1}}\frac{\beta_{m}^2-\zeta^2}{\hat{R}_{\beta_{m}}(\zeta)}\d\zeta\\
\cdots&\cdots&\cdots&\cdots\\
\int_{-a_1}^{a_1}\frac{\zeta^{2(m-1)}(\beta_{m}^2-\zeta^2)^2}{\hat{R}_{\beta_{m}}(\zeta)}\d\zeta & \int_{-a_1}^{a_1}\frac{\zeta^{2(m-2)}(\beta_{m}^2-\zeta^2)}{\hat{R}_{\beta_{m}}(\zeta)}\d\zeta &
\cdots &
\int_{-a_1}^{a_1}\frac{\beta_{m}^2-\zeta^2}{\hat{R}_{\beta_{m}}(\zeta)}\d\zeta
\end{pmatrix}
\\ &\times
\det \begin{pmatrix}
\int_{b_{m-1}}^{a_{m}}\frac{\zeta^{2(m-1)}}{\hat{R}_{\beta_{m}}(\zeta)}\d\zeta & \int_{b_{m-1}}^{a_{m}}\frac{\zeta^{2(m-2)}(\beta_{m}^2-\zeta^2)}{\hat{R}_{\beta_{m}}(\zeta)}\d\zeta &
\cdots &
\int_{b_{m-1}}^{a_{m}}\frac{\beta_{m}^2-\zeta^2}{\hat{R}_{\beta_{m}}(\zeta)}\d\zeta\\
\int_{b_{m-2}}^{a_{m-1}}\frac{\zeta^{2(m-1)}}{\hat{R}_{\beta_{m}}(\zeta)}\d\zeta & \int_{b_{m-2}}^{a_{m-1}}\frac{\zeta^{2(m-2)}(\beta_{m}^2-\zeta^2)}{\hat{R}_{\beta_{m}}(\zeta)}\d\zeta &
\cdots &
\int_{b_{m-2}}^{a_{m-1}}\frac{\beta_{m}^2-\zeta^2}{\hat{R}_{\beta_{m}}(\zeta)}\d\zeta\\
\cdots&\cdots&\cdots&\cdots\\
\int_{-a_1}^{a_1}\frac{\zeta^{2(m-1)}}{\hat{R}_{\beta_{m}}(\zeta)}\d\zeta & \int_{-a_1}^{a_1}\frac{\zeta^{2(m-2)}(\beta_{m}^2-\zeta^2)}{\hat{R}_{\beta_{m}}(\zeta)}\d\zeta &
\cdots &
\int_{-a_1}^{a_1}\frac{\beta_{m}^2-\zeta^2}{\hat{R}_{\beta_{m}}(\zeta)}\d\zeta
\end{pmatrix}.
\end{align*}
Using Laplace's theorem and computing the determinant of  $G$, $H$, $H_2$ and $G_2$ by expanding their first two columns yield
\begin{align*}
&\det H_2 \det G- \det G_2 \det H\\
&=\det\begin{pmatrix}
\int_{b_{m-1}}^{a_{m}}\frac{\zeta^{2m}(\beta_{m}^2-\zeta^2)}{\hat{R}_{\beta_{m}}(\zeta)}\d\zeta & \int_{b_{m-1}}^{a_{m}}\frac{\zeta^{2(m-1)}(\beta_{m}^2-\zeta^2)}{\hat{R}_{\beta_{m}}(\zeta)}\d\zeta &
\int_{b_{m-1}}^{a_{m}}\frac{\zeta^{2(m-3)}(\beta_{m}^2-\zeta^2)}{\hat{R}_{\beta_{m}}(\zeta)}\d\zeta &
\cdots &
\int_{b_{m-1}}^{a_{m}}\frac{\beta_{m}^2-\zeta^2}{\hat{R}_{\beta_{m}}(\zeta)}\d\zeta\\
\int_{b_{m-2}}^{a_{m-1}}\frac{\zeta^{2m}(\beta_{m}^2-\zeta^2)}{\hat{R}_{\beta_{m}}(\zeta)}\d\zeta & \int_{b_{m-2}}^{a_{m-1}}\frac{\zeta^{2(m-1)}(\beta_{m}^2-\zeta^2)}{\hat{R}_{\beta_{m}}(\zeta)}\d\zeta &
\int_{b_{m-2}}^{a_{m-1}}\frac{\zeta^{2(m-3)}(\beta_{m}^2-\zeta^2)}{\hat{R}_{\beta_{m}}(\zeta)}\d\zeta &
\cdots &
\int_{b_{m-2}}^{a_{m-1}}\frac{\beta_{m}^2-\zeta^2}{\hat{R}_{\beta_{m}}(\zeta)}\d\zeta\\
\cdots&\cdots&\cdots&\cdots&\cdots\\
\int_{-a_1}^{a_1}\frac{\zeta^{2m}(\beta_{m}^2-\zeta^2)}{\hat{R}_{\beta_{m}}(\zeta)}\d\zeta & \int_{-a_1}^{a_1}\frac{\zeta^{2(m-1)}(\beta_{m}^2-\zeta^2)}{\hat{R}_{\beta_{m}}(\zeta)}\d\zeta &
\int_{-a_1}^{a_{1}}\frac{\zeta^{2(m-3)}(\beta_{m}^2-\zeta^2)}{\hat{R}_{\beta_{m}}(\zeta)}\d\zeta &
\cdots &
\int_{-a_1}^{a_1}\frac{\beta_{m}^2-\zeta^2}{\hat{R}_{\beta_{m}}(\zeta)}\d\zeta
\end{pmatrix}\\
&\times \det \begin{pmatrix}
\int_{b_{m-1}}^{a_{m}}\frac{-\zeta^{2(m-2)}(\beta_{m}^2-\zeta^2)}{\hat{R}_{\beta_{m}}(\zeta)}\d\zeta & \int_{b_{m-1}}^{a_{m}}\frac{\zeta^{2(m-2)}}{\hat{R}_{\beta_{m}}(\zeta)}\d\zeta &
\int_{b_{m-1}}^{a_{m}}\frac{\zeta^{2(m-3)}(\beta_{m}^2-\zeta^2)}{\hat{R}_{\beta_{m}}(\zeta)}\d\zeta &
\cdots &
\int_{b_{m-1}}^{a_{m}}\frac{\beta_{m}^2-\zeta^2}{\hat{R}_{\beta_{m}}(\zeta)}\d\zeta\\
\int_{b_{m-2}}^{a_{m-1}}\frac{-\zeta^{2(m-2)}(\beta_{m}^2-\zeta^2)}{\hat{R}_{\beta_{m}}(\zeta)}\d\zeta & \int_{b_{m-2}}^{a_{m-1}}\frac{\zeta^{2(m-2)}}{\hat{R}_{\beta_{m}}(\zeta)}\d\zeta &
\int_{b_{m-2}}^{a_{m-1}}\frac{\zeta^{2(m-3)}(\beta_{m}^2-\zeta^2)}{\hat{R}_{\beta_{m}}(\zeta)}\d\zeta &
\cdots &
\int_{b_{m-2}}^{a_{m-1}}\frac{\beta_{m}^2-\zeta^2}{\hat{R}_{\beta_{m}}(\zeta)}\d\zeta\\
\cdots&\cdots&\cdots&\cdots&\cdots\\
\int_{-a_1}^{a_1}\frac{-\zeta^{2(m-2)}(\beta_{m}^2-\zeta^2)}{\hat{R}_{\beta_{m}}(\zeta)}\d\zeta & \int_{-a_1}^{a_1}\frac{\zeta^{2(m-2)}}{\hat{R}_{\beta_{m}}(\zeta)}\d\zeta &
\int_{-a_1}^{a_{1}}\frac{\zeta^{2(m-3)}(\beta_{m}^2-\zeta^2)}{\hat{R}_{\beta_{m}}(\zeta)}\d\zeta &
\cdots &
\int_{-a_1}^{a_1}\frac{\beta_{m}^2-\zeta^2}{\hat{R}_{\beta_{m}}(\zeta)}\d\zeta
\end{pmatrix}.
\end{align*}
As for $\det H_j \det G- \det G_j \det H$, $j=3,\ldots,m$, we can use the column transformation to shift the j-th column of $G$, $H$, $H_j$ and $G_j$  to the second column, transforming it to the above form.
Then we obtain
\begin{equation}
\dfrac{\d\xi(\beta_{m})}{\d{\beta_{m}}}=
12\beta_{m}\frac{(\det G)^2+\det G \sum\limits_{j=1}^{m}\det H_j- \det H \sum\limits_{j=1}^m\det G_j }{(\det G)^2}=
12\beta_{m}\frac{\det \tilde{M} \det\tilde{N}}{(\det G)^2},
\end{equation}
where $\tilde{M}$ and $\tilde{N}$ are $m\times m$  matrices by
\begin{equation}
\begin{aligned}
\begin{array}{l}
\tilde{M}=\begin{pmatrix}
\int_{b_{m-1}}^{a_{m}}\frac{\zeta^{2(m-1)}}{\hat{R}_{\beta_{m}}(\zeta)}\d\zeta & \int_{b_{m-1}}^{a_{m}}\frac{\zeta^{2(m-2)}}{\hat{R}_{\beta_{m}}(\zeta)}\d\zeta &
\cdots &
\int_{b_{m-1}}^{a_{m}}\frac{1}{\hat{R}_{\beta_{m}}(\zeta)}\d\zeta\\
\int_{b_{m-2}}^{a_{m-1}}\frac{\zeta^{2(m-1)}}{\hat{R}_{\beta_{m}}(\zeta)}\d\zeta & \int_{b_{m-2}}^{a_{m-1}}\frac{\zeta^{2(m-2)}}{\hat{R}_{\beta_{m}}(\zeta)}\d\zeta &
\cdots &
\int_{b_{m-2}}^{a_{m-1}}\frac{1}{\hat{R}_{\beta_{m}}(\zeta)}\d\zeta\\
\cdots&\cdots&\cdots&\cdots\\
\int_{-a_1}^{a_1}\frac{\zeta^{2(m-1)}}{\hat{R}_{\beta_{m}}(\zeta)}\d\zeta & \int_{-a_1}^{a_1}\frac{\zeta^{2(m-2)}}{\hat{R}_{\beta_{m}}(\zeta)}\d\zeta &
\cdots &
\int_{-a_1}^{a_1}\frac{1}{\hat{R}_{\beta_{m}}(\zeta)}\d\zeta
\end{pmatrix},
\end{array}
\end{aligned}
\end{equation}
and
\begin{equation}
\begin{aligned}
\begin{array}{l}
\tilde{N}=\begin{pmatrix}
\int_{b_{m-1}}^{a_{m}}\frac{\zeta^{2(m-1)}(\beta_{m}^2-\zeta^2)^2}{\hat{R}_{\beta_{m}}(\zeta)}\d\zeta & \int_{b_{m-1}}^{a_{m}}\frac{\zeta^{2(m-2)}(\beta_{m}^2-\zeta^2)^2}{\hat{R}_{\beta_{m}}(\zeta)}\d\zeta &
\cdots &
\int_{b_{m-1}}^{a_{m}}\frac{(\beta_{m}^2-\zeta^2)^2}{\hat{R}_{\beta_{m}}(\zeta)}\d\zeta\\
\int_{b_{m-2}}^{a_{m-1}}\frac{\zeta^{2(m-1)}(\beta_{m}^2-\zeta^2)^2}{\hat{R}_{\beta_{m}}(\zeta)}\d\zeta & \int_{b_{m-2}}^{a_{m-1}}\frac{\zeta^{2(m-2)}(\beta_{m}^2-\zeta^2)^2}{\hat{R}_{\beta_{m}}(\zeta)}\d\zeta &
\cdots &
\int_{b_{m-2}}^{a_{m-1}}\frac{(\beta_{m}^2-\zeta^2)^2}{\hat{R}_{\beta_{m}}(\zeta)}\d\zeta\\
\cdots&\cdots&\cdots&\cdots\\
\int_{-a_1}^{a_1}\frac{\zeta^{2(m-1)}(\beta_{m}^2-\zeta^2)^2}{\hat{R}_{\beta_{m}}(\zeta)}\d\zeta & \int_{-a_1}^{a_1}\frac{\zeta^{2(m-2)}(\beta_{m}^2-\zeta^2)^2}{\hat{R}_{\beta_{m}}(\zeta)}\d\zeta &
\cdots &
\int_{-a_1}^{a_1}\frac{(\beta_{m}^2-\zeta^2)^2}{\hat{R}_{\beta_{m}}(\zeta)}\d\zeta
\end{pmatrix}.
\end{array}
\end{aligned}
\end{equation}
Next we prove that the determinants of $\tilde{M}$ and $\tilde{N}$ are both positive. In fact, their determinants can be written in the following form
\begin{align}
\det \tilde{M}=\int_{-a_1}^{a_1}\d\zeta_1\int_{b_1}^{a_2}\d\zeta_2\cdots\int_{b_{m-1}}^{a_m}
\frac{\prod\limits_{1\leq l<j\leq m}(\zeta_{j}^2-\zeta_{l}^{2})}{\prod\limits_{j=1}^{m}\hat{R}_{\beta_{m}}(\zeta_j)}\d \zeta_{m},
\end{align}
\begin{align}
\det \tilde{N}=\int_{-a_1}^{a_1}\d\zeta_1\int_{b_1}^{a_2}\d\zeta_2\cdots\int_{b_{m-1}}^{a_m}
\le(\prod\limits_{1\leq l<j\leq m}(\zeta_{j}^2-\zeta_{l}^{2})\ri)
\prod\limits_{j=1}^{m}\frac{(\beta_{m}^2-\zeta_{j}^2)^2}{\hat{R}_{\beta_{m}}(\zeta_j)}\d \zeta_{m}.
\end{align}
Since $\zeta_1<\zeta_2<\cdots<\zeta_m$,
we obtain that $\frac{\d \xi(\beta_{m})}{\d \beta_{m}}>0$ for ${\beta_{m}}\in [a_m, b_m]$. Hence, by using the implicit function theorem, equation \eqref{alpha} defines ${\beta_{m}}$ as a  monotonically increasing function of $\xi$ for $\xi\in [\xi_{2m-2},\xi_{2m-1}]$. Here, where $\xi_{2m-2}$ and $\xi_{2m-1}$ are defined by
\begin{align}\label{xi2}
\xi_{2m-2}=12\le(-\frac{1}{2}\sum\limits_{l=1}^{m-1}(a_l^2+b_l^2)+\frac{\det H_{a_m}}{\det G_{a_m}} \ri),
\end{align}
\begin{align}\label{xi3}
\xi_{2m-1}=12\le(-\frac{1}{2}\sum\limits_{l=1}^{m-1}(a_l^2+b_l^2)-\frac{1}{2}a_m^2+\frac{1}{2}b_m^2+\frac{\det H_{b_m}}{\det G_{b_m}} \ri),
\end{align}
where \begin{align}
H_{a_{m}}=\begin{pmatrix}
\int_{b_{m-1}}^{a_{m}}\frac{\zeta^{2m}(a_{m}^2-\zeta^2)}{\hat{R}_{a_{m}}(\zeta)}\d\zeta & \int_{b_{m-1}}^{a_{m}}\frac{\zeta^{2(m-2)}(a_{m}^2-\zeta^2)}{\hat{R}_{a_{m}}(\zeta)}\d\zeta &
\cdots &
\int_{b_{m-1}}^{a_{m}}\frac{(a_{m}^2-\zeta^2)}{\hat{R}_{a_{m}}(\zeta)}\d\zeta\\
\int_{b_{m-2}}^{a_{m-1}}\frac{\zeta^{2m}(a_{m}^2-\zeta^2)}{\hat{R}_{a_{m}}(\zeta)}\d\zeta & \int_{b_{m-2}}^{a_{m-1}}\frac{\zeta^{2(m-2)}(a_{m}^2-\zeta^2)}{\hat{R}_{a_{m}}(\zeta)}\d\zeta &
\cdots &
\int_{b_{m-2}}^{a_{m-1}}\frac{(a_{m}^2-\zeta^2)}{\hat{R}_{a_{m}}(\zeta)}\d\zeta\\
\cdots&\cdots&\cdots&\cdots\\
\int_{-a_1}^{a_1}\frac{\zeta^{2m}(a_{m}^2-\zeta^2)}{\hat{R}_{a_{m}}(\zeta)}\d\zeta & \int_{-a_1}^{a_1}\frac{\zeta^{2(m-2)}(a_{m}^2-\zeta^2)}{\hat{R}_{a_{m}}(\zeta)}\d\zeta &
\cdots &
\int_{-a_1}^{a_1}\frac{(a_{m}^2-\zeta^2)}{\hat{R}_{a_{m}}(\zeta)}\d\zeta
\end{pmatrix},
\end{align}
\begin{align}
G_{a_m}=\begin{pmatrix}
\int_{b_{m-1}}^{a_{m}}\frac{\zeta^{2(m-1)}(a_{m}^2-\zeta^2)}{\hat{R}_{a_{m}}(\zeta)}\d\zeta & \int_{b_{m-1}}^{a_{m}}\frac{\zeta^{2(m-2)}(a_{m}^2-\zeta^2)}{\hat{R}_{a_{m}}(\zeta)}\d\zeta &
\cdots &
\int_{b_{m-1}}^{a_{m}}\frac{(a_{m}^2-\zeta^2)}{\hat{R}_{a_{m}}(\zeta)}\d\zeta\\
\int_{b_{m-2}}^{a_{m-1}}\frac{\zeta^{2(m-1)}(a_{m}^2-\zeta^2)}{\hat{R}_{a_{m}}(\zeta)}\d\zeta & \int_{b_{m-2}}^{a_{m-1}}\frac{\zeta^{2(m-2)}(a_{m}^2-\zeta^2)}{\hat{R}_{a_{m}}(\zeta)}\d\zeta &
\cdots &
\int_{b_{m-2}}^{a_{m-1}}\frac{(a_{m}^2-\zeta^2)}{\hat{R}_{a_{m}}(\zeta)}\d\zeta\\
\cdots&\cdots&\cdots&\cdots\\
\int_{-a_1}^{a_1}\frac{\zeta^{2(m-1)}(a_{m}^2-\zeta^2)}{\hat{R}_{a_{m}}(\zeta)}\d\zeta & \int_{-a_1}^{a_1}\frac{\zeta^{2(m-2)}(a_{m}^2-\zeta^2)}{\hat{R}_{a_{m}}(\zeta)}\d\zeta &
\cdots &
\int_{-a_1}^{a_1}\frac{(a_{m}^2-\zeta^2)}{\hat{R}_{a_{m}}(\zeta)}\d\zeta
\end{pmatrix},
\end{align}
with $\hat{R}_{a_{m}}(\zeta)=\sqrt{(\zeta^2-a_m^2)(\zeta^2-a_{m}^2)\prod\limits_{j=1}^{m-1}(\zeta^2-a_j^2)(\zeta^2-b_j^2)}$, and
\begin{align}
H_{b_{m}}=\begin{pmatrix}
\int_{b_{m-1}}^{a_{m}}\frac{\zeta^{2m}(b_{m}^2-\zeta^2)}{\hat{R}_{b_{m}}(\zeta)}\d\zeta & \int_{b_{m-1}}^{a_{m}}\frac{\zeta^{2(m-2)}(b_{m}^2-\zeta^2)}{\hat{R}_{b_{m}}(\zeta)}\d\zeta &
\cdots &
\int_{b_{m-1}}^{a_{m}}\frac{(b_{m}^2-\zeta^2)}{\hat{R}_{b_{m}}(\zeta)}\d\zeta\\
\int_{b_{m-2}}^{a_{m-1}}\frac{\zeta^{2m}(b_{m}^2-\zeta^2)}{\hat{R}_{b_{m}}(\zeta)}\d\zeta & \int_{b_{m-2}}^{a_{m-1}}\frac{\zeta^{2(m-2)}(b_{m}^2-\zeta^2)}{\hat{R}_{b_{m}}(\zeta)}\d\zeta &
\cdots &
\int_{b_{m-2}}^{a_{m-1}}\frac{(b_{m}^2-\zeta^2)}{\hat{R}_{b_{m}}(\zeta)}\d\zeta\\
\cdots&\cdots&\cdots&\cdots\\
\int_{-a_1}^{a_1}\frac{\zeta^{2m}(b_{m}^2-\zeta^2)}{\hat{R}_{b_{m}}(\zeta)}\d\zeta & \int_{-a_1}^{a_1}\frac{\zeta^{2(m-2)}(b_{m}^2-\zeta^2)}{\hat{R}_{b_{m}}(\zeta)}\d\zeta &
\cdots &
\int_{-a_1}^{a_1}\frac{(b_{m}^2-\zeta^2)}{\hat{R}_{b_{m}}(\zeta)}\d\zeta
\end{pmatrix},
\end{align}
\begin{align}
G_{b_m}=\begin{pmatrix}
\int_{b_{m-1}}^{a_{m}}\frac{\zeta^{2(m-1)}(b_{m}^2-\zeta^2)}{\hat{R}_{b_{m}}(\zeta)}\d\zeta & \int_{b_{m-1}}^{a_{m}}\frac{\zeta^{2(m-2)}(b_{m}^2-\zeta^2)}{\hat{R}_{b_{m}}(\zeta)}\d\zeta &
\cdots &
\int_{b_{m-1}}^{a_{m}}\frac{(b_{m}^2-\zeta^2)}{\hat{R}_{b_{m}}(\zeta)}\d\zeta\\
\int_{b_{m-2}}^{a_{m-1}}\frac{\zeta^{2(m-1)}(b_{m}^2-\zeta^2)}{\hat{R}_{b_{m}}(\zeta)}\d\zeta & \int_{b_{m-2}}^{a_{m-1}}\frac{\zeta^{2(m-2)}(b_{m}^2-\zeta^2)}{\hat{R}_{b_{m}}(\zeta)}\d\zeta &
\cdots &
\int_{b_{m-2}}^{a_{m-1}}\frac{(b_{m}^2-\zeta^2)}{\hat{R}_{b_{m}}(\zeta)}\d\zeta\\
\cdots&\cdots&\cdots&\cdots\\
\int_{-a_1}^{a_1}\frac{\zeta^{2(m-1)}(b_{m}^2-\zeta^2)}{\hat{R}_{b_{m}}(\zeta)}\d\zeta & \int_{-a_1}^{a_1}\frac{\zeta^{2(m-2)}(b_{m}^2-\zeta^2)}{\hat{R}_{b_{m}}(\zeta)}\d\zeta &
\cdots &
\int_{-a_1}^{a_1}\frac{(b_{m}^2-\zeta^2)}{\hat{R}_{b_{m}}(\zeta)}\d\zeta
\end{pmatrix},
\end{align}
with $\hat{R}_{b_{m}}(\zeta)=\sqrt{
\prod\limits_{j=1}^{m}(\zeta^2-a_j^2)(\zeta^2-b_j^2)}$ .
\end{proof}
Note that for the case $m=1$, the formula \eqref{xi3} for $\xi_{2m-1}$ still holds.
\begin{lemma}
$\xi_{2m-1}<\xi_{2m}$, for $m=1,2,\ldots,n$.
\end{lemma}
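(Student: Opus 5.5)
The plan is to interpolate between $\xi_{2m-1}$ and $\xi_{2m}$ with a single rational function of one real variable on the gap $[b_m,a_{m+1}]$, and then show that this function is increasing, by repeating the determinant argument used in the proof of the preceding lemma. For $m=n$ I read $\xi_{2n}$ as $+\infty$, since no further sector exists; the statement is then trivial, so assume $m<n$.

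\emph{Step 1: reduce both endpoints to one function.} Take the genus $2m-1$ configuration with $\beta_m=b_m$ and write
\begin{equation*}
g_{b_m}'(k)-x-12k^2t=-t\,\frac{P_\xi(k)}{R_{b_m}(k)},\qquad P_\xi(k)=\xi\,p(k)+12\,q(k),
\end{equation*}
where $p(k)=k^{2m}+\sum_j\tilde c_jk^{2(m-j)}$ and $q(k)=k^{2(m+1)}+\tfrac12\sum_{l\le m}(a_l^2+b_l^2)k^{2m}+\sum_j\hat c_jk^{2(m-j)}$, with $\tilde c_j,\hat c_j$ evaluated at $\beta_m=b_m$. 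Since $P_\xi$ is affine in $\xi$, for $s\in[b_m,a_{m+1}]$ I define
\begin{equation*}
\Xi(s)=-12\,\frac{q(is)}{p(is)}.
\end{equation*}
\begin{itemize}
\item By \eqref{alpha} at $\beta_m=b_m$ together with \eqref{xi3}, $\Xi(b_m)=\xi_{2m-1}$.
\item For $\xi_{2m}$, given by \eqref{xi2} with $m$ replaced by $m+1$, I will check that $R_{\beta_{m+1}}$ at $\beta_{m+1}=a_{m+1}$ equals $(k^2+a_{m+1}^2)R_{b_m}(k)$. The factor $(k^2+a_{m+1}^2)$ then cancels in the normalization integrals \eqref{tildec1}--\eqref{hatc2}, as is visible from the weights $(a_{m+1}^2-\zeta^2)/\hat R_{a_{m+1}}$ in $H_{a_{m+1}},G_{a_{m+1}}$. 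This should identify $\xi_{2m}$ with the value at which $P_\xi$ vanishes at $ia_{m+1}$, i.e. $\Xi(a_{m+1})=\xi_{2m}$.
\end{itemize}
This identification is a bookkeeping step. The genuine danger is that the constants for genus $2m+1$, once degenerated, contain one more normalization condition, namely over $[ib_m,ia_{m+1}]$. I will verify that this extra condition is exactly the statement $P_\xi(ia_{m+1})=0$ after integrating against the collapsed factor.

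\emph{Step 2: determinant form of $\Xi$.} Exactly as in the proof of the previous lemma, Cramer's rule gives
\begin{equation*}
\Xi(s)=12\Bigl(-\tfrac12\textstyle\sum_{l\le m}(a_l^2+b_l^2)+\tfrac12 s^2+\dfrac{\det H(s)}{\det G(s)}\Bigr),
\end{equation*}
where $H(s),G(s)$ are the matrices $H,G$ with $\beta_m^2-\zeta^2$ replaced by $s^2-\zeta^2$ and $\hat R_{\beta_m}$ replaced by the fixed $\hat R_{b_m}$. Now $\hat R_{b_m}$ does not depend on $s$, so differentiating a column only produces $2s$ times the column with the factor $(s^2-\zeta^2)$ removed. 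This is simpler than the earlier lemma, where $\hat R_\beta$ also depended on $\beta$. The same Laplace-expansion and column-shift identities then give
\begin{equation*}
\Xi'(s)=12\,s\,\frac{\det\tilde M\,\det\tilde N(s)}{(\det G(s))^2},
\end{equation*}
with $\tilde M,\tilde N$ as before, built from the weight $1/\hat R_{b_m}$ and, for $\tilde N$, the extra factor $(s^2-\zeta^2)^2$.

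\emph{Step 3: positivity.} Writing $\det\tilde M$ and $\det\tilde N$ as $m$-fold integrals with the Vandermonde factor $\prod_{l<j}(\zeta_j^2-\zeta_l^2)$ over the ordered gaps, exactly as in the previous lemma, gives integrands of constant sign, because $(s^2-\zeta^2)^2\ge0$ and $1/\hat R_{b_m}$ keeps a fixed sign on each gap. The sign conventions must be such that $\det\tilde M\det\tilde N>0$, as the earlier lemma asserts for the same structure. This gives $\Xi'(s)>0$ on $(b_m,a_{m+1})$, hence $\xi_{2m-1}=\Xi(b_m)<\Xi(a_{m+1})=\xi_{2m}$. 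The case $m=1$ needs a separate check: there the gap sits between $b_1$ and $a_2$, and the matrices are $1\times1$ with only the $[-a_1,a_1]$ integral. I expect it to follow directly from the same formula, matching the remark that \eqref{xi3} still holds for $m=1$.

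I expect the main obstacle to be Step 1: matching the degenerate genus-$(2m+1)$ constants $\tilde c_j,\hat c_j$ at $\beta_{m+1}=a_{m+1}$ with the genus-$(2m-1)$ polynomial evaluated at $ia_{m+1}$. A secondary point is that $\det G(s)\neq0$ on the whole gap, which I would obtain by a Vandermonde-integral representation like the one used for $\tilde M$.
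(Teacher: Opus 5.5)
Your Step 1 identification $\Xi(a_{m+1})=\xi_{2m}$ is false, and this is exactly the ``genuine danger'' you flagged. In \eqref{xi2} with $m$ replaced by $m+1$, the weight $(a_{m+1}^2-\zeta^2)/\hat R_{a_{m+1}}(\zeta)$ is just $\pm1/\hat R_{b_m}(\zeta)$, which is regular at $\zeta=a_{m+1}$. The $(m+1)\times(m+1)$ Cramer system behind $\det H_{a_{m+1}}/\det G_{a_{m+1}}$ then says the following. Within your one-parameter family $P_\xi=\xi p+12q$, all of whose members are already orthogonal to $1/R_{b_m}$ on the first $m$ gaps, $\xi_{2m}$ is the value for which $P_\xi$ is in addition orthogonal on the whole gap: $\int_{ib_m}^{ia_{m+1}}P_{\xi_{2m}}(\zeta)R_{b_m}(\zeta)^{-1}\,\mathrm{d}\zeta=0$. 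That is an integral condition, not $P_{\xi_{2m}}(ia_{m+1})=0$. Since $R_{b_m}(iy)$ has constant sign for $y\in(b_m,a_{m+1}]$, it forces a zero $P_{\xi_{2m}}(is^*)=0$ with $s^*$ strictly inside $(b_m,a_{m+1})$; this is the point $ik_{m+1}$ of the degenerate genus-$(2m+1)$ phase. Because $\Xi$ is increasing, $\Xi(a_{m+1})>\xi_{2m}$, so the verification you planned cannot succeed.

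The repair uses exactly this observation. All $2m$ zeros of $p$ lie in the first $m$ gaps (one symmetric pair per gap, by orthogonality), so $p(is^*)\neq0$ and $\xi_{2m}=-12q(is^*)/p(is^*)=\Xi(s^*)$. Monotonicity of $\Xi$ on $[b_m,s^*]$ then gives $\xi_{2m-1}=\Xi(b_m)<\Xi(s^*)=\xi_{2m}$.

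There is also a slip in Step 2. Here $\frac12\sum_{l\le m}(a_l^2+b_l^2)$ is frozen at $\beta_m=b_m$ instead of moving with the branch point, so the correct formula is $\Xi(s)=12\bigl(-\frac12\sum_{l\le m}(a_l^2+b_l^2)+s^2+\det H(s)/\det G(s)\bigr)$. With $\frac12 s^2$ you do not even recover $\Xi(b_m)=\xi_{2m-1}$ from \eqref{xi3}, and the derivative does not reduce to $\det\tilde M\det\tilde N$. With $s^2$, and the factor $2s$ from differentiating $(s^2-\zeta^2)/\hat R_{b_m}$, the paper's identity gives $\Xi'(s)=24s\det\tilde M\det\tilde N(s)/(\det G(s))^2$, and your Step 3 goes through.

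Once corrected, your route is valid but different from the paper's. The paper does not interpolate at all: it writes $\xi_{2m}-\xi_{2m-1}$ directly, by Laplace expansions, as the product of an $(m+1)\times(m+1)$ Vandermonde-type determinant over all $m+1$ gaps and an $m\times m$ one. Yours recycles the monotonicity identity of the previous lemma and makes the mechanism visible: the stationary point of $g_{b_m}-\theta$ moves from $ib_m$ up to $is^*$ as $\xi$ runs from $\xi_{2m-1}$ to $\xi_{2m}$.
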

\begin{proof}
It is easy to check that
\begin{align}
\xi_{2m}-\xi_{2m-1}=12(\frac{\det H_{a_{m+1}}}{\det G_{a_{m+1}}}-b_m^2-\frac{\det H_{b_m}}{\det G_{b_m}}),\ m=1,2,\ldots,n,
\end{align}
since $\det G_{a_{m+1}}>0$ and $\det G_{b_m}>0$, we only need to prove that $$\det H_{a_{m+1}}\det G_{b_m}-b_m^2\det G_{a_{m+1}}\det G_{b_m}-\det H_{b_m}\det G_{a_{m+1}}>0.$$
By calculation,
\begin{align*}
&\det H_{a_{m+1}}\det G_{b_m}-b_m^2\det G_{a_{m+1}}\det G_{b_m}-\det H_{b_m}\det G_{a_{m+1}}\\
&=\det\begin{pmatrix}
\int_{b_{m}}^{a_{m+1}}\frac{\zeta^{2m}(\zeta^2-b_{m}^2)}{\hat{R}_{b_{m}}(\zeta)}\d\zeta & \int_{b_{m}}^{a_{m+1}}\frac{\zeta^{2(m-2)}(\zeta^2-b_{m}^2)}{\hat{R}_{b_{m}}(\zeta)}\d\zeta &
\cdots &
\int_{b_{m}}^{a_{m+1}}\frac{(\zeta^2-b_{m}^2)}{\hat{R}_{b_{m}}(\zeta)}\d\zeta
&\int_{b_{m}}^{a_{m+1}}\frac{1}{\hat{R}_{b_{m}}(\zeta)}\d\zeta\\
\int_{b_{m-1}}^{a_{m}}\frac{\zeta^{2m}(\zeta^2-b_{m}^2)}{\hat{R}_{b_{m}}(\zeta)}\d\zeta & \int_{b_{m-1}}^{a_{m}}\frac{\zeta^{2(m-2)}(\zeta^2-b_{m}^2)}{\hat{R}_{b_{m}}(\zeta)}\d\zeta &
\cdots &
\int_{b_{m-1}}^{a_{m}}\frac{(\zeta^2-b_{m}^2)}{\hat{R}_{b_{m}}(\zeta)}\d\zeta
&\int_{b_{m-1}}^{a_{m}}\frac{1}{\hat{R}_{b_{m}}(\zeta)}\d\zeta\\
\cdots&\cdots&\cdots&\cdots\\
\int_{-a_1}^{a_1}\frac{\zeta^{2m}(\zeta^2-b_{m}^2)}{\hat{R}_{b_{m}}(\zeta)}\d\zeta & \int_{-a_1}^{a_1}\frac{\zeta^{2(m-2)}(\zeta^2-b_{m}^2)}{\hat{R}_{b_{m}}(\zeta)}\d\zeta &
\cdots &
\int_{-a_1}^{a_1}\frac{(\zeta^2-b_{m}^2)}{\hat{R}_{b_{m}}(\zeta)}\d\zeta
&\int_{-a_1}^{a_1}\frac{1}{\hat{R}_{b_{m}}(\zeta)}\d\zeta
\end{pmatrix}_{(m+1)\times(m+1)}\\
&\times \det G_{b_m}\\
&-\det\begin{pmatrix}
\int_{b_{m}}^{a_{m+1}}\frac{\zeta^{2(m-1)}(\zeta^2-b_{m}^2)}{\hat{R}_{b_{m}}(\zeta)}\d\zeta & \int_{b_{m}}^{a_{m+1}}\frac{\zeta^{2(m-2)}(\zeta^2-b_{m}^2)}{\hat{R}_{b_{m}}(\zeta)}\d\zeta &
\cdots &
\int_{b_{m}}^{a_{m+1}}\frac{(\zeta^2-b_{m}^2)}{\hat{R}_{b_{m}}(\zeta)}\d\zeta
&\int_{b_{m}}^{a_{m+1}}\frac{1}{\hat{R}_{b_{m}}(\zeta)}\d\zeta\\
\int_{b_{m-1}}^{a_{m}}\frac{\zeta^{2(m-1)}(\zeta^2-b_{m}^2)}{\hat{R}_{b_{m}}(\zeta)}\d\zeta & \int_{b_{m-1}}^{a_{m}}\frac{\zeta^{2(m-2)}(\zeta^2-b_{m}^2)}{\hat{R}_{b_{m}}(\zeta)}\d\zeta &
\cdots &
\int_{b_{m-1}}^{a_{m}}\frac{(\zeta^2-b_{m}^2)}{\hat{R}_{b_{m}}(\zeta)}\d\zeta
&\int_{b_{m-1}}^{a_{m}}\frac{1}{\hat{R}_{b_{m}}(\zeta)}\d\zeta\\
\cdots&\cdots&\cdots&\cdots\\
\int_{-a_1}^{a_1}\frac{\zeta^{2(m-1)}(\zeta^2-b_{m}^2)}{\hat{R}_{b_{m}}(\zeta)}\d\zeta & \int_{-a_1}^{a_1}\frac{\zeta^{2(m-2)}(\zeta^2-b_{m}^2)}{\hat{R}_{b_{m}}(\zeta)}\d\zeta &
\cdots &
\int_{-a_1}^{a_1}\frac{(\zeta^2-b_{m}^2)}{\hat{R}_{b_{m}}(\zeta)}\d\zeta
&\int_{-a_1}^{a_1}\frac{1}{\hat{R}_{b_{m}}(\zeta)}\d\zeta
\end{pmatrix}_{(m+1)\times(m+1)}\\
&\times \det H_{b_{m}}\\
&=\det\begin{pmatrix}
\int_{b_{m}}^{a_{m+1}}\frac{\zeta^{2m}(\zeta^2-b_{m}^2)}{\hat{R}_{b_{m}}(\zeta)}\d\zeta & \int_{b_{m}}^{a_{m+1}}\frac{\zeta^{2(m-1)}(b_{m}^2-\zeta^2)}{\hat{R}_{b_{m}}(\zeta)}\d\zeta &
\cdots &
\int_{b_{m}}^{a_{m+1}}\frac{\zeta^2(b_{m}^2-\zeta^2)}{\hat{R}_{b_{m}}(\zeta)}\d\zeta
&\int_{b_{m}}^{a_{m+1}}\frac{b_{m}^2-\zeta^2}{\hat{R}_{b_{m}}(\zeta)}\d\zeta\\
\int_{b_{m-1}}^{a_{m}}\frac{\zeta^{2m}(\zeta^2-b_{m}^2)}{\hat{R}_{b_{m}}(\zeta)}\d\zeta & \int_{b_{m-1}}^{a_{m}}\frac{\zeta^{2(m-1)}(b_{m}^2-\zeta^2)}{\hat{R}_{b_{m}}(\zeta)}\d\zeta &
\cdots &
\int_{b_{m-1}}^{a_{m}}\frac{\zeta^2(b_{m}^2-\zeta^2)}{\hat{R}_{b_{m}}(\zeta)}\d\zeta
&\int_{b_{m-1}}^{a_{m}}\frac{b_{m}^2-\zeta^2}{\hat{R}_{b_{m}}(\zeta)}\d\zeta\\
\cdots&\cdots&\cdots&\cdots\\
\int_{-a_1}^{a_1}\frac{\zeta^{2m}(\zeta^2-b_{m}^2)}{\hat{R}_{b_{m}}(\zeta)}\d\zeta & \int_{-a_1}^{a_1}\frac{\zeta^{2(m-1)}(b_{m}^2-\zeta^2)}{\hat{R}_{b_{m}}(\zeta)}\d\zeta &
\cdots &
\int_{-a_1}^{a_1}\frac{\zeta^2(b_{m}^2-\zeta^2)}{\hat{R}_{b_{m}}(\zeta)}\d\zeta
&\int_{-a_1}^{a_1}\frac{b_{m}^2-\zeta^2}{\hat{R}_{b_{m}}(\zeta)}\d\zeta
\end{pmatrix}_{(m+1)\times(m+1)}\\
&\times \det\begin{pmatrix}
\int_{b_{m-1}}^{a_{m}}\frac{\zeta^{2(m-1)}}{\hat{R}_{b_{m}}(\zeta)}\d\zeta & \int_{b_{m-1}}^{a_{m}}\frac{\zeta^{2(m-2)}}{\hat{R}_{b_{m}}(\zeta)}\d\zeta &
\cdots &
\int_{b_{m-1}}^{a_{m}}\frac{\zeta^2}{\hat{R}_{b_{m}}(\zeta)}\d\zeta
&\int_{b_{m-1}}^{a_{m}}\frac{1}{\hat{R}_{b_{m}}(\zeta)}\d\zeta
\\
\int_{b_{m-2}}^{a_{m-1}}\frac{\zeta^{2(m-1)}}{\hat{R}_{b_{m}}(\zeta)}\d\zeta & \int_{b_{m-2}}^{a_{m-1}}\frac{\zeta^{2(m-2)}}{\hat{R}_{b_{m}}(\zeta)}\d\zeta &
\cdots &
\int_{b_{m-2}}^{a_{m-1}}\frac{\zeta^2}{\hat{R}_{b_{m}}(\zeta)}\d\zeta
&\int_{b_{m-2}}^{a_{m-1}}\frac{1}{\hat{R}_{b_{m}}(\zeta)}\d\zeta
\\
\cdots&\cdots&\cdots&\cdots\\
\int_{-a_1}^{a_1}\frac{\zeta^{2(m-1)}}{\hat{R}_{b_{m}}(\zeta)}\d\zeta & \int_{-a_1}^{a_1}\frac{\zeta^{2(m-2)}}{\hat{R}_{b_{m}}(\zeta)}\d\zeta &
\cdots &
\int_{-a_1}^{a_1}\frac{\zeta^2}{\hat{R}_{b_{m}}(\zeta)}\d\zeta
&\int_{-a_1}^{a_1}\frac{1}{\hat{R}_{b_{m}}(\zeta)}\d\zeta
\end{pmatrix}_{m\times m}\\
&>0.
\end{align*}
The last equality can be verified by expanding the first row of the $(m+1)\times(m+1)$ determinant. For example, corresponding to the term $\int_{b_{m}}^{a_{m+1}}\frac{\zeta^{2(m-2)}(b_{m}^2-\zeta^2)}{\hat{R}_{b_{m}}(\zeta)}\d\zeta$, using the properties of determinant operations and Laplace's theorem, the following equations hold.
\begin{align*}
&\det\begin{pmatrix}
\int_{b_{m-1}}^{a_{m}}\frac{\zeta^{2m}(\zeta^2-b_{m}^2)}{\hat{R}_{b_{m}}(\zeta)}\d\zeta & \int_{b_{m-1}}^{a_{m}}\frac{\zeta^{2(m-3)}(\zeta^2-b_{m}^2)}{\hat{R}_{b_{m}}(\zeta)}\d\zeta &
\cdots &
\int_{b_{m-1}}^{a_{m}}\frac{(\zeta^2-b_{m}^2)}{\hat{R}_{b_{m}}(\zeta)}\d\zeta
&\int_{b_{m-1}}^{a_{m}}\frac{1}{\hat{R}_{b_{m}}(\zeta)}\d\zeta\\
\int_{b_{m-2}}^{a_{m-1}}\frac{\zeta^{2m}(\zeta^2-b_{m}^2)}{\hat{R}_{b_{m}}(\zeta)}\d\zeta & \int_{b_{m-2}}^{a_{m-1}}\frac{\zeta^{2(m-3)}(\zeta^2-b_{m}^2)}{\hat{R}_{b_{m}}(\zeta)}\d\zeta &
\cdots &
\int_{b_{m-2}}^{a_{m-1}}\frac{(\zeta^2-b_{m}^2)}{\hat{R}_{b_{m}}(\zeta)}\d\zeta
&\int_{b_{m-2}}^{a_{m-1}}\frac{1}{\hat{R}_{b_{m}}(\zeta)}\d\zeta\\
\cdots&\cdots&\cdots&\cdots\\
\int_{-a_1}^{a_1}\frac{\zeta^{2m}(\zeta^2-b_{m}^2)}{\hat{R}_{b_{m}}(\zeta)}\d\zeta & \int_{-a_1}^{a_1}\frac{\zeta^{2(m-3)}(\zeta^2-b_{m}^2)}{\hat{R}_{b_{m}}(\zeta)}\d\zeta &
\cdots &
\int_{-a_1}^{a_1}\frac{(\zeta^2-b_{m}^2)}{\hat{R}_{b_{m}}(\zeta)}\d\zeta
&\int_{-a_1}^{a_1}\frac{1}{\hat{R}_{b_{m}}(\zeta)}\d\zeta
\end{pmatrix}_{m\times m}\\
&\times \det G_{b_m}\\
&-\det\begin{pmatrix}
\int_{b_{m-1}}^{a_{m}}\frac{\zeta^{2(m-1)}(\zeta^2-b_{m}^2)}{\hat{R}_{b_{m}}(\zeta)}\d\zeta & \int_{b_{m-1}}^{a_{m}}\frac{\zeta^{2(m-3)}(\zeta^2-b_{m}^2)}{\hat{R}_{b_{m}}(\zeta)}\d\zeta &
\cdots &
\int_{b_{m-1}}^{a_{m}}\frac{(\zeta^2-b_{m}^2)}{\hat{R}_{b_{m}}(\zeta)}\d\zeta
&\int_{b_{m-1}}^{a_{m}}\frac{1}{\hat{R}_{b_{m}}(\zeta)}\d\zeta\\
\int_{b_{m-2}}^{a_{m-1}}\frac{\zeta^{2(m-1)}(\zeta^2-b_{m}^2)}{\hat{R}_{b_{m}}(\zeta)}\d\zeta & \int_{b_{m-2}}^{a_{m-1}}\frac{\zeta^{2(m-3)}(\zeta^2-b_{m}^2)}{\hat{R}_{b_{m}}(\zeta)}\d\zeta &
\cdots &
\int_{b_{m-2}}^{a_{m-1}}\frac{(\zeta^2-b_{m}^2)}{\hat{R}_{b_{m}}(\zeta)}\d\zeta
&\int_{b_{m-2}}^{a_{m-1}}\frac{1}{\hat{R}_{b_{m}}(\zeta)}\d\zeta\\
\cdots&\cdots&\cdots&\cdots\\
\int_{-a_1}^{a_1}\frac{\zeta^{2(m-1)}(\zeta^2-b_{m}^2)}{\hat{R}_{b_{m}}(\zeta)}\d\zeta & \int_{-a_1}^{a_1}\frac{\zeta^{2(m-3)}(\zeta^2-b_{m}^2)}{\hat{R}_{b_{m}}(\zeta)}\d\zeta &
\cdots &
\int_{-a_1}^{a_1}\frac{(\zeta^2-b_{m}^2)}{\hat{R}_{b_{m}}(\zeta)}\d\zeta
&\int_{-a_1}^{a_1}\frac{1}{\hat{R}_{b_{m}}(\zeta)}\d\zeta
\end{pmatrix}_{m\times m}\\
&\times \det H_{b_m}\\
&=\det\begin{pmatrix}
\int_{b_{m-1}}^{a_{m}}\frac{1}{\hat{R}_{b_{m}}(\zeta)}\d\zeta
&\int_{b_{m-1}}^{a_{m}}\frac{\zeta^{2m}(b_{m}^2-\zeta^2)}{\hat{R}_{b_{m}}(\zeta)}\d\zeta & \int_{b_{m-1}}^{a_{m}}\frac{\zeta^{2(m-3)}(b_{m}^2-\zeta^2)}{\hat{R}_{b_{m}}(\zeta)}\d\zeta &
\cdots &
\int_{b_{m-1}}^{a_{m}}\frac{(b_{m}^2-\zeta^2)}{\hat{R}_{b_{m}}(\zeta)}\d\zeta
\\
\int_{b_{m-2}}^{a_{m-1}}\frac{1}{\hat{R}_{b_{m}}(\zeta)}\d\zeta
&\int_{b_{m-2}}^{a_{m-1}}\frac{\zeta^{2m}(b_{m}^2-\zeta^2)}{\hat{R}_{b_{m}}(\zeta)}\d\zeta & \int_{b_{m-2}}^{a_{m-1}}\frac{\zeta^{2(m-3)}(b_{m}^2-\zeta^2)}{\hat{R}_{b_{m}}(\zeta)}\d\zeta &
\cdots &
\int_{b_{m-2}}^{a_{m-1}}\frac{(b_{m}^2-\zeta^2)}{\hat{R}_{b_{m}}(\zeta)}\d\zeta
\\
\cdots&\cdots&\cdots&\cdots\\
\int_{-a_1}^{a_1}\frac{1}{\hat{R}_{b_{m}}(\zeta)}\d\zeta
&\int_{-a_1}^{a_1}\frac{\zeta^{2m}(b_{m}^2-\zeta^2)}{\hat{R}_{b_{m}}(\zeta)}\d\zeta & \int_{-a_1}^{a_1}\frac{\zeta^{2(m-3)}(b_{m}^2-\zeta^2)}{\hat{R}_{b_{m}}(\zeta)}\d\zeta &
\cdots &
\int_{-a_1}^{a_1}\frac{(b_{m}^2-\zeta^2)}{\hat{R}_{b_{m}}(\zeta)}\d\zeta
\end{pmatrix}_{m\times m}\\
&\times \det G_{b_m}\\
&-\det\begin{pmatrix}
\int_{b_{m-1}}^{a_{m}}\frac{1}{\hat{R}_{b_{m}}(\zeta)}\d\zeta
&\int_{b_{m-1}}^{a_{m}}\frac{\zeta^{2(m-1)}(b_{m}^2-\zeta^2)}{\hat{R}_{b_{m}}(\zeta)}\d\zeta & \int_{b_{m-1}}^{a_{m}}\frac{\zeta^{2(m-3)}(b_{m}^2-\zeta^2)}{\hat{R}_{b_{m}}(\zeta)}\d\zeta &
\cdots &
\int_{b_{m-1}}^{a_{m}}\frac{(b_{m}^2-\zeta^2)}{\hat{R}_{b_{m}}(\zeta)}\d\zeta
\\
\int_{b_{m-2}}^{a_{m-1}}\frac{1}{\hat{R}_{b_{m}}(\zeta)}\d\zeta
&\int_{b_{m-2}}^{a_{m-1}}\frac{\zeta^{2(m-1)}(b_{m}^2-\zeta^2)}{\hat{R}_{b_{m}}(\zeta)}\d\zeta & \int_{b_{m-2}}^{a_{m-1}}\frac{\zeta^{2(m-3)}(b_{m}^2-\zeta^2)}{\hat{R}_{b_{m}}(\zeta)}\d\zeta &
\cdots &
\int_{b_{m-2}}^{a_{m-1}}\frac{(b_{m}^2-\zeta^2)}{\hat{R}_{b_{m}}(\zeta)}\d\zeta
\\
\cdots&\cdots&\cdots&\cdots\\
\int_{-a_1}^{a_1}\frac{1}{\hat{R}_{b_{m}}(\zeta)}\d\zeta
&\int_{-a_1}^{a_1}\frac{\zeta^{2(m-1)}(b_{m}^2-\zeta^2)}{\hat{R}_{b_{m}}(\zeta)}\d\zeta & \int_{-a_1}^{a_1}\frac{\zeta^{2(m-3)}(b_{m}^2-\zeta^2)}{\hat{R}_{b_{m}}(\zeta)}\d\zeta &
\cdots &
\int_{-a_1}^{a_1}\frac{(b_{m}^2-\zeta^2)}{\hat{R}_{b_{m}}(\zeta)}\d\zeta
\end{pmatrix}_{m\times m}\\
&\times \det H_{b_m}\\
&=\det\begin{pmatrix}
\int_{b_{m-1}}^{a_{m}}\frac{1}{\hat{R}_{b_{m}}(\zeta)}\d\zeta
&\int_{b_{m-1}}^{a_{m}}\frac{\zeta^{2(m-2)}(b_{m}^2-\zeta^2)}{\hat{R}_{b_{m}}(\zeta)}\d\zeta & \int_{b_{m-1}}^{a_{m}}\frac{\zeta^{2(m-3)}(b_{m}^2-\zeta^2)}{\hat{R}_{b_{m}}(\zeta)}\d\zeta &
\cdots &
\int_{b_{m-1}}^{a_{m}}\frac{(b_{m}^2-\zeta^2)}{\hat{R}_{b_{m}}(\zeta)}\d\zeta
\\
\int_{b_{m-2}}^{a_{m-1}}\frac{1}{\hat{R}_{b_{m}}(\zeta)}\d\zeta
&\int_{b_{m-2}}^{a_{m-1}}\frac{\zeta^{2(m-2)}(b_{m}^2-\zeta^2)}{\hat{R}_{b_{m}}(\zeta)}\d\zeta & \int_{b_{m-2}}^{a_{m-1}}\frac{\zeta^{2(m-3)}(b_{m}^2-\zeta^2)}{\hat{R}_{b_{m}}(\zeta)}\d\zeta &
\cdots &
\int_{b_{m-2}}^{a_{m-1}}\frac{(b_{m}^2-\zeta^2)}{\hat{R}_{b_{m}}(\zeta)}\d\zeta
\\
\cdots&\cdots&\cdots&\cdots\\
\int_{-a_1}^{a_1}\frac{1}{\hat{R}_{b_{m}}(\zeta)}\d\zeta
&\int_{-a_1}^{a_1}\frac{\zeta^{2(m-2)}(b_{m}^2-\zeta^2)}{\hat{R}_{b_{m}}(\zeta)}\d\zeta & \int_{-a_1}^{a_1}\frac{\zeta^{2(m-3)}(b_{m}^2-\zeta^2)}{\hat{R}_{b_{m}}(\zeta)}\d\zeta &
\cdots &
\int_{-a_1}^{a_1}\frac{(b_{m}^2-\zeta^2)}{\hat{R}_{b_{m}}(\zeta)}\d\zeta
\end{pmatrix}_{m\times m}\\
&\times \det\begin{pmatrix}
\int_{b_{m-1}}^{a_{m}}\frac{\zeta^{2m}(\zeta^2-b_{m}^2)}{\hat{R}_{b_{m}}(\zeta)}\d\zeta
&\int_{b_{m-1}}^{a_{m}}\frac{\zeta^{2(m-1)}(b_{m}^2-\zeta^2)}{\hat{R}_{b_{m}}(\zeta)}\d\zeta & \int_{b_{m-1}}^{a_{m}}\frac{\zeta^{2(m-3)}(b_{m}^2-\zeta^2)}{\hat{R}_{b_{m}}(\zeta)}\d\zeta &
\cdots &
\int_{b_{m-1}}^{a_{m}}\frac{(b_{m}^2-\zeta^2)}{\hat{R}_{b_{m}}(\zeta)}\d\zeta
\\
\int_{b_{m-2}}^{a_{m-1}}\frac{\zeta^{2m}(\zeta^2-b_{m}^2)}{\hat{R}_{b_{m}}(\zeta)}\d\zeta
&\int_{b_{m-2}}^{a_{m-1}}\frac{\zeta^{2(m-1)}(b_{m}^2-\zeta^2)}{\hat{R}_{b_{m}}(\zeta)}\d\zeta & \int_{b_{m-2}}^{a_{m-1}}\frac{\zeta^{2(m-3)}(b_{m}^2-\zeta^2)}{\hat{R}_{b_{m}}(\zeta)}\d\zeta &
\cdots &
\int_{b_{m-2}}^{a_{m-1}}\frac{(b_{m}^2-\zeta^2)}{\hat{R}_{b_{m}}(\zeta)}\d\zeta
\\
\cdots&\cdots&\cdots&\cdots\\
\int_{-a_1}^{a_1}\frac{\zeta^{2m}(\zeta^2-b_{m}^2)}{\hat{R}_{b_{m}}(\zeta)}\d\zeta
&\int_{-a_1}^{a_1}\frac{\zeta^{2(m-1)}(b_{m}^2-\zeta^2)}{\hat{R}_{b_{m}}(\zeta)}\d\zeta & \int_{-a_1}^{a_1}\frac{\zeta^{2(m-3)}(b_{m}^2-\zeta^2)}{\hat{R}_{b_{m}}(\zeta)}\d\zeta &
\cdots &
\int_{-a_1}^{a_1}\frac{(b_{m}^2-\zeta^2)}{\hat{R}_{b_{m}}(\zeta)}\d\zeta
\end{pmatrix}_{m\times m}\\
\end{align*}
\begin{align*}
&=\det\begin{pmatrix}
\int_{b_{m-1}}^{a_{m}}\frac{\zeta^{2(m-1)}}{\hat{R}_{b_{m}}(\zeta)}\d\zeta & \int_{b_{m-1}}^{a_{m}}\frac{\zeta^{2(m-2)}}{\hat{R}_{b_{m}}(\zeta)}\d\zeta &
\cdots &
\int_{b_{m-1}}^{a_{m}}\frac{\zeta^2}{\hat{R}_{b_{m}}(\zeta)}\d\zeta
&\int_{b_{m-1}}^{a_{m}}\frac{1}{\hat{R}_{b_{m}}(\zeta)}\d\zeta
\\
\int_{b_{m-2}}^{a_{m-1}}\frac{\zeta^{2(m-1)}}{\hat{R}_{b_{m}}(\zeta)}\d\zeta & \int_{b_{m-2}}^{a_{m-1}}\frac{\zeta^{2(m-2)}}{\hat{R}_{b_{m}}(\zeta)}\d\zeta &
\cdots &
\int_{b_{m-2}}^{a_{m-1}}\frac{\zeta^2}{\hat{R}_{b_{m}}(\zeta)}\d\zeta
&\int_{b_{m-2}}^{a_{m-1}}\frac{1}{\hat{R}_{b_{m}}(\zeta)}\d\zeta
\\
\cdots&\cdots&\cdots&\cdots\\
\int_{-a_1}^{a_1}\frac{\zeta^{2(m-1)}}{\hat{R}_{b_{m}}(\zeta)}\d\zeta & \int_{-a_1}^{a_1}\frac{\zeta^{2(m-2)}}{\hat{R}_{b_{m}}(\zeta)}\d\zeta &
\cdots &
\int_{-a_1}^{a_1}\frac{\zeta^2}{\hat{R}_{b_{m}}(\zeta)}\d\zeta
&\int_{-a_1}^{a_1}\frac{1}{\hat{R}_{b_{m}}(\zeta)}\d\zeta
\end{pmatrix}_{m\times m}\\
&\times\det\begin{pmatrix}
\int_{b_{m-1}}^{a_{m}}\frac{\zeta^{2m}(\zeta^2-b_{m}^2)}{\hat{R}_{b_{m}}(\zeta)}\d\zeta
&\int_{b_{m-1}}^{a_{m}}\frac{\zeta^{2(m-1)}(b_{m}^2-\zeta^2)}{\hat{R}_{b_{m}}(\zeta)}\d\zeta & \int_{b_{m-1}}^{a_{m}}\frac{\zeta^{2(m-3)}(b_{m}^2-\zeta^2)}{\hat{R}_{b_{m}}(\zeta)}\d\zeta &
\cdots &
\int_{b_{m-1}}^{a_{m}}\frac{(b_{m}^2-\zeta^2)}{\hat{R}_{b_{m}}(\zeta)}\d\zeta
\\
\int_{b_{m-2}}^{a_{m-1}}\frac{\zeta^{2m}(\zeta^2-b_{m}^2)}{\hat{R}_{b_{m}}(\zeta)}\d\zeta
&\int_{b_{m-2}}^{a_{m-1}}\frac{\zeta^{2(m-1)}(b_{m}^2-\zeta^2)}{\hat{R}_{b_{m}}(\zeta)}\d\zeta & \int_{b_{m-2}}^{a_{m-1}}\frac{\zeta^{2(m-3)}(b_{m}^2-\zeta^2)}{\hat{R}_{b_{m}}(\zeta)}\d\zeta &
\cdots &
\int_{b_{m-2}}^{a_{m-1}}\frac{(b_{m}^2-\zeta^2)}{\hat{R}_{b_{m}}(\zeta)}\d\zeta
\\
\cdots&\cdots&\cdots&\cdots\\
\int_{-a_1}^{a_1}\frac{\zeta^{2m}(\zeta^2-b_{m}^2)}{\hat{R}_{b_{m}}(\zeta)}\d\zeta
&\int_{-a_1}^{a_1}\frac{\zeta^{2(m-1)}(b_{m}^2-\zeta^2)}{\hat{R}_{b_{m}}(\zeta)}\d\zeta & \int_{-a_1}^{a_1}\frac{\zeta^{2(m-3)}(b_{m}^2-\zeta^2)}{\hat{R}_{b_{m}}(\zeta)}\d\zeta &
\cdots &
\int_{-a_1}^{a_1}\frac{(b_{m}^2-\zeta^2)}{\hat{R}_{b_{m}}(\zeta)}\d\zeta
\end{pmatrix}_{m\times m}.
\end{align*}
The second equality follows from applying Laplace's theorem to expand the first two columns of the $m\times m$ determinant.
\end{proof}
 Using the expression for the derivative of the function $g_{{\beta_{m}}}(k)$, the function $g_{{\beta_{m}}}(k)$ can be precisely expressed as
\begin{equation}
\begin{array}{l}
g_{{\beta_{m}}}(k)=xk+4k^3t-x \int_{i{\beta_{m}}}^{k}\frac{\zeta^{2m}+\sum\limits_{j=1}^{m} \tilde{c}_j \zeta^{2(m-j)}}{R_{{\beta_{m}}}(\zeta)}\d\zeta
-12t \int_{i{\beta_{m}}}^{k}\frac{\zeta^{2(m+1)}+\frac{1}{2}(a_m^2+\beta_{m}^2+\sum\limits_{j=1}^{m-1}(a_j^2+b_j^2)) \zeta^{2m}+\sum\limits_{j=1}^{m} \hat{c}_j \zeta^{2(m-j)}}{R_{{\beta_{m}}}(\zeta)}\d\zeta.
\end{array}
\end{equation}
Together with equations \eqref{g32}--\eqref{g34}, we obtain the integral constants $\Omega_{{\beta_{m}},0}$, $\Omega_{{\beta_{m}},j}$ and $\Omega_{{\beta_{m}},-j}$
\begin{equation}
\begin{array}{l}
\Omega_{\beta_{m},m-1}=-2x \int_{i{\beta_{m}}}^{ia_m}\frac{\zeta^{2m}+\sum\limits_{j=1}^{m} \tilde{c}_j \zeta^{2(m-j)}}{R_{{\beta_{m}+}}(\zeta)}\d\zeta
-24t \int_{i{\beta_{m}}}^{i a_m}\frac{\zeta^{2(m+1)}+\frac{1}{2}(a_m^2+\beta_{m}^2+\sum\limits_{j=1}^{m-1}(a_j^2+b_j^2)) \zeta^{2m}+\sum\limits_{j=1}^{m} \hat{c}_j \zeta^{2(m-j)}}{R_{{\beta_{m}+}}(\zeta)}\d\zeta,
\\
\Omega_{\beta_{m},j}=\Omega_{\beta_{m},m-1}+\Omega_{\beta_{m},m-2}+\cdots+\Omega_{\beta_{m},j+1}-2x \int_{i{b_{j+1}}}^{ia_{j+1}}\frac{\zeta^{2m}+\sum\limits_{j=1}^{m} \tilde{c}_j \zeta^{2(m-j)}}{R_{{\beta_{m}+}}(\zeta)}\d\zeta
\\ \ \ \ \ \ \ \ \ \ -24t \int_{i{b_{j+1}}}^{ia_{j+1}}\frac{\zeta^{2(m+1)}+\frac{1}{2}(a_m^2+\beta_{m}^2+\sum\limits_{j=1}^{m-1}(a_j^2+b_j^2)) \zeta^{2m}+\sum\limits_{j=1}^{m} \hat{c}_j \zeta^{2(m-j)}}{R_{{\beta_{m}+}}(\zeta)}\d\zeta,\ j=0,1,\ldots, m-2,
\\
\Omega_{\beta_{m},-j}=\Omega_{\beta_{m},j}, \ j=1,2,\ldots,m-1.
\end{array}
\end{equation}
Then we choose the function $f_{{\beta_{m}}}(k)$ to simplify the jumps on $\Sigma_j$, $\Sigma_{m,\beta_{m}}$, $\Sigma_{-j}$ and $\Sigma_{-m,\beta_{m}}$
\begin{align}
&f_{\beta_{m}+}(k) f_{\beta_{m}-}(k)=\frac{1}{2r_j(k)},\ \ &k\in\Sigma_j,\ j=1,2,\ldots,m-1,\\
&f_{\beta_{m}+}(k) f_{\beta_{m}-}(k)=\frac{1}{2r_m(k)},\ \ &k\in\Sigma_{m,\beta_{m}},\\
&\frac{f_{\beta_{m}+}(k)}{f_{\beta_{m}-}(k)}=e^{i\Delta_{\beta_{m},j}},\ \ &k\in i[b_j,a_{j+1}],\ j=1,2,\ldots,m-1,\\
&\frac{f_{\beta_{m}+}(k)}{f_{\beta_{m}-}(k)}=e^{i\Delta_{\beta_{m},0}},\ \ &k\in i[-a_1,a_1],\\
&\frac{f_{\beta_{m}+}(k)}{f_{\beta_{m}-}(k)}=e^{i\Delta_{\beta_{m},-j}},\ \ &k\in i[-a_{j+1},-b_j],\ j=1,2,\ldots,m-1,\\
&f_{\beta_{m}+}(k) f_{\beta_{m}-}(k)=2r_j(k),\ \ &k\in\Sigma_{-j},\ j=1,2,\ldots,m-1,\\
&f_{\beta_{m}+}(k) f_{\beta_{m}-}(k)=2r_m(k),\ \ &k\in\Sigma_{-m,\beta_{m}},\\
&f_{\beta_{m}}(k)=1+\mathcal{O}(\frac{1}{k}),\ \ &k \rightarrow  \infty.\label{ftconstant}
\end{align}
It is easy to see that the function $f_{{\beta_{m}}}(k)$ is given by
\begin{equation}\label{ft}
\begin{aligned}
\begin{array}{l}
f_{\beta_{m}}(k)=\exp\le( \frac{R_{\beta_{m}}(k)}{2\pi i}\left(\sum\limits_{j=1}^{m-1}\int_{\Sigma_j}\frac{\log(\frac{1}{2r_j(\zeta)}) }{R_{\beta_{m}+}(\zeta)(\zeta-k)}\d\zeta+
\sum\limits_{j=1}^{m-1}\int_{ ib_j}^{ia_{j+1}}\frac{i\Delta_{\beta_{m},j}}{R_{\beta_{m}}(\zeta)(\zeta-k)}\d\zeta
+\int_{-ia_1}^{ia_1}\frac{i\Delta_{\beta_{m},0}}{R_{\beta_{m}}(\zeta)(\zeta-k)}\d\zeta\right.\right.\\
\left.\left.\ \ \ \ \ \ \ \ \ \ \
+\sum\limits_{j=1}^{m-1}\int_{\Sigma_{-j}}\frac{\log(2r_j(\zeta))}{R_{\beta_{m}+}(\zeta)(\zeta-k)}\d\zeta+
\sum\limits_{j=1}^{m-1}\int_{-ia_{j+1}}^{-ib_{j}}\frac{i\Delta_{\beta_{m},-j}}{R_{\beta_{m}}(\zeta)(\zeta-k)}\d\zeta
+\int_{\Sigma_{m,\beta_{m}}}\frac{\log(\frac{1}{2r_m(\zeta)}) }{R_{\beta_{m}+}(\zeta)(\zeta-k)}\d\zeta
\right.\right.\\
\left.\left.\ \ \ \ \ \ \ \ \ \ \
+\int_{\Sigma_{-m,\beta_{m}}}\frac{\log(2r_m(\zeta))}{R_{\beta_{m}+}(\zeta)(\zeta-k)}\d\zeta
\ri) \ri).
\end{array}
\end{aligned}
\end{equation}
The normalization condition \eqref{ftconstant} determines the constants $\Delta_{\beta_{m},0}$, $\Delta_{\beta_{m},j}$ and $\Delta_{\beta_{m},-j}$ by the following system
\begin{equation}\label{tDelta}
\begin{array}{l}
\sum\limits_{j=1}^{m-1}\int_{\Sigma_j}\frac{\log(\frac{1}{2r_j(\zeta)}) \zeta^l}{R_{\beta_{m}+}(\zeta)}\d\zeta+
\sum\limits_{j=1}^{m-1}\int_{ ib_j}^{ia_{j+1}}\frac{i\Delta_{\beta_{m},j}\zeta^l}{R_{\beta_{m}}(\zeta)}\d\zeta
+\int_{-ia_1}^{ia_1}\frac{i\Delta_{\beta_{m},0}\zeta^l}{R(\zeta)}\d\zeta
+\sum\limits_{j=1}^{m-1}\int_{\Sigma_{-j}}\frac{\log(2r_j(\zeta))\zeta^l}{R_{\beta_{m}+}(\zeta)}\d\zeta\\+
\sum\limits_{j=1}^{m-1}\int_{-ia_{j+1}}^{-ib_{j}}\frac{i\Delta_{\beta_{m},-j}\zeta^l}{R_{\beta_{m}}(\zeta)}\d\zeta
+\int_{\Sigma_{m,\beta_{m}}}\frac{\log(\frac{1}{2r_m(\zeta)})\zeta^{l} }{R_{\beta_{m}+}(\zeta)}\d\zeta
+\int_{\Sigma_{-m,\beta_{m}}}\frac{\log(2r_m(\zeta))\zeta^{l}}{R_{\beta_{m}+}(\zeta)}\d\zeta=0,\ l=0,1,\ldots,2m-2.
\end{array}
\end{equation}
Next, we introduce a new matrix-valued function
\begin{align}\label{transform}
\tilde{T}(k)=X(k)e^{ig_{{\beta_{m}}}(k)\sigma_3}{f_{{\beta_{m}}}(k)}^{\sigma_3}.
\end{align}
By using the functions $g_{{\beta_{m}}}(k)$ and  $f_{{\beta_{m}}}(k)$, the transformation \eqref{transform} results in a new RH problem for $\tilde{T}(k)$ as follows:
\begin{RHP} Find a $2\times2$ matrix-valued function $\tilde{T}(x,t;k)$ with the following properties
\begin{enumerate}
\item $\tilde{T}(x,t;k)$ is analytic for $k\in \C\backslash (i[-b_m, b_m]\cup\bigcup\limits_{j=m+1}^{n}(\Sigma_{j}\cup\Sigma_{-j}))$.
\item For $k\in i[-b_m, b_m]\cup\bigcup\limits_{j=m+1}^{n}(\Sigma_{j}\cup\Sigma_{-j})$, the boundary values $\tilde{T}_{\pm}(x,t;k)=\tilde{T}(x,t;k\mp 0)$ satisfy the following jump conditions
\begin{align}
\tilde{T}_+(k)=\tilde{T}_-(k) V_{\tilde{T}}(k)
\end{align}
where
\begin{equation}
\begin{array}{l}
V_{\tilde{T}}(k)= \begin{cases}
\displaystyle \begin{pmatrix} 1 & 0\\ 2i r_j(k) f_{{\beta_{m}}}^2(k) e^{2i(g_{\beta_{m}(k)}-\theta(x,t;k))} &1 \end{pmatrix},  k \in \Sigma_{j},\ j=m+1,m+2,\ldots,n,\\[3ex]
\displaystyle \begin{pmatrix} 1 & 0\\ 2i r_m(k) f_{{\beta_{m}}}^2(k) e^{2i(g_{\beta_{m}}(k)-\theta(x,t;k))} &1 \end{pmatrix},  k \in  i({\beta_{m}}, b_m],\\[3ex]
			\displaystyle \begin{pmatrix} \frac{e^{i\le(g_{{\beta_{m}}+}(k) - g_{{\beta_{m}}-}(k)\ri)}f_{{\beta_{m}}+}(k)}{f_{{\beta_{m}}-}(k)} & 0\\ i & \frac{e^{-i\le(g_{{\beta_{m}}+}(k) - g_{{\beta_{m}}-}(k)\ri)}f_{{\beta_{m}}-}(k)}{f_{{\beta_{m}}+}(k)} \end{pmatrix}, k \in  (\bigcup\limits_{j=1}^{m-1} \Sigma_j)\cup \Sigma_{m,{\beta_{m}}},\\
			\displaystyle \begin{pmatrix} \frac{e^{i\le(g_{{\beta_{m}}+}(k) - g_{{\beta_{m}}-}(k)\ri)}f_{{\beta_{m}}+}(k)}{f_{{\beta_{m}}-}(k)}& i\\ 0 & \frac{e^{-i\le(g_{{\beta_{m}}+}(k) - g_{{\beta_{m}}-}(k)\ri)}f_{{\beta_{m}}-}(k)}{f_{{\beta_{m}}+}(k)} \end{pmatrix}, k \in  (\bigcup\limits_{j=1}^{m-1} \Sigma_{-j})\cup\Sigma_{-m,{\beta_{m}}},\\
\displaystyle \begin{pmatrix} 1 &  2i r_j(k) f_{{\beta_{m}}}^{-2}(k) e^{-2i(g_{\beta_{m}}(k)-\theta(x,t;k))}\\0 &1 \end{pmatrix},  k \in  \Sigma_{-j},\ j=m+1,m+2,\ldots,n,\\
\displaystyle \begin{pmatrix} 1 &  2i r_m(k) f_{{\beta_{m}}}^{-2}(k) e^{-2i(g_{\beta_{m}}(k)-\theta(x,t;k))}\\0 &1 \end{pmatrix},  k \in  i[-b_m, -{\beta_{m}}),\\
			\displaystyle \begin{pmatrix} e^{i(\Omega_{{\beta_{m}},j}+\Delta_{{\beta_{m}},j}) } & 0\\0& e^{-i(\Omega_{{\beta_{m}},j}+\Delta_{{\beta_{m}},j})} \end{pmatrix},  k \in  i[b_j,a_{j+1}],\ j=1,2,\ldots,m-1,\\
\displaystyle \begin{pmatrix} e^{i(\Omega_{{\beta_{m}},0}+\Delta_{{\beta_{m}},0}) } & 0\\0& e^{-i(\Omega_{{\beta_{m}},0}+\Delta_{{\beta_{m}},0})} \end{pmatrix},  k \in  i[-a_1,a_1],\\
\displaystyle \begin{pmatrix} e^{i(\Omega_{{\beta_{m}},-j}+\Delta_{{\beta_{m}},-j}) } & 0\\0& e^{-i(\Omega_{{\beta_{m}},-j}+\Delta_{{\beta_{m}},-j})} \end{pmatrix},  k \in  i[-a_{j+1},-b_j]\ j=1,2,\ldots,m-1.
\end{cases}
\end{array}
\end{equation}
\item
$\tilde{T}(k) =  I  + \mathcal{O}\le(\frac{1}{k}\ri), \qquad k \rightarrow \infty.$
\end{enumerate}
\end{RHP}
The jump matrix  on $ \Sigma_{j},\ j=1,2,\ldots,m-1$, and $ \Sigma_{m,\beta_{m}}$ can be decomposed as follows
\begin{align*}
&\begin{pmatrix} \dfrac{e^{i(g_{\beta_{m}+}(k)-g_{\beta_{m}-}(k))}}{2r_j(k)f_{\beta_{m}-}(k)^{2}} & 0 \\ i  &  \dfrac{e^{-i(g_{\beta_{m}+}(k)-g_{\beta_{m}-}(k))}}{2r_j(k)f_{\beta_{m}+}(k)^{2}} \end{pmatrix}\\=
&\begin{pmatrix} 1 & \frac{-ie^{-2i(g_{\beta_{m}-}(k)-\theta(x,t;k))}}{ 2r_j(k) f_{\beta_{m}-}^{2}(k)}  \\ 0 & 1 \end{pmatrix}
	   \begin{pmatrix} 0 & i \\ i & 0 \end{pmatrix}
	   \begin{pmatrix} 1 &   \frac{-i e^{-2i(g_{\beta_{m}+}(k)-\theta(x,t;k))}}{2r_j(k) f_{\beta_{m}+}^{2}(k)}   \\ 0 & 1 \end{pmatrix},
\end{align*}
and on $\Sigma_{-j},\ j=1,2,\ldots,m-1$, $\Sigma_{-m,\beta_{m}}$ as
\begin{align*}
&\begin{pmatrix} \dfrac{f_{\beta_{m}+}(k)^{2} e^{i(g_{\beta_{m}+}(k)-g_{\beta_{m}-}(k))}}{2r_j(k)} & i \\ 0  &  \dfrac{f_{\beta_{m}-}(k)^{2} e^{-i(g_{\beta_{m}+}(k)-g_{\beta_{m}-}(k))}}{2r_j(k)} \end{pmatrix}\\=
&\begin{pmatrix} 1 & 0\\ \frac{-i  f_{\beta_{m}-}^{2}(k) e^{2i(g_{\beta_{m}-}(k)-\theta(x,t;k))}}{2r_j(k)} & 1 \end{pmatrix}
	   \begin{pmatrix} 0 & i \\ i & 0 \end{pmatrix}
	   \begin{pmatrix} 1 & 0\\ \frac{-if_{\beta_{m}+}^{2}(k) e^{2i(g_{\beta_{m}+}(k)-\theta(x,t;k))}}{2r_j(k)}  & 1 \end{pmatrix}.
\end{align*}
These factorizations motivate us to open lenses around $\Sigma_{j}$, $\Sigma_{-j}$, $\Sigma_{m,\beta_{m}}$ and $\Sigma_{-m,\beta_{m}}$, then we define the following transformation
\begin{equation}\label{St}
	 \tilde{S}(k) =\tilde{T}(k)
	\begin{cases}
	   \begin{pmatrix} 1 & \frac{-ie^{-2i (g_{{\beta_{m}}}(k)-\theta(x,t;k))}}{2\hat{r}_j(k) f_{\beta_{m}}^{2}(k)} \\ 0 & 1 \end{pmatrix},
	& k\in \text{lens right of $\Sigma_{j},\ j=1,2,\ldots,m-1$}, \\
	   \begin{pmatrix} 1 & \frac{ie^{-2i (g_{{\beta_{m}}}(k)-\theta(x,t;k))}}{2\hat{r}_j(k) f_{\beta_{m}}^{2}(k)} \\ 0 & 1 \end{pmatrix},
	& k\in \text{lens left of $\Sigma_{j},\ j=1,2,\ldots,m-1$}, \\
\begin{pmatrix} 1 & \frac{-ie^{-2i (g_{{\beta_{m}}}(k)-\theta(x,t;k))}}{2\hat{r}_m(k) f_{\beta_{m}}^{2}(k)} \\ 0 & 1 \end{pmatrix},
	& k\in \text{lens right of $\Sigma_{m,{\beta_{m}}}$}, \\
	   \begin{pmatrix} 1 & \frac{ie^{-2i (g_{{\beta_{m}}}(k)-\theta(x,t;k))}}{2\hat{r}_m(k) f_{\beta_{m}}^{2}(k)} \\ 0 & 1 \end{pmatrix},
	& k\in \text{lens left of $\Sigma_{m,{\beta_{m}}}$}, \\
	   \begin{pmatrix} 1 & 0 \\  \frac{-i f_{{\beta_{m}}}^2(k) e^{2i (g_{{\beta_{m}}}(k)-\theta(x,t;k))}}{2\hat{r}_j(k)} & 1 \end{pmatrix},
	& k\in \text{lens right of $\Sigma_{-j},\ j=1,2,\ldots,m-1$}, \\
	   \begin{pmatrix} 1 & 0 \\  \frac{i f_{{\beta_{m}}}^2(k) e^{2i (g_{{\beta_{m}}}(k)-\theta(x,t;k))}}{2\hat{r}_j(k)} & 1 \end{pmatrix},
	& k\in \text{lens left of $\Sigma_{-j},\ j=1,2,\ldots,m-1$}, \\
	   \begin{pmatrix} 1 & 0 \\  \frac{-i f_{{\beta_{m}}}^2(k) e^{2i (g_{{\beta_{m}}}(k)-\theta(x,t;k))}}{2\hat{r}_m(k)} & 1 \end{pmatrix},
	& k\in \text{lens right of $\Sigma_{-m,{\beta_{m}}}$}, \\
	   \begin{pmatrix} 1 & 0 \\  \frac{i f_{{\beta_{m}}}^2(k) e^{2i (g_{{\beta_{m}}}(k)-\theta(x,t;k))}}{2\hat{r}_m(k)} & 1 \end{pmatrix},
	& k\in \text{lens left of $\Sigma_{-m,{\beta_{m}}}$}, \\
	   I , & \text{elsewhere .}
	\end{cases}
\end{equation}

As a consequence, the function $\tilde{S}(k)$ satisfies the following conditions:
\begin{enumerate}
\item{}
$\tilde{S}(k)$ is analytic for $k\in \C\backslash ( \bigcup\limits_{j=1}^{m-1}(\tilde{\mathcal{C}}_j\cup \tilde{\mathcal{C}}_{-j} )\cup \tilde{\mathcal{C}}_{m,{\beta_{m}}}\cup \tilde{\mathcal{C}}_{-m,{\beta_{m}}}\cup i[-b_m, b_m]\cup\bigcup\limits_{j=m+1}^{n}(\Sigma_{j}\cup\Sigma_{-j}) ) $, where the contours $\tilde{\mathcal{C}}_j$, $\tilde{\mathcal{C}}_{-j}$, $\tilde{\mathcal{C}}_{m,{\beta_{m}}}$ and $\tilde{\mathcal{C}}_{-m,{\beta_{m}}}$ are shown in Figure 3.
\item{} The boundary values $\tilde{S}_{\pm}(k)$ satisfy the jump conditions
\begin{align}
\tilde{S}_+(k)=\tilde{S}_-(k) V_{\tilde{S}},\ \ k\in  \bigcup\limits_{j=1}^{m-1}(\tilde{\mathcal{C}}_j\cup \tilde{\mathcal{C}}_{-j} )\cup \tilde{\mathcal{C}}_{m,{\beta_{m}}}\cup \tilde{\mathcal{C}}_{-m,{\beta_{m}}}\cup i[-b_m, b_m]\cup\bigcup\limits_{j=m+1}^{n}(\Sigma_{j}\cup\Sigma_{-j}),
\end{align}
where the jump matrix $V_{\tilde{S}}$ is given by
\begin{equation}
\begin{array}{l}
V_{\tilde{S}}(k)= \begin{cases}
\begin{pmatrix} 1 & 0 \\  2i\hat{r}_j(k)f^2_{{\beta_{m}}}(k)e^{2i(g_{{\beta_{m}}}(k)-\theta(x,t;k))} & 1 \end{pmatrix}, k\in \Sigma_{j},\ j=m+1,m+2,\ldots,n,\\
\begin{pmatrix} 1 & 0 \\  2i\hat{r}_m(k)f^2_{{\beta_{m}}}(k)e^{2i(g_{{\beta_{m}}}(k)-\theta(x,t;k))} & 1 \end{pmatrix}, k\in i(\beta_{m}, b_{m}],\\
\begin{pmatrix} 1 & -i (2\hat{r}_m(k))^{-1} f_{\beta_{m}}^{-2}(k) e^{-2i(g_{\beta_{m}}(k)-\theta(x,t;k))} \\ 0 & 1 \end{pmatrix},
	 k\in\tilde{\mathcal{C}}_{m,\beta_{m}},\\
\begin{pmatrix} 1 & -i (2\hat{r}_j(k))^{-1} f_{\beta_{m}}^{-2}(k) e^{-2i(g_{\beta_{m}}(k)-\theta(x,t;k))} \\ 0 & 1 \end{pmatrix},
	 k\in\tilde{\mathcal{C}}_j ,\ j=1,2,\ldots,m-1,\\
\begin{pmatrix} 1 & 0 \\  \frac{-i f_{\beta_{m}}^2(k) e^{2i(g_{\beta_{m}}(k)-\theta(x,t;k))}}{2\hat{r}_j(k)} & 1 \end{pmatrix},
	 k\in  \tilde{\mathcal{C}}_{-j},\ j=1,2,\ldots,m-1, \\
\begin{pmatrix} 1 & 0 \\  \frac{-i f_{\beta_{m}}^2(k) e^{2i(g_{\beta_{m}}(k)-\theta(x,t;k))}}{2\hat{r}_m(k)} & 1 \end{pmatrix},
	 k\in  \tilde{\mathcal{C}}_{-m,\beta_{m}},\\
\begin{pmatrix} 1 & 2ir_j(k)f^{-2}_{{\beta_{m}}}(k)e^{-2i(g_{{\beta_{m}}}(k)-\theta(x,t;k))}\\ 0 & 1 \end{pmatrix}, k\in \Sigma_{-j},\ j=m+1,m+2,\ldots,n,\\
\begin{pmatrix} 1 & 2ir_m(k)f^{-2}_{{\beta_{m}}}(k)e^{-2i(g_{{\beta_{m}}}(k)-\theta(x,t;k))}\\ 0 & 1 \end{pmatrix}, k\in i[-b_{m}, -\beta_{m}),\\
\begin{pmatrix} 0 & i \\ i & 0 \end{pmatrix},
	 k\in (\bigcup\limits_{j=1}^{m-1}\Sigma_j\cup \Sigma_{-j})\cup \Sigma_{m,\beta_{m}}\cup \Sigma_{-m,\beta_{m}},\\
\begin{pmatrix} e^{i({\Omega_{\beta_{m},0}}+\Delta_{\beta_{m},0})} & 0 \\ 0 & e^{-i({\Omega_{\beta_{m},0}}+\Delta_{\beta_{m},0})} \end{pmatrix},
 k\in i[-a_1,a_1],\\
\begin{pmatrix} e^{i(\Omega_{\beta_{m},j}+\Delta_{\beta_{m},j})} & 0 \\ 0 & e^{-i(\Omega_{\beta_{m},j}+\Delta_{\beta_{m},j})} \end{pmatrix},
 k\in i[b_j,a_{j+1}],\ j=1,2,\ldots,m-1,\\
\begin{pmatrix} e^{i(\Omega_{\beta_{m},-j}+\Delta_{\beta_{m},-j})} & 0 \\ 0 & e^{-i(\Omega_{\beta_{m},-j}+\Delta_{\beta_{m},-j})} \end{pmatrix},
 k\in i[-a_{j+1},-b_j],\ j=1,2,\ldots,m-1,\\
	\end{cases}
\end{array}
\end{equation}
\item{}$\tilde{S}(k)= I  + \mathcal{O}\le(\frac{1}{k}\ri), \qquad k \rightarrow \infty.$
\end{enumerate}
\begin{figure}
\centering
\scalebox{.75}{
\begin{tikzpicture}[>=stealth]op
\path (0,0) coordinate (O);

\coordinate (eta1) at (0,1);    \coordinate (eta1c) at ($-1*(eta1)$);
\coordinate (eta2) at (0,2.6);       \coordinate (eta2c) at ($-1*(eta2)$);
\coordinate (eta3) at (0,4.4);   \coordinate (eta3c) at ($-1*(eta3)$);
\coordinate (alpha) at (0,6);   \coordinate (alphac) at ($-1*(alpha)$);
\coordinate (eta4) at (0,6.6);   \coordinate (eta4c) at ($-1*(eta4)$);
\coordinate (eta5) at (0,8.4);   \coordinate (eta5c) at ($-1*(eta5)$);
\coordinate (eta6) at (0,10.4);   \coordinate (eta6c) at ($-1*(eta6)$);

\draw[->- = .5,  thick] (eta5)--(eta6)
 node[pos=.5, right] {$\Sigma_{j},\ j=m+1,m+2,\ldots,n,$}
  node[pos=0.5, pin={}] {};
\draw[->- = .5,dashed,  thick] (eta4)--(eta5)
  node[pos=0.5, pin={}] {};
\draw[->- = .5,thick] (alpha)--(eta4) node[pos=0.5, pin={}]{};
\draw[->- = .5,thick] (eta3)--(alpha)
  node[pos=0.5, pin={}] {};
\draw[->- = .5,dashed,  thick] (eta2)--(eta3)
  node[pos=0.5, pin={}] {};
\draw[->- = .5,  thick] (eta1)--(eta2)
  node[pos=0.5, pin={}] {};

\draw[->- = .5,  thick] (eta6c)--(eta5c)
node[pos=.5, right] {$\Sigma_{-j},\ j=m+1,m+2,\ldots,n,$}
  node[pos=0.5, pin={}] {};
\draw[->- = .5, dashed,  thick] (eta5c)--(eta4c)
  node[pos=0.5, pin={}] {};
\draw[->- = .5,thick] (eta4c)--(alphac)node[pos=0.5, pin={}]{};
\draw[->- = .5,  thick] (alphac)--(eta3c)
  node[pos=0.5, pin={}] {};
\draw[->- = .5, dashed,  thick] (eta3c)--(eta2c)
  node[pos=0.5, pin={}] {};
\draw[->- = .5,  thick] (eta2c)--(eta1c)
  node[pos=0.5, pin={}] {};

\draw[->- = .5, dashed,  thick] (eta1c)--(eta1)
  node[pos=0.5, pin={}] {};

\draw[->- = .7,thick] (eta1) .. controls + (30:1.5cm) and + (-30:1.5cm) .. (eta2)
  node[pos=.25, right] {$\tilde{\mathcal{C}}_j,\ j=1,2,\ldots,m-1,$}
  node[pos=0.5, pin={}] {};
\draw[->- = .7,thick] (eta1) .. controls + (150:1.5cm) and + (-150:1.5cm) .. (eta2)
  node[pos=.25, left] {$\tilde{\mathcal{C}}_j$};

\draw[->- = .7,thick] (eta3) .. controls + (30:1.5cm) and + (-30:1.5cm) .. (alpha)
  node[pos=.25, right] {$\tilde{\mathcal{C}}_{m,{\beta_{m}}}$}
  node[pos=0.5, pin={}] {};
\draw[->- = .7,thick] (eta3) .. controls + (150:1.5cm) and + (-150:1.5cm) .. (alpha)
  node[pos=.25, left] {$\tilde{\mathcal{C}}_{m,{\beta_{m}}}$};

\draw[->- = .7,thick] (eta2c) .. controls + (30:1.5cm) and + (-30:1.5cm) .. (eta1c)
  node[pos=.25, right] {$\tilde{\mathcal{C}}_{-j},\ j=1,2,\ldots,m-1,$}
  node[pos=0.5, pin={}] {};
\draw[->- = .7,thick] (eta2c) .. controls + (150:1.5cm) and + (-150:1.5cm) .. (eta1c)
  node[pos=.25, left] {$\tilde{\mathcal{C}}_{-j}$};

\draw[->- = .7,thick] (alphac) .. controls + (30:1.5cm) and + (-30:1.5cm) .. (eta3c)
  node[pos=.25, right] {$\tilde{\mathcal{C}}_{-m,{\beta_{m}}}$}
  node[pos=0.5, pin={}] {};
\draw[->- = .7,thick] (alphac) .. controls + (150:1.5cm) and + (-150:1.5cm) .. (eta3c)
  node[pos=.25, left] {$\tilde{\mathcal{C}}_{-m,{\beta_{m}}}$};

\draw[fill] (eta6) circle [radius=0.06] node[left] {$ib_j$};
\draw[fill] (eta5) circle [radius=0.06] node[left] {$ia_j$};
\draw[fill] (eta4) circle [radius=0.06] node[left] {$ib_m$};
\draw[fill] (alpha) circle [radius=0.06] node[left] {$i{\beta_{m}}$};
\draw[fill] (eta3) circle [radius=0.06] node[left] {$ia_m$};
\draw[fill] (eta2) circle [radius=0.06] node[left] {$ib_j$};
\draw[fill] (eta1) circle [radius=0.06] node[left] {$ia_j$};

\draw[fill] (eta2c) circle [radius=0.06] node[left] {$-ib_j$};
\draw[fill] (eta1c) circle [radius=0.06] node[left] {$-ia_j$};
\draw[fill] (eta3c) circle [radius=0.06] node[left] {$-ia_m$};
\draw[fill] (alphac) circle [radius=0.06] node[left] {$-i{\beta_{m}}$};
\draw[fill] (eta4c) circle [radius=0.06] node[left] {$-ib_m$};
\draw[fill] (eta6c) circle [radius=0.06] node[left] {$-ib_j$};
\draw[fill] (eta5c) circle [radius=0.06] node[left] {$-ia_j$};
\end{tikzpicture}
}

\label{openlenses2}
\caption{The jump contours $\tilde{\mathcal{C}}_j$, $\tilde{\mathcal{C}}_{-j}$, $ \tilde{\mathcal{C}}_{m,{\beta_{m}}}$ and $\tilde{\mathcal{C}}_{-m,{\beta_{m}}}$.}
\end{figure}
To eliminate the jumps on
$\bigcup\limits_{j=1}^{m-1}(\tilde{\mathcal{C}}_j\cup\tilde{\mathcal{C}}_{-j})\cup\tilde{\mathcal{C}}_{m,{\beta_{m}}}\cup\tilde{\mathcal{C}}_{-m,{\beta_{m}}}$, we need the following lemma.
\begin{lemma}\label{lemma5.2}
For $\xi_{2m-2}<\xi<\xi_{2m-1}$, the following inequalities are satisfied:
\begin{align}
&\Re 2i\le[g_{\beta_{m}}(k)-kx-4k^3t \ri]<-ct \ \mbox{ for }k \in i({\beta_{m}},b_m]\cup\bigcup\limits_{j=m+1}^{n}\Sigma_{j},\label{inequality1}\\
&\Re 2i\le[g_{\beta_{m}}(k)-kx-4k^{3}t \ri]>ct \ \mbox{ for }k \in \tilde{\mathcal{C}}_{m,{\beta_{m}}}\cup\bigcup\limits_{j=1}^{m-1}\tilde{\mathcal{C}}_j \backslash \{ia_1, ib_1, \ldots, ia_m, i{\beta_{m}} \}, \label{inequality2}\\
&\Re 2i\le[g_{\beta_{m}}(k)-kx-4k^3t \ri]>ct \ \mbox{ for }k \in i[-b_m,-{\beta_{m}})\cup\bigcup\limits_{j=m+1}^{n}\Sigma_{-j},\label{inequality3}\\
&\Re 2i\le[g_{\beta_{m}}(k)-kx-4k^{3}t \ri]<-ct \ \mbox{ for }k \in \tilde{\mathcal{C}}_{-m,{\beta_{m}}}\cup\bigcup\limits_{j=1}^{m-1}\tilde{\mathcal{C}}_{-j} \backslash \{-ia_1, -ib_1, \ldots, -ia_m, -i{\beta_{m}} \} \label{inequality4},
\end{align}
for some constant $c\in \mathbb{R}_{+}$.
\end{lemma}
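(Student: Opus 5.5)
The plan is to reduce all four inequalities to a sign analysis of the derivative of the phase $\phi(k):=g_{\beta_m}(k)-kx-4k^3t$, as in the proof of the corresponding lemma of Section 3 for $g(k)-k$. Using \eqref{gad},
\begin{equation*}
\phi'(k)=-\frac{x\,P(k)+12t\,Q(k)}{R_{\beta_m}(k)},
\end{equation*}
where $P$ and $Q$ are the even polynomials in the numerators of \eqref{gad}. Their degrees are $2m$ and $2m+2$ respectively, so $xP+12tQ$ is a polynomial of degree $m+1$ in $k^2$. I will locate all of its zeros.

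First, the Whitham condition \eqref{alpha}, i.e.\ the $(k\mp i\beta_m)^{3/2}$ behaviour \eqref{axi}, forces a zero at $k^2=-\beta_m^2$. Second, the vanishing conditions \eqref{tildec1}--\eqref{hatc2} say that $\int xP+12tQ\over R_{\beta_m}$ vanishes over each of the $m$ gaps $i[-a_1,a_1]$ and $i[b_l,a_{l+1}]$, $l=1,\ldots,m-1$. On each gap $R_{\beta_m}$ has constant phase, so the integrand must change sign there, which gives a zero $k^2=-k_l^2$ with $k_0\in(0,a_1)$ and $k_l\in(b_l,a_{l+1})$. This accounts for all $m+1$ zeros, and the normalization $\phi'\sim -12tk^2\cdot k^{2m+2}/k^{2m+2}$ fixes the constant. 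Hence
\begin{equation*}
\phi'(k)=-12t\,\frac{(k^2+\beta_m^2)\prod_{l=0}^{m-1}(k^2+k_l^2)}{R_{\beta_m}(k)}.
\end{equation*}
The factor $(k^2+\beta_m^2)/R_{\beta_m}$ then behaves like $(k^2+\beta_m^2)^{1/2}$ near $\pm i\beta_m$.

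For \eqref{inequality1} I put $k=is$ with $s>\beta_m$. There $R_{\beta_m}(is)$ is real with a fixed sign, determined by the branch normalization $R_{\beta_m}\sim k^{2m+2}$, and each factor $(\beta_m^2-s^2)$, $(k_l^2-s^2)$ is negative. Therefore $\frac{d}{ds}\phi(is)=i\phi'(is)$ is purely imaginary with a definite sign for all $s>\beta_m$. Starting from $\phi(i\beta_m)=0$ (the real part is zero by the choice of base point, and the imaginary part by $\phi_++\phi_-=0$ at the band endpoint), I integrate upward. This shows $\Im\phi(is)$ is strictly monotone, with the sign that makes $\Re 2i\phi=-2\Im\phi<0$. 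Since $\phi$ is linear in $(x,t)=t(\xi,1)$, the bound is $\le -ct$ on compact subsets of $i(\beta_m,b_n]$ away from $i\beta_m$. I will state the estimate uniformly outside a small disc $B_\rho^{(i\beta_m)}$, since the disc is covered by the local parametrix anyway.

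For \eqref{inequality2}, on each band $\Sigma_j$ ($j<m$) and on $\Sigma_{m,\beta_m}$ we have $\phi_+=-\phi_-$ and $\Re(2i\phi_\pm)=0$. Also $R_{\beta_{m}\pm}$ is purely imaginary there, so $\phi'_\pm(is)$ is real with opposite signs on the two sides. By the Cauchy--Riemann equations, moving off the band horizontally by $\varepsilon$ gives $\Im\phi(is\pm\varepsilon)\approx\pm\varepsilon\,\phi'_\pm(is)$. The factor signs are the same as in the $g'(k)-1$ computation of Section 3, now with the extra factor $(k^2+\beta_m^2)$, which does not change sign on these bands. So $\Im\phi<0$ on both sides, that is $\Re 2i\phi>0$, and the lenses $\tilde{\mathcal C}$ are chosen close enough to the bands that this persists. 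Compactness away from the endpoints gives the bound $ct$. The inequalities \eqref{inequality3}--\eqref{inequality4} then follow from the symmetry $\phi(-k)=-\phi(k)$, which holds because $\phi$ is odd, inherited from $\theta$ and the even integrands.

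I expect the main obstacle to be the zero count together with the sign bookkeeping. One must check that the double-root requirement at $\pm i\beta_m$ really leaves exactly one zero per gap and none on the bands or above $\beta_m$. One must also fix the sign of $R_{\beta_m}$ consistently on each piece of the imaginary axis; it alternates across consecutive bands and gaps, and this must be matched with the alternating signs of $\prod_l(k_l^2-s^2)$. I will do this by counting sign changes from $+\infty$ downward.
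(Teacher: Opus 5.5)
Your proposal is correct and follows essentially the paper's route. You factor $g_{\beta_m}'(k)-x-12k^2t=-12t(k^2+\beta_m^2)\prod_{l}(k^2+k_l^2)/R_{\beta_m}(k)$; your zero count, via the Whitham condition plus one sign change per gap, justifies this more completely than the paper does. You then read off the sign of this real quantity on $i(\beta_m,b_n]$, use the sign of the imaginary part of its boundary values on the bands for the lenses, and pass to the lower half-plane by symmetry.

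Two slips to fix when you write it out. First, $R_{\beta_m}(k)\sim k^{2m}$, not $k^{2m+2}$. So $R_{\beta_m}(is)$ has sign $(-1)^m$ for $s>\beta_m$, and against the $m+1$ negative numerator factors this gives $\phi'(is)>0$; your stated normalization would flip this sign. Second, on the bands $\phi'_\pm(is)$ is purely imaginary; it is $\frac{d}{ds}\phi_\pm(is)=i\phi'_\pm(is)$ that is real. The Cauchy--Riemann step should therefore read $\Im\phi(is\pm\varepsilon)\approx\pm\varepsilon\,\Im\phi'(is\pm 0)$, which is negative on both sides because $\Im\phi'_+(is)>0$ for the left boundary value.
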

\begin{proof}
Combining the representation of the derivative of $g_{{\beta_{m}}}(k)$ in \eqref{gad}, \eqref{tildec1}, \eqref{tildec2}, \eqref{hatc1} and \eqref{hatc2}, we obtain
\begin{align}
g_{{\beta_{m}}}'(k)-x-12k^2t=-\frac{12t(k^2+{\beta_{m}}^2)\prod\limits_{j=1}^{m}(k^2+k_j^2)}{R_{{\beta_{m}}}(k)},
\end{align}
where $k_1=k_1(x,t)\in (0,a_1)$ and $k_j=k_j(x,t)\in (b_{j-1},a_{j}),\ j=2,3,\ldots,m$. Hence, the following inequality is established for $k \in i({\beta_{m}},b_m]\cup\bigcup\limits_{j=m+1}^{n}\Sigma_{j}$
\begin{equation}
-12\frac{(k^2+{\beta_{m}}^2)\prod\limits_{j=1}^{m}(k^2+k_j^2)}{R_{{\beta_{m}}}(k)}>0,
\end{equation}
which implies \eqref{inequality1}.

For $k\in \Sigma_{m,{\beta_{m}}}\cup \bigcup\limits_{j=1}^{m-1}\Sigma_{j}$
\begin{equation}
\Im\le( -12\frac{(k^2+{\beta_{m}}^2)\prod\limits_{j=1}^{m}(k^2+k_j^2)}{R_{{\beta_{m}}+}(k)}\ri)>0,
\end{equation}
we get $\Im (g_{{\beta_{m}}}(k)-xk-4k^3t)<0$ on $\bigcup\limits_{j=1}^{m-1}\tilde{\mathcal{C}}_j\cup \tilde{\mathcal{C}}_{m,{\beta_{m}}}$, which implies \eqref{inequality2}. In a similar way, the inequalities \eqref{inequality3} and \eqref{inequality4} can be proven.
\end{proof}

As time tends to infinity, the jumps for $\tilde{S}(k)$ converges to the jumps for the following outer model problem $\tilde{S}^{\infty}(k)$  exponentially fast outside small neighbourhoods of endpoints $\pm ia_1$, $\pm i b_1$, $\pm ia_2$, $\ldots$, $\pm i{\beta_{m}}$.
\begin{RHP}\label{RHP6}
		Find a $2\times 2$ matrix-valued function $\tilde{S}^{\infty}(k)$ with the following properties
\begin{enumerate}
\item {}  $\tilde{S}^{\infty}(k)$ is analytic in $\mathbb{C}\setminus i[-{\beta_{m}},{\beta_{m}}]$.
\item {} For $k\in i[-{\beta_{m}},{\beta_{m}}]$, the boundary values $\tilde{S}^{\infty}_{\pm}(k)$ satisfy the following jump relation
\begin{gather}
\label{Stinfinity1}
\tilde{S}^{\infty}_+(k) = \tilde{S}^{\infty}_-(k)
\begin{cases}
\begin{pmatrix}0 & i\\ i & 0 \end{pmatrix}, k \in\bigcup\limits_{j=1}^{m-1}(\Sigma_{j}\cup\Sigma_{-j})\cup\Sigma_{m,{\beta_{m}}}\cup\Sigma_{-m,{\beta_{m}}},  \\
\begin{pmatrix} e^{i(\Omega_{{\beta_{m}},0}+\Delta_{{\beta_{m}},0})} &0 \\ 0 & e^{-i(\Omega_{{\beta_{m}},0}+\Delta_{{\beta_{m}},0})}\end{pmatrix},   k \in i[-a_1,a_1],  \\
\begin{pmatrix} e^{i(\Omega_{{\beta_{m}},j}+\Delta_{{\beta_{m}},j})} &0 \\ 0 & e^{-i(\Omega_{{\beta_{m}},j}+\Delta_{{\beta_{m}},j})}\end{pmatrix},   k \in i[b_j,a_{j+1}],\ j=1,2,\ldots,m-1,  \\
\begin{pmatrix} e^{i(\Omega_{{\beta_{m}},-j}+\Delta_{{\beta_{m}},-j})} &0 \\ 0 & e^{-i(\Omega_{{\beta_{m}},-j}+\Delta_{{\beta_{m}},-j})}\end{pmatrix},   k \in i[-a_{j+1},-b_{j}],\ j=1,2,\ldots,m-1.
\end{cases}
\end{gather}
\item {}$\tilde{S}^{\infty}(k)= I +\mathcal{O}\le(\frac{1}{k}\ri), k\to\infty.$
\end{enumerate}
\end{RHP}
Motivated by the conditions \eqref{Stinfinity1},  we introduce the function $\tilde{h}(k)$ as follows:
\begin{equation}
\begin{array}{l}
\tilde{h}(k)=\frac{R_{{\beta_{m}}}(k)}{2\pi i}(\sum\limits_{j=1}^{m-1}\int_{ ib_j}^{ia_{j+1}}\frac{i(\Omega_{{\beta_{m}},j}+\Delta_{{\beta_{m}},j})}{R_{{\beta_{m}}}(\zeta)(\zeta-k)}\d\zeta+\int_{ -ia_1}^{ia_1}\frac{i(\Omega_{{\beta_{m}},0}+\Delta_{{\beta_{m}},0})}{R_{{\beta_{m}}}(\zeta)(\zeta-k)}\d\zeta+
 \sum\limits_{j=1}^{m-1}\int_{-ia_{j+1}}^{-ib_j}\frac{i(\Omega_{{\beta_{m}},-j}+\Delta_{{\beta_{m}},-j})}{R_{{\beta_{m}}}(\zeta)(\zeta-k)}\d\zeta),
\end{array}
\end{equation}
it is easy to check that $\tilde{h}(k)$ satisfies the following jump relation
\begin{align}
&\tilde{h}_{+}(k)+\tilde{h}_{-}(k)=0,\ & k\in \bigcup\limits_{j=1}^{m-1}(\Sigma_{j}\cup\Sigma_{-j})\cup\Sigma_{m,{\beta_{m}}}\cup \Sigma_{-m,{\beta_{m}}},\\
&\tilde{h}_{+}(k)-\tilde{h}_{-}(k)=i(\Omega_{{\beta_{m}},0}+\Delta_{{\beta_{m}},0}),\ &k\in i[-a_1, a_1],\\
&\tilde{h}_{+}(k)-\tilde{h}_{-}(k)=i(\Omega_{{\beta_{m}},j}+\Delta_{{\beta_{m}},j}),\ &k\in i[b_j,a_{j+1}], j=1,2,\ldots,m-1,\\
&\tilde{h}_{+}(k)-\tilde{h}_{-}(k)=i(\Omega_{{\beta_{m}},-j}+\Delta_{{\beta_{m}},-j}),\ &k\in i[-a_{j+1},-b_j], j=1,2,\ldots,m-1.
\end{align}
Suppose that $R_{\beta_{m}}(k)$ has the following asymptotic expansion as k tends to infinity
\begin{gather}
R_{\beta_{m}}(k)=k^{2m}+\tilde{R}_1 k^{2(m-1)}+\tilde{R}_2 k^{2(m-2)}+\ldots+\tilde{R}_m+\mathcal{O}\le(\frac{1}{k^2}\ri),\ k\to \infty,
\end{gather}
where $\tilde{R}_{j},\ j=1,2,\ldots,m$, can be expressed by $a_j$, $b_{j}$ and $\beta_{m}$.
As $k\to \infty$, $\tilde{h}(k)$ has the asymptotic formula:
\begin{align}
\tilde{h}(k)=\sum\limits_{j=0}^{2m-1}\tilde{h}_j k^j+\mathcal{O}\le(\frac{1}{k}\ri),
\end{align}
where
\begin{equation}\label{th0123}
\begin{aligned}\begin{array}{l}
\tilde{h}_{2m-1}=-\frac{1}{2\pi }\le(\sum\limits_{j=1}^{m-1}\int_{ ib_{j}}^{ia_{j+1}}\frac{\Omega_{\beta_{m},j}+\Delta_{\beta_{m},j}}{R_{\beta_{m}}(\zeta)}\d\zeta+\int_{ -ia_1}^{ia_1}\frac{\Omega_{\beta_{m},0}+\Delta_{\beta_{m},0}}{R_{\beta_{m}}(\zeta)}\d\zeta
+\sum\limits_{j=1}^{m-1}\int_{ -ia_{j+1}}^{-ib_{j}}\frac{\Omega_{\beta_{m},-j}+\Delta_{\beta_{m},-j}}{R_{\beta_{m}}(\zeta)}\d\zeta\ri),\\
\tilde{h}_{2m-1-2l}=-\frac{1}{2\pi }\le(\sum\limits_{j=1}^{m-1}\int_{ ib_{j}}^{ia_{j+1}}\frac{(\Omega_{\beta_{m},j}+\Delta_{\beta_{m},j})(\zeta^{2l}+\sum\limits_{p=1}^{l}\tilde{R}_p\zeta^{2(l-p)})}{R_{\beta_{m}}(\zeta)}\d\zeta+\int_{ -ia_1}^{ia_1}\frac{(\Omega_{\beta_{m},0}+\Delta_{\beta_{m},0})(\zeta^{2l}+\sum\limits_{p=1}^{l}\tilde{R}_p\zeta^{2(l-p)})}{R_{\beta_{m}}(\zeta)}\d\zeta\right.\\
\left.\ \ \ \ \ \ \ \  \ \ \ \ \ \ \ \ \
+\sum\limits_{j=1}^{m-1}\int_{ -ia_{j+1}}^{-ib_{j}}\frac{(\Omega_{\beta_{m},-j}+\Delta_{\beta_{m},-j})(\zeta^{2l}+\sum\limits_{p=1}^{l}\tilde{R}_p\zeta^{2(l-p)})}{R_{\beta_{m}}(\zeta)}\d\zeta\ri),\ l=1,2,\ldots,m-1,\\
\tilde{h}_{2m-2}=-\frac{1}{2\pi }\le(\sum\limits_{j=1}^{m-1}\int_{ ib_{j}}^{ia_{j+1}}\frac{(\Omega_{\beta_{m},j}+\Delta_{\beta_{m},j})\zeta}{R_{\beta_{m}}(\zeta)}\d\zeta+\int_{ -ia_1}^{ia_1}\frac{(\Omega_{\beta_{m},0}+\Delta_{\beta_{m},0})\zeta}{R_{\beta_{m}}(\zeta)}\d\zeta
+\sum\limits_{j=1}^{m-1}\int_{ -ia_{j+1}}^{-ib_{j}}\frac{(\Omega_{\beta_{m},-j}+\Delta_{\beta_{m},-j})\zeta}{R_{\beta_{m}}(\zeta)}\d\zeta\ri),\\
\tilde{h}_{2m-2-2l}=-\frac{1}{2\pi }\le(\sum\limits_{j=1}^{m-1}\int_{ ib_{j}}^{ia_{j+1}}\frac{(\Omega_{\beta_{m},j}+\Delta_{\beta_{m},j})(\zeta^{2l}+\sum\limits_{p=1}^{l}\tilde{R}_p\zeta^{2(l-p)})\zeta}{R_{\beta_{m}}(\zeta)}\d\zeta+\int_{ -ia_1}^{ia_1}\frac{(\Omega_{\beta_{m},0}+\Delta_{\beta_{m},0})(\zeta^{2l}+\sum\limits_{p=1}^{l}\tilde{R}_p\zeta^{2(l-p)})\zeta}{R_{\beta_{m}}(\zeta)}\d\zeta\right.\\
\left.\ \ \ \ \ \ \ \  \ \ \ \ \ \ \ \ \
+\sum\limits_{j=1}^{m-1}\int_{ -ia_{j+1}}^{-ib_{j}}\frac{(\Omega_{\beta_{m},-j}+\Delta_{\beta_{m},-j})(\zeta^{2l}+\sum\limits_{p=1}^{l}\tilde{R}_p\zeta^{2(l-p)})\zeta}{R_{\beta_{m}}(\zeta)}\d\zeta\ri), \ l=1,2,\ldots,m-1.
\end{array}
\end{aligned}
\end{equation}
In order to get the solution of RH problem \ref{RHP6}, we introduce a  Riemann surface $\mathfrak{X}_{{\beta_{m}}}$ of genus $2m-1$ (see Figure 4 ).
\begin{figure}[th]
\centering
\scalebox{.9}{
\begin{tikzpicture}[>=stealth]
\path (0,0) coordinate (O);

\coordinate (TL) at (-2,5);
\coordinate (TR) at (7,5);
\coordinate (BL) at (-4,3);
\coordinate (BR) at (5,3);
\coordinate (INF1) at (5.6,4.5);

\coordinate (shift) at (0,-2.3);
\coordinate (TL2) at ($  (TL) + (shift)  $);
\coordinate (TR2) at ($  (TR) + (shift)  $);
\coordinate (BL2) at ($  (BL) + (shift)  $);
\coordinate (BR2) at ($  (BR) + (shift)  $);
\coordinate (INF2) at ($ (INF1) + (shift) $);

\coordinate (eta1) at (2.2,4);
\coordinate (eta2) at (3.2,4);
\coordinate (eta3) at (4.2,4);
\coordinate (alpha) at (5.2,4);

\coordinate (-eta1) at (0.2,4);
\coordinate (-eta2) at (-0.8,4);
\coordinate (-eta3) at (-1.8,4);
\coordinate (-alpha) at (-2.8,4);
\draw (TL) -- (TR) -- (BR) -- (BL) -- cycle;
\node[ label={[label distance= -0.3cm, below, xshift= -0.1cm]$\times$}]  at (INF1) {$\infty^+$};
\draw (TL2) -- (TR2) -- (BR2) -- (BL2) -- cycle;
\node[ label={[label distance= -0.3cm, below, xshift= -0.1cm]$\times$}]  at (INF2) {$\infty^-$};

\draw[->-=0.5] (eta1) -- (eta2);
\draw[->-=0.5] (eta3) -- (alpha);
\draw[->-=0.5] (-eta2) -- (-eta1);
\draw[->-=0.5] (-alpha) -- (-eta3);

\draw[->-=0.5] ($(eta2)+(shift)$) -- ($ (eta1) +(shift) $);
\draw[->-=0.5] ($(alpha)+(shift)$) -- ($ (eta3) +(shift) $);
\draw[->-=0.5] ($(-eta1)+(shift)$) -- ($ (-eta2) +(shift) $);
\draw[->-=0.5] ($(-eta3)+(shift)$) -- ($ (-alpha) +(shift) $);

\foreach \pos/\label in {eta1/ia_j, eta2/ib_j, eta3/ia_m, alpha/i{\beta_{m}}, -eta1/-ia_{m-j},-eta2/-ib_{m-j},-eta3/-ia_m, -alpha/-i{\beta_{m}}}{
\node[circle,fill=black, inner sep=0pt,minimum size=3pt,label=below:{\tiny $\label$}] at  (\pos) {};
\node[circle,fill=black, inner sep=0pt,minimum size=3pt,label=below:{\tiny $\label$}] at  ($ (\pos)+(shift) $)  {};
}
\coordinate (eta0) at (1.2,4);
\coordinate (eta01) at (1,4);
\coordinate (eta02) at (1.4,4);
\draw[fill] (eta0) circle [radius=0.02];
\draw[fill] (eta01) circle [radius=0.02];
\draw[fill] (eta02) circle [radius=0.02];
\coordinate (eta00) at (1.2,1.7);
\coordinate (eta001) at (1,1.7);
\coordinate (eta002) at (1.4,1.7);
\draw[fill] (eta00) circle [radius=0.02];
\draw[fill] (eta001) circle [radius=0.02];
\draw[fill] (eta002) circle [radius=0.02];
\foreach \pos in {eta1,eta2,-eta1,-eta2}{
\draw[dashed, black!30] (\pos) -- ($ (\pos) + (shift) $);
}
\foreach \pos in {eta3,alpha,-eta3,-alpha}{
\draw[dashed, black!30] (\pos) -- ($ (\pos) + (shift) $);
}
\draw[->- = .25, red] ($ 0.5*(-alpha)+0.5*(-eta3)  $) .. controls + (100:1cm) and + (100:1cm) .. ($ 0.5*(alpha)+0.5*(eta3) $);
\draw[->- = .25, red] ($ 0.5*(-alpha)+0.5*(-eta3)  $) .. controls + (100:.7cm) and + (100:.7cm) .. ($ 0.5*(eta2)+0.5*(eta1) $);
\draw[->- = .25, red] ($ 0.5*(-alpha)+0.5*(-eta3)  $) .. controls + (100:.4cm) and + (100:.4cm) .. ($ 0.5*(-eta1)+0.5*(-eta2) $);

\draw[->- = .25, red]  ($ 0.5*(-eta1)+0.5*(-eta2)  + (shift) $) .. controls + (-100:0.4cm) and + (-100:.4cm) .. ($ 0.5*(-alpha)+0.5*(-eta3) +(shift) $);
\draw[->- = .25, red]  ($ 0.5*(eta1)+0.5*(eta2)  + (shift) $) .. controls + (-100:0.7cm) and + (-100:.7cm) .. ($ 0.5*(-alpha)+0.5*(-eta3) +(shift) $);
\draw[->- = .25, red]  ($ 0.5*(eta3)+0.5*(alpha)  + (shift) $) .. controls + (-100:1cm) and + (-100:1cm) .. ($ 0.5*(-alpha)+0.5*(-eta3) +(shift) $);

\draw[red!30, dashed] ($ 0.5*(-eta1)+0.5*(-eta2)  $) -- ++ (shift);
\draw[red!30, dashed] ($ 0.5*(-eta3)+0.5*(-alpha)  $) -- ++ (shift);
\draw[red!30, dashed] ($ 0.5*(eta1)+0.5*(eta2)  $) -- ++ (shift);
\draw[red!30, dashed] ($ 0.5*(eta3)+0.5*(alpha)  $) -- ++ (shift);

\node[above, red] at (-1.3,3.8) {\small $\scriptstyle{\tilde{{\mathcal B}}_j}$};
\node[above, red] at (1.2,4) {\small $\scriptstyle{\tilde{{\mathcal B}}_{m-1+j}}$};
\node[above, red] at (3.7,4.05) {\small $\scriptstyle{\tilde{{\mathcal B}}_{2m-1}}$};
\draw[->- = .25, blue] ( $0.5*(eta1)+0.5*(eta2) $) ellipse (0.8cm and .4cm);
\draw[->- = .25, blue] ( $0.5*(eta3)+0.5*(alpha) $) ellipse (0.8cm and .4cm);
\draw[->- = .25, blue] ( $0.5*(-eta1)+0.5*(-eta2) $) ellipse (0.8cm and .4cm);
\node[blue, above] at (4.9,3.9) {\small $\scriptstyle\tilde{\mathcal A}_{2m-1}$};
\node[blue, above] at (2.8,3.85) {\small $\scriptstyle\tilde{\mathcal A}_{m-1+j}$};
\node[blue, above] at (0.01,3.85) {\small $\scriptstyle\tilde{\mathcal A}_{j}$};
\end{tikzpicture}
}
\caption{The homology basis for the Riemann surface
$\mathfrak{X}_{{\beta_{m}}}$ associated with
$R_{{\beta_{m}}}(k) = \sqrt{(k^2+a_m^2)(k^2 +\beta_{m}^2) \prod\limits_{j=1}^{m-1}(k^2+a_j^2)(k^2 +b_j^2)}$.
}
\label{fig:homology2}
\end{figure}
Define
\begin{align}
\tilde{\omega}_j(k)=\int_{-ia_{m}}^{k}\tilde{\psi}_j(\zeta)\d \zeta,\ \ \tilde{\psi}_j(\zeta)=\frac{\sum\limits_{l=1}^{2m-1}\tilde{c}_{jl}\zeta^{2m-1-l}}{R_{{\beta_{m}}}(\zeta)}, \ j=1,2,\ldots,2m-1,
\end{align}
where the coefficients $\tilde{c}_{jl}$ are determined by
$$\oint_{\tilde{\mathcal{A}}_l}\d{\tilde{\omega}}_j(\mathcal{P})=\delta_{jl},\ l,j=1,2,\ldots,2m-1.$$
Define a $(2m-1)\times(2m-1)$ period matrix $\tilde{B}$ by
\begin{gather}
\tilde{B}_{jl}=\oint_{\tilde{\mathcal{B}}_l}\d{\tilde{\omega}}_j(\mathcal{P}),\ l,j=1,2,\ldots,2m-1,
\end{gather}
where the integral paths $\tilde{\mathcal{A}}_l$ and $\tilde{\mathcal{B}}_l$ are shown in Figure 4.
Then we introduce the Abel mapping $\tilde{A}_{j}(\mathcal{P})$ by
\begin{gather}\label{tA}
\tilde{A}_{j}(\mathcal{P})=\int_{\tilde{\mathcal{P}}_0}^{\mathcal{P}}\d \tilde{\omega}_j(\mathcal{\tilde{P}}),
\end{gather}
where the fixed point $\tilde{\mathcal{P}}_0$ on the Riemann surface
$\mathfrak{X}_{{\beta_{m}}}$ satisfies $\tilde{\pi}(\tilde{\mathcal{P}}_0)=-ia_{m}$ and the Abelian integral $\tilde{\mathbf{A}}(k)=(\tilde{A}_1(k), \tilde{A}_2(k), \ldots, \tilde{A}_{2m-1}(k))^{\top}$ is considered in the upper sheet of $\mathfrak{X}_{{\beta_{m}}}$.
Define the additional Abel integrals as follows:
\begin{align}
\tilde{\varsigma}_j(k)=\int_{-i{\beta_{m}}}^{k}\tilde{\varphi}_j(\zeta)\d \zeta, \ \tilde{\varphi}_j(\zeta)=\frac{\sum\limits_{l=1}^{4m-1}\tilde{s}_{jl}\zeta^{4m-1-l}}{R_{{\beta_{m}}}(\zeta)},\ j=1,2,\ldots,2m-1,
\end{align}
where $\tilde{s}_{jl}$ are chosen so that
\begin{align}
&\tilde{\varsigma}_j(k)\to k^{j}+\mathcal{O}(1),\text{as}\  k \to \infty^+,\\
&\oint_{\tilde{\mathcal{A}}_{j}}\tilde{\varphi}_j(\zeta)\d  \zeta=0,\ j,l=1,2,\ldots,2m-1.
\end{align}
As we did in section 3, we define
\begin{align}\label{tUVW}
\tilde{U}_{j,l}=\oint_{\tilde{\mathcal{B}}_{l}}\tilde{\varphi}_{j}(\zeta)\d  \zeta,\ j,l=1,2,\ldots,2m-1,
\end{align}
and
\begin{align}\label{tJ123}
\tilde{J}_{j}=\lim\limits_{k\to \infty}\tilde{\varsigma}_{j}(k)-k^{j},\ j=1,2,\ldots,2m-1.
\end{align}
We observe that
\begin{align}\label{tvarsigmajump}
&\tilde{\varsigma}_{j+}(k)+\tilde{\varsigma}_{j-}(k)=\tilde{U}_{j,l},\ k\in \Sigma_{-(m-l)},\ l=1,2,\ldots,m-1,\ j=1,2,\ldots, 2m-1,\\
&\tilde{\varsigma}_{j+}(k)+\tilde{\varsigma}_{j-}(k)=\tilde{U}_{j,m-1+l},\ k\in \Sigma_{l},\ l=1,2,\ldots,m-1,\ j=1,2,\ldots, 2m-1,\\
&\tilde{\varsigma}_{j+}(k)+\tilde{\varsigma}_{j-}(k)=\tilde{U}_{j,2m-1},\ k\in \Sigma_{-m,\beta_m}
\ j=1,2,\ldots, 2m-1.
\end{align}
Next, we introduce the Riemann-theta function
\begin{align}
{\Theta}(\tilde{\mathbf{v}})=\sum\limits_{\tilde{\mathbf{w}}\in \Z^{2m-1}}e^{\pi i(\tilde{B}\tilde{\mathbf{w}},\tilde{\mathbf{w}})+2\pi i (\tilde{\mathbf{w}},\tilde{\mathbf{v}})},
\end{align}
where $\tilde{\mathbf{v}}=(\tilde{v}_1,\tilde{v}_2,\ldots,\tilde{v}_{2m-1})^{\top}$, $\tilde{\mathbf{w}}=(\tilde{w}_1,\tilde{w}_2,\ldots,\tilde{w}_{2m-1})^{\top}$ and $(\tilde{\mathbf{w}},\tilde{\mathbf{v}})=\tilde{w}_1 \tilde{v}_1+\tilde{w}_2 \tilde{v}_2+\cdots+\tilde{w}_{2m-1} \tilde{v}_{2m-1}$.
Finally, we define a function $\tilde{\gamma}(k)$ which is analytic in $\C\backslash (\bigcup\limits_{j=1}^{m-1}(\Sigma_{j}\cup \Sigma_{-j})\cup \Sigma_{m, \beta_{m}}\cup \Sigma_{-m, \beta_m})$ by
\begin{align}
\tilde{\gamma}(k)=\le(\frac{(k-i{\beta_{m}})(k+ia_m)}{(k+i{\beta_{m}})(k-ia_m)}\prod\limits_{j=1}^{m-1}\frac{(k-ib_j)(k+ia_j)}{(k+ib_j)(k-ia_j)} \ri)^{\frac{1}{4}}.
\end{align}	
Define the zeros of function $\tilde{\gamma}(k)-\frac{1}{\tilde{\gamma}(k)}$ by  $\infty^{+}$, $\tilde{\mathcal{P}}_1$, $\tilde{\mathcal{P}}_2$, $\ldots$, $\tilde{\mathcal{P}}_{2m-1}$, let $\tilde{D}$ denote the divisor $\tilde{D}=\sum\limits_{j=1}^{2m-1}\tilde{\mathcal{P}}_j$, the constant vector $\tilde{\mathbf{d}}$ is given by
\begin{align}\label{td}
\tilde{\mathbf{d}}=\tilde{\mathbf{A}}(\tilde{D})+\mathbf{\tilde{K}},
\end{align}
where $\tilde{\mathbf{K}}$ is the Riemann-theta constant vector which entries are defined by
\begin{align}
\tilde{K}_{j}=\frac{1}{2}\sum_{l=1}^{2m-1}\tilde{B}_{lj}-\frac{j}{2}.
\end{align}
Using the Riemann-theta function, we define a $2\times2$ matrix-valued function $\tilde{Q}(k)$ which entries are given by
\begin{equation}
\begin{array}{l}
\tilde{Q}_{11}(x;k)=\frac{1}{2}(\tilde{\gamma}(k)+\frac{1}{\tilde{\gamma}(k)})\frac{{\Theta}(\tilde{\mathbf{A}}(\infty)+\tilde{\mathbf{d}})}{{\Theta}(\tilde{\mathbf{A}}(k)+\tilde{\mathbf{d}})} \frac{{\Theta}(\tilde{\mathbf{A}}(k)+\tilde{\mathbf{d}}-\frac{1}{2 \pi i}\sum\limits_{j=1}^{2m-1}\tilde{\mathbf{U}}_j\tilde{h}_j)}{{\Theta}(\tilde{\mathbf{A}}(\infty)+\tilde{\mathbf{d}}-\frac{1}{2 \pi i}\sum\limits_{j=1}^{2m-1}\tilde{\mathbf{U}}_j\tilde{h}_j)} \exp(-\sum\limits_{j=1}^{2m-1}(\tilde{\varsigma}_j-\tilde{J}_j)\tilde{h}_j),\\
\tilde{Q}_{12}(x;k)=\frac{1}{2}(\tilde{\gamma}(k)-\frac{1}{\tilde{\gamma}(k)})\frac{{\Theta}(\tilde{\mathbf{A}}(\infty)+\tilde{\mathbf{d}})}{{\Theta}(\tilde{\mathbf{A}}(k)-\tilde{\mathbf{d}})} \frac{{\Theta}(\tilde{\mathbf{A}}(k)-\tilde{\mathbf{d}}+\frac{1}{2 \pi i}\sum\limits_{j=1}^{2m-1}\tilde{\mathbf{U}}_j\tilde{h}_j)}{{\Theta}(\tilde{\mathbf{A}}(\infty)+\tilde{\mathbf{d}}-\frac{1}{2 \pi i}\sum\limits_{j=1}^{2m-1}\tilde{\mathbf{U}}_j\tilde{h}_j)} \exp(\sum\limits_{j=1}^{2m-1}(\tilde{\varsigma}_j+\tilde{J}_j)\tilde{h}_j),\\
\tilde{Q}_{21}(x;k)=\frac{1}{2}(\tilde{\gamma}(k)-\frac{1}{\tilde{\gamma}(k)})\frac{{\Theta}(\tilde{\mathbf{A}}(\infty)+\tilde{\mathbf{d}})}{{\Theta}(\tilde{\mathbf{A}}(k)-\tilde{\mathbf{d}})} \frac{{\Theta}(\tilde{\mathbf{A}}(k)-\tilde{\mathbf{d}}-\frac{1}{2 \pi i}\sum\limits_{j=1}^{2m-1}\tilde{\mathbf{U}}_j\tilde{h}_j)}{{\Theta}(\tilde{\mathbf{A}}(\infty)+\tilde{\mathbf{d}}+\frac{1}{2 \pi i}\sum\limits_{j=1}^{2m-1}\tilde{\mathbf{U}}_j\tilde{h}_j)} \exp(-\sum\limits_{j=1}^{2m-1}(\tilde{\varsigma}_j+\tilde{J}_j)\tilde{h}_j),\\
\tilde{Q}_{22}(x;k)=\frac{1}{2}(\tilde{\gamma}(k)+\frac{1}{\tilde{\gamma}(k)})\frac{{\Theta}(\tilde{\mathbf{A}}(\infty)+\tilde{\mathbf{d}})}{{\Theta}(\tilde{\mathbf{A}}(k)+\tilde{\mathbf{d}})} \frac{{\Theta}(\tilde{\mathbf{A}}(k)+\tilde{\mathbf{d}}+\frac{1}{2 \pi i}\sum\limits_{j=1}^{2m-1}\tilde{\mathbf{U}}_j\tilde{h}_j)}{{\Theta}(\tilde{\mathbf{A}}(\infty)+\tilde{\mathbf{d}}+\frac{1}{2 \pi i}\sum\limits_{j=1}^{2m-1}\tilde{\mathbf{U}}_j\tilde{h}_j)} \exp(\sum\limits_{j=1}^{2m-1}(\tilde{\varsigma}_j-\tilde{J}_j)\tilde{h}_j),
\end{array}
\end{equation}
where $\tilde{\mathbf{U}}_j=(\tilde{U}_{j,1}, \tilde{U}_{j,2},\ldots, \tilde{U}_{j,2m-1} )^{\top}\in \C^{2m-1}$.
Finally, we can construct the matrix solution $\tilde{S}^{\infty}(k)$ of RH problem \ref{RHP6} as follows:
\begin{align}
\tilde{S}^{\infty}(k)=e^{-\tilde{h}_0\sigma_3} \tilde{Q}(k) e^{\tilde{h}(k)\sigma_3}.
\end{align}
Next, we construct the local parametrix at $k=i\beta_{m}$. With the help of the local behavior of the function $g_{\beta_{m}}(k)-kx-4k^3t$ at $k=i\beta_{m}$, we introduce a local variable $\lambda=\lambda(k,\xi)$
$$-2i({g}_{\beta_{m}}(k)-\theta(x,t;k) )=\frac{4}{3}t\lambda^{3/2},\ k \in B^{(i\beta_{m})}_{\rho} = \le\{ k \in \mathbb{C} \le| \,  \le|k - i\beta_{m}\ri|< \rho \ri.  \ri\},$$
and the branch cut for $\lambda^{3/2}$ is $\lambda\in (-\infty,0)$, corresponds to $k\in i(a_m,\beta_{m})$.
Next, we introduce the model parametrix  $\Psi_{Ai}(\lambda)$ as in \cite{Girotti-2}, the function $\Psi_{Ai}(\lambda)$ satisfies the jump conditions
\begin{align}
\Psi_{Ai+}(\lambda)=\Psi_{Ai-}(\lambda)\begin{cases}
\begin{pmatrix}
0&i\\i&0
\end{pmatrix},\ \ \lambda\in (-\infty,0),\\
\begin{pmatrix}
1&-i\\0&1
\end{pmatrix},\ \ \lambda\in (\infty e^{\pm\frac{2\pi i}{3}},0),\\
\begin{pmatrix}
1&0\\i&1
\end{pmatrix},\ \ \lambda\in (0,+\infty),
\end{cases}
\end{align}
with the asymptotics at infinity
$$ \Psi_{Ai}(\lambda)=\lambda^{-\frac{\sigma_3}{4}}\frac{1}{\sqrt{2}}\begin{pmatrix}1&-1\\1&1 \end{pmatrix} (  I +\mathcal{O}(\frac{1}{\lambda^{\frac{3}{2}}})) e^{\frac{2}{3}\lambda^{\frac{3}{2}}\sigma_3}, \ \ \text{uniformly in}\ \arg \lambda\in[-\pi,\pi].$$
Finally, we construct the local parametrix $\tilde{P}^{i\beta_{m}}(k)$ around the endpoint $i\beta_{m}$ by
\begin{align}
\tilde{P}^{i\beta_{m}}(k)=\tilde{D}(k) \Psi_{Ai}( t^{\frac{2}{3}}\lambda(k;\xi))e^{i( g_{\beta_{m}}(k)-\theta(x,t;k))\sigma_3} ( f_{\beta_{m}}(k) \sqrt{2\hat{r}_m(k)})^{\sigma_3}, \ \ k\in B^{(i\beta_{m})}_{\rho},
\end{align}
where $\tilde{D}(k)$ is given by
$$ \tilde{D}(k)=\tilde{S}^{\infty}(k)( f_{\beta_{m}}(k) \sqrt{2\hat{r}_m(k)})^{-\sigma_3}\frac{1}{\sqrt{2}}\begin{pmatrix}1&1\\-1&1 \end{pmatrix}(t^{\frac{2}{3}}\lambda)^{\frac{\sigma_3}{4}}.$$
On the boundary $\partial B^{(i\beta_{m})}_{\rho}$ we have the following matching condition
$$\tilde{S}(k) (\tilde{P}^{i\beta_{m}}(k))^{-1}= I +\mathcal{O}(\frac{1}{t}),\ \ \text{as}\ t\to\infty.$$
The local parametrix of other endpoints can be constructed similarly as \cite{Girotti-2}.
We define the following error function $\tilde{\mathcal{E}}(k)$:
 \begin{gather}
	\tilde{\mathcal{E}}(k) = \tilde{S}(k) \left(\tilde{P}(k) \right)^{-1},
\text{}\\
	\tilde{P}(k) = \begin{cases}
		\tilde{S}^{\infty}(k),  & k\in\mathbb{C}\backslash (\bigcup\limits_{j=1}^{m}B^{(\pm i a_j)}_{\rho}\cup \bigcup\limits_{j=1}^{m-1}B^{(\pm i b_j)}_{\rho}\cup B^{(\pm i\beta_{m})}_{\rho} ), \\
        \tilde{P}^{ia_j}(k), & k\in B^{(ia_j)}_{\rho},\ j=1,2,\ldots,m, \\
        \tilde{P}^{ib_j}(k), & k\in B^{(ib_j)}_{\rho},\ j=1,2,\ldots,m-1, \\
		\tilde{P}^{i\beta_{m}} (k), & k\in B^{(i\beta_{m})}_{\rho}, \\
		\tilde{P}^{-i\beta_{m}}(k), & k\in B^{(-i\beta_{m})}_{\rho}, \\
        \tilde{P}^{-ia_j}(k), & k\in B^{(-ia_j)}_{\rho},\ j=1,2,\ldots,m, \\
        \tilde{P}^{-ib_j}(k), & k\in B^{(-ib_j)}_{\rho},\ j=1,2,\ldots,m-1.
	\end{cases}
\end{gather}
Lemma \ref{lemma5.2} and the matching conditions of local parametrix $\tilde{P}^{\pm ia_j}(k)$, $\tilde{P}^{\pm ib_j}(k)$, $\tilde{P}^{\pm i\beta_{m}} (k)$ allow to conclude that
\begin{gather}
\tilde{\mathcal{E}}(x,t;k)= I +\mathcal{O}(t^{-1}),\ \text{as}\  t\ \to \infty.
\end{gather}
Tracing back all the transformations, we get the following theorem.
\begin{theorem}
In the region $\xi_{2m-2}<\xi<\xi_{2m-1}$, ${\beta_{m}}(\xi)\in (a_{m},b_{m})$ and satisfies the Whitham modulation equation \eqref{alpha}, the large-time asymptotics is expressed by the Riemann-theta function of genus $2m-1$
\begin{gather}
\begin{array}{l}
u(x,t)={\alpha}_2\frac{{\Theta}(\tilde{\mathbf{A}}(\infty)-\tilde{\mathbf{d}}+\frac{1}{2\pi i}\sum\limits_{j=1}^{2m-1}\tilde{\mathbf{U}}_j\tilde{h}_j)}{{\Theta}(\tilde{\mathbf{A}}(\infty)+\tilde{\mathbf{d}}-\frac{1}{2\pi i}\sum\limits_{j=1}^{2m-1}\tilde{\mathbf{U}}_j\tilde{h}_j)} \frac{{\Theta}(\tilde{\mathbf{A}}(\infty)+\tilde{\mathbf{d}})}{{\Theta}(\tilde{\mathbf{A}}(\infty)-\tilde{\mathbf{d}})} \exp(\sum\limits_{j=1}^{2m-1}2\tilde{J}_j\tilde{h}_j-2\tilde{h}_0)+\mathcal{O}(\frac{1}{t}),
\end{array}
\end{gather}
where $\tilde{\mathbf{A}}$, $\tilde{\mathbf{d}}$ and $\tilde{\mathbf{U}}_j$ are defined by \eqref{tA}, \eqref{td}, \eqref{tUVW}, respectively, the constants $\tilde{h}_0$, $\tilde{h}_1$, $\ldots$, $\tilde{h}_{2m-1}$ are given by \eqref{th0123}, the functions $\tilde{J}_1$, $\tilde{J}_2$, $\ldots$, $\tilde{J}_{2m-1}$ are shown in equation \eqref{tJ123} and ${\alpha}_2=\beta_{m}(\xi)-a_m+\sum\limits_{j=1}^{m-1}(b_j-a_j).$
\end{theorem}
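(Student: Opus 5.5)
The plan is to trace back the chain of explicit transformations $X\mapsto\tilde T\mapsto\tilde S$ and read off $u$ from the $1/k$ coefficient of the (1,2) entry. All these transformations have already been set up, so most steps are bookkeeping.

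\textbf{Step 1: reduce $X$ to $\tilde S$ at infinity.} The lens factors in \eqref{St} equal the identity for $k$ near $\infty$. Since $g_{\beta_m}(k)=\mathcal O(1/k)$ and $f_{\beta_m}(k)=1+\mathcal O(1/k)$, the diagonal factor $e^{ig_{\beta_m}\sigma_3}f_{\beta_m}^{\sigma_3}$ is diagonal and equals $I+\mathcal O(1/k)$. Hence
\[
\lim_{k\to\infty}kX_{12}=\lim_{k\to\infty}k\tilde S_{12}.
\]

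\textbf{Step 2: replace $\tilde S$ by $\tilde S^\infty$.} Outside the disks we write $\tilde S=\tilde{\mathcal E}\tilde S^\infty$. The estimate $\tilde{\mathcal E}=I+\mathcal O(t^{-1})$ comes from Lemma \ref{lemma5.2}, which gives exponentially small jumps on the lenses and on $i(\beta_m,b_m]$, together with the $\mathcal O(t^{-1})$ matching on the disk boundaries. The standard small-norm argument then also gives $\tilde{\mathcal E}(k)=I+\tilde{\mathcal E}_1/k+\dots$ with $\tilde{\mathcal E}_1=\mathcal O(t^{-1})$. Therefore
\[
u(x,t)=2i\lim_{k\to\infty}k\tilde S^\infty_{12}(k)+\mathcal O(t^{-1}).
\]

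\textbf{Step 3: compute the residue of $\tilde S^\infty$.} From $\tilde S^\infty=e^{-\tilde h_0\sigma_3}\tilde Q e^{\tilde h\sigma_3}$ we get
\[
\tilde S^\infty_{12}=e^{-2\tilde h_0}\,\tilde Q_{12}\,e^{-\tilde h(k)}.
\]
Here $\tilde h(k)=\sum_{j\ge0}\tilde h_jk^j+\mathcal O(1/k)$. In $\tilde Q_{12}$ the factor $\exp\big(\sum_j(\tilde\varsigma_j+\tilde J_j)\tilde h_j\big)$ behaves like $\exp\big(\sum_j(k^j+2\tilde J_j)\tilde h_j\big)(1+o(1))$ by \eqref{tJ123}. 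The polynomial growth cancels against $e^{-\tilde h}$, which is exactly why $\tilde h$ was built with these coefficients. What remains is $e^{2\sum_j\tilde J_j\tilde h_j-2\tilde h_0}$.

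The theta quotient in $\tilde Q_{12}$ tends to
\[
\frac{\Theta(\tilde{\mathbf A}(\infty)-\tilde{\mathbf d}+\frac{1}{2\pi i}\sum_j\tilde{\mathbf U}_j\tilde h_j)}{\Theta(\tilde{\mathbf A}(\infty)+\tilde{\mathbf d}-\frac{1}{2\pi i}\sum_j\tilde{\mathbf U}_j\tilde h_j)}\cdot\frac{\Theta(\tilde{\mathbf A}(\infty)+\tilde{\mathbf d})}{\Theta(\tilde{\mathbf A}(\infty)-\tilde{\mathbf d})}.
\]
Finally, $\frac12(\tilde\gamma-\tilde\gamma^{-1})$ decays like $c/k$. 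Expanding $\log\tilde\gamma=\frac14\sum\log(\cdots)$ gives
\[
\tilde\gamma-\tilde\gamma^{-1}=\frac{1}{2k}\Big(-i\beta_m+ia_m+\sum_{j<m}(-ib_j+ia_j)\Big)\cdot 2+\mathcal O(k^{-2})=-\frac{i\alpha_2}{k}+\dots,
\]
so $2i\cdot\frac12\cdot(-i\alpha_2)=\alpha_2$. Combining the three factors gives the stated formula.

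\textbf{Main obstacle.} The delicate point is Step 2. It needs uniform solvability of the model problem, i.e.\ the denominators $\Theta(\tilde{\mathbf A}(\infty)\pm\tilde{\mathbf d}\mp\cdots)$ must stay away from zero, and the local Airy and Bessel parametrices must be uniformly bounded as $\xi$ varies in compact subsets of $(\xi_{2m-2},\xi_{2m-1})$. For the theta denominators I would argue by the reality and symmetry of the jump. The jump $\begin{pmatrix}0&i\\i&0\end{pmatrix}$ together with the unimodular diagonal jumps implies that $\tilde S^\infty$ has determinant one and is unique. The symmetry $k\mapsto-\bar k$ then makes the argument of $\Theta$ lie on a real torus shifted by $\tilde{\mathbf d}$. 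As in \cite{DeiftItsZhou,KVSD}, $\Theta$ does not vanish there, because $\tilde{\mathbf d}$ was chosen via the zeros of $\tilde\gamma\pm\tilde\gamma^{-1}$, giving a nonspecial divisor. The $\xi$-dependence enters only through $\beta_m(\xi)$, which is continuous and lies in $(a_m,b_m)$ by the earlier monotonicity lemma, so compactness yields the uniform bounds.
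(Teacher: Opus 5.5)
Your proposal is correct and follows the paper's route. The paper invokes Lemma \ref{lemma5.2} and the parametrix matching to get $\tilde{\mathcal E}=I+\mathcal O(t^{-1})$, then traces back $X\to\tilde T\to\tilde S\to\tilde S^\infty=e^{-\tilde h_0\sigma_3}\tilde Q e^{\tilde h\sigma_3}$, which is exactly your Steps 1--3; your remark on uniform non-vanishing of the theta denominators addresses a point the paper leaves implicit. There is one bookkeeping slip: $\tilde S^\infty_{12}=e^{-\tilde h_0}\tilde Q_{12}e^{-\tilde h(k)}$, not $e^{-2\tilde h_0}\tilde Q_{12}e^{-\tilde h(k)}$. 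The second factor $e^{-\tilde h_0}$ comes from the constant term of $\tilde h(k)$, and that is what produces your (correct) final exponent $2\sum_j\tilde J_j\tilde h_j-2\tilde h_0$.
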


\section{The second genus $2m-1$ sector : $ \xi_{2m-1}<\xi<\xi_{2m}$}
When  $\xi_{2m-1}<\xi<\xi_{2m}$, where $\xi_{2m-1}$, $m=2,3,\ldots, n$, are defined in \eqref{xi2},  $\xi_{2m}$, $m=2,3,\ldots, n-1$, are defined in \eqref{xi3},  and $\xi_{2n}=\infty$, in this case, we find that ${\beta_{m}}(\xi)=b_m$. We define a new scalar function $g_{b_m}(k)$ as follows:
\begin{equation}
\begin{array}{l}
g_{b_m}(k)=xk+4k^3t-x \int_{ib_m}^{k}\frac{\zeta^{2m}+\sum\limits_{j=1}^{m}\grave{c}_j\zeta^{2(m-j)}}{R_{b_m}(\zeta)}\d\zeta-12t \int_{ib_m}^{k}\frac{\zeta^{2m+2}+\frac{1}{2}\sum\limits_{j=1}^{m}(a_j^2+b_j^2) \zeta^{2m}+\sum\limits_{j=1}^{m}\acute{c}_j\zeta^{2(m-j)}}{R_{b_m}(\zeta)}\d\zeta,
\end{array}
\end{equation}
where $R_{b_m}(k)=\sqrt{\prod\limits_{j=1}^{m}(k^2+a_j^2)(k^2+b_j^2)}$, with the constants $\grave{c}_j$ and $\acute{c}_j$ are chosen so that
\begin{align}
&\int_{ib_j}^{ia_{j+1}}\frac{\zeta^{2m}+\sum\limits_{l=1}^{m}\grave{c}_l\zeta^{2(m-l)}}{R_{b_m}(\zeta)}\d\zeta=0,\ j=1,2,\ldots,m-1,\\ &\int_{-ia_1}^{ia_1}\frac{\zeta^{2m}+\sum\limits_{l=1}^{m}\grave{c}_l\zeta^{2(m-l)}}{R_{b_m}(\zeta)}\d\zeta=0,\\
&\int_{ib_j}^{ia_{j+1}}\frac{\zeta^{2m+2}+\frac{1}{2}\sum\limits_{l=1}^{m}(a_l^2+b_l^2) \zeta^{2m}+\sum\limits_{l=1}^{m}\acute{c}_l\zeta^{2(m-l)}}{R_{b_m}(\zeta)}\d\zeta=0, j=1,2,\ldots,m-1,\\ &\int_{-ia_1}^{ia_1}\frac{\zeta^{2m+2}+\frac{1}{2}\sum\limits_{l=1}^{m}(a_l^2+b_l^2) \zeta^{2m}+\sum\limits_{l=1}^{m}\acute{c}_l\zeta^{2(m-l)}}{R_{b_m}(\zeta)}\d\zeta=0.
\end{align}
Then the function $g_{b_m}(k)$ satisfies the following conditions
\begin{align}
&g_{b_m+}(k)+ g_{b_m-}(k)= 2kx+8k^3t, & k\in\bigcup\limits_{j=1}^{m}(\Sigma_j\cup\Sigma_{-j}),\label{g41}\\
&g_{b_m+}(k)-g_{b_m-}(k)=\Omega_{b_m,0} ,& k \in  i[-a_1,a_1], \label{g42}\\
&g_{b_m+}(k)-g_{b_m-}(k)=\Omega_{b_m,j} ,& k \in  i[b_j,a_{j+1}],\ j=1,2,\ldots,m-1, \label{g43}\\
&g_{b_m+}(k)-g_{b_m-}(k)=\Omega_{b_m,-j} ,& k \in  i[-a_{j+1},-b_j],\ j=1,2,\ldots,m-1, \label{g44}
\end{align}
with the integral constants $\Omega_{b_m,0}$, $\Omega_{b_m,j}$ and $\Omega_{b_m,-j}$ are given by
\begin{equation}
\begin{array}{l}
\Omega_{b_m,m-1}=-2x \int_{i{b_m}}^{ia_m}\frac{\zeta^{2m}+\sum\limits_{l=1}^{m}\grave{c}_l\zeta^{2(m-l)}}{R_{b_m+}(\zeta)}\d\zeta
-24t \int_{i{b_m}}^{i a_m}\frac{\zeta^{2(m+1)}+\frac{1}{2}\sum\limits_{l=1}^{m}(a_l^2+b_l^2) \zeta^{2m}+\sum\limits_{l=1}^{m}\acute{c}_l\zeta^{2(m-l)}}{R_{b_m+}(\zeta)}\d\zeta,
\\
\Omega_{b_m,j}=\Omega_{b_m,m-1}+\Omega_{b_m,m-2}+\cdots+\Omega_{b_m,j+1}-2x \int_{i{b_{j+1}}}^{ia_{j+1}}\frac{\zeta^{2m}+\sum\limits_{l=1}^{m}\grave{c}_l\zeta^{2(m-l)}}{R_{b_m+}(\zeta)}\d\zeta
\\ \ \ \ \ \ \ \ \ \ -24t \int_{i{b_{j+1}}}^{ia_{j+1}}\frac{\zeta^{2(m+1)}+\frac{1}{2}\sum\limits_{l=1}^{m}(a_l^2+b_l^2) \zeta^{2m}+\sum\limits_{l=1}^{m}\acute{c}_l\zeta^{2(m-l)}}{R_{b_m+}(\zeta)}\d\zeta,\ j=0,1,\ldots, m-2,
\\
\Omega_{b_m,-j}=\Omega_{b_m,j}, \ j=1,2,\ldots,m-1.
\end{array}
\end{equation}
The following analysis is similar to the study in section 5. Next we have the following theorem.
\begin{thm}
In the regime $t \to +\infty$, $\xi_{2m-1}<\xi<\xi_{2m}$, the mKdV soliton gas solution $u(x,t)$ has the following asymptotic behaviour
\begin{gather}
\begin{array}{l}
u(x,t)=\alpha_3\frac{{\Theta}(\mathbf{\grave{A}}(\infty)-\mathbf{\grave{d}}+\frac{1}{2\pi i}\sum\limits_{j=1}^{2m-1}\mathbf{\grave{U}}_{j}\grave{h}_{j})}{{\Theta}(\mathbf{\grave{A}}(\infty)+\mathbf{d}-\frac{1}{2\pi i}\sum\limits_{j=1}^{2m-1}\mathbf{\grave{U}}_{j}\grave{h}_{j})} \frac{{\Theta}(\mathbf{\grave{A}}(\infty)+\mathbf{\grave{d}})}{{\Theta}(\mathbf{\grave{A}}(\infty)-\mathbf{\grave{d}})} \exp(2\sum\limits_{j=1}^{2m-1}\grave{J}_j\grave{h}_j-2\grave{h}_0)+\mathcal{O}(\frac{1}{t}),
\end{array}
\end{gather}
where $\mathbf{\grave{A}}$, $\mathbf{\grave{d}}$, $\mathbf{\grave{U}}_{j}$, $\grave{J}_j$ and $\grave{h}_j$ have the same form as $\mathbf{\tilde{A}}$, $\mathbf{\tilde{d}}$, $\mathbf{\tilde{U}}_{j}$, $\tilde{J}_j$ and $\tilde{h}_j$, the only difference being the substitution of $b_m$ for $\beta_m$, and $\alpha_3=\sum\limits_{j=1}^{m}(b_{j}-a_{j}).$
\end{thm}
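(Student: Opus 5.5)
The plan is to repeat the nonlinear steepest descent analysis of Section 5 with the endpoint $\beta_m$ frozen at $b_m$. The outer parametrix is then built on the genus $2m-1$ surface with branch points $\pm ia_j,\pm ib_j$, $j\le m$.

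\emph{Conjugation.} First I verify that $g_{b_m}$ has the stated jumps \eqref{g41}--\eqref{g44}. Its derivative $g_{b_m}'-x-12k^2t$ is $-(x P_1(k)+12tP_2(k))/R_{b_m}(k)$, with monic even polynomials $P_1,P_2$ of degrees $2m$ and $2m+2$. The normalization conditions kill the integrals over the gaps $i[-a_1,a_1]$ and $i[b_j,a_{j+1}]$, so $g_{b_m}$ is single valued off $i[-b_m,b_m]$. Since $R_{b_m}$ changes sign across the bands, the sum $g_++g_-$ equals $2\theta$ there, and $g_{b_m}=\mathcal O(1/k)$ at infinity. The coefficient $\tfrac12\sum(a_l^2+b_l^2)$ of $\zeta^{2m}$ in $P_2$ cancels the $k^{-2}$ term of $k^{2m+2}/R_{b_m}$.

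I define $f_{b_m}$ by the same Sokhotski--Plemelj formula \eqref{ft} with $\beta_m=b_m$, so that $\Sigma_{m,\beta_m}$ becomes the full band $\Sigma_m$. I set $\grave T=Xe^{ig_{b_m}\sigma_3}f_{b_m}^{\sigma_3}$. I then factorize the band jumps exactly as in Section 5 and open lenses around $\Sigma_{\pm j}$, $j\le m$, to obtain $\grave S$.

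\emph{Sign lemma (the main obstacle).} I need the analogue of Lemma \ref{lemma5.2}. Unlike the previous sector, there is no $(k^2+\beta_m^2)$ factor with a double-root structure. Instead
\[
g_{b_m}'(k)-x-12k^2t=-\frac{12t\,\prod_{j=0}^{m}(k^2+\kappa_j^2)}{R_{b_m}(k)},
\]
where the polynomial $P(k)=xP_1+12tP_2$ has degree $2m+2$ and $m+1$ pairs of roots. The gap conditions force one root pair in each gap $i(-a_1,a_1)$ and $i(b_j,a_{j+1})$, which accounts for $m$ pairs. The remaining pair $\pm i\kappa_m$ must lie in $i(b_m,\infty)$. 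The position of $\kappa_m$ depends on $\xi$: it equals $b_m$ exactly at $\xi=\xi_{2m-1}$ by \eqref{xi3}, and it reaches $a_{m+1}$ at $\xi=\xi_{2m}$, by comparison with \eqref{xi2} at level $m+1$.

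The key step is to show that $\kappa_m$ is monotone in $\xi$. For this I would reuse the determinant identity from the proof of $\xi_{2m-1}<\xi_{2m}$: write $\kappa_m^2$ as a ratio of determinants that is strictly monotone in $\xi$ and positive via the Vandermonde-type integral representation used for $\det\tilde M$. This gives $\kappa_m\in(b_m,a_{m+1})$ for $\xi\in(\xi_{2m-1},\xi_{2m})$. For $m=n$ with $\xi_{2n}=\infty$, only $\kappa_n>b_n$ is needed.

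With the root pair in the gap $(b_m,a_{m+1})$, $\Im(g_{b_m}-\theta)$ is increasing along $i(b_m,\infty)$ away from that root. It is therefore positive and bounded below by $ct$ on $\bigcup_{j>m}\Sigma_j$, with the conjugate inequality on $\Sigma_{-j}$. On the lenses, the Cauchy--Riemann argument of Lemma \ref{lemma5.2} applies verbatim, because the sign of $\Im(g'_{\pm}-1)$ on each band is unchanged. Consequently all jumps off the bands $\bigcup_{j\le m}(\Sigma_j\cup\Sigma_{-j})$ are $I+\mathcal O(e^{-ct})$ uniformly, away from the endpoints.

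\emph{Model problem, parametrices and error.} The outer problem is RH problem \ref{RHP6} with $\beta_m=b_m$. It is solved by $\grave S^\infty=e^{-\grave h_0\sigma_3}\grave Q e^{\grave h\sigma_3}$, where $\grave Q$ is built from the theta functions on $\mathfrak X_{b_m}$ exactly as $\tilde Q$; the proof is the proposition of Section 3 verbatim. At all endpoints $\pm ia_j$ and $\pm ib_j$, $j\le m$, the phase now vanishes like a square root, not like $(k-ib_m)^{3/2}$. So I use Bessel parametrices as in Section 3, with $\sqrt\mu\propto t(g_{b_m}-\theta)$. Each gives a matching error $\mathcal O(t^{-1})$, uniformly as long as $\xi$ stays in a compact subset of $(\xi_{2m-1},\xi_{2m})$, so that $\kappa_m$ stays away from $b_m$.

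A small norm argument then gives $\grave{\mathcal E}=I+\mathcal O(t^{-1})$. Reading off $u=2i\lim_{k\to\infty} k\,\grave S^\infty_{12}$ yields the formula. The expansion of $\grave\gamma$ at infinity produces the prefactor $\alpha_3=\sum_{j\le m}(b_j-a_j)$, as $\alpha_1$ arose in Section 3.
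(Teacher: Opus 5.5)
Your overall scheme matches what the paper means by ``similar to Section 5''. That scheme is: conjugate by $g_{b_m}$ and $f_{b_m}$, open lenses on $\Sigma_{\pm j}$ for $j\le m$, solve the genus-$(2m-1)$ model problem with $\beta_m=b_m$, place Bessel parametrices at all endpoints, and finish with a small-norm argument. However, the sign lemma for the upper bands, which you rightly single out as the crux, is argued incorrectly.

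\textbf{The shape of $\Im(g_{b_m}-\theta)$ above $ib_m$.} For $s>b_m$, $R_{b_m}(is)$ is real with sign $(-1)^m$. Writing $N=xP_1+12tP_2=12t\prod_{j=0}^m(k^2+\kappa_j^2)$, one has $\frac{\mathrm{d}}{\mathrm{d}s}\Im(g_{b_m}-\theta)(is)=-N(is)/R_{b_m}(is)$. Hence, when $\kappa_m>b_m$, the function $\Im(g_{b_m}-\theta)(is)$:
\begin{itemize}
\item vanishes at $s=b_m$;
\item strictly decreases on $(b_m,\kappa_m)$;
\item increases only on $(\kappa_m,\infty)$.
\end{itemize}
So it is negative just above $ib_m$, as it must be at a hard edge where it carries the lens sign. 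The location of $\kappa_m$ alone therefore says nothing about its sign on $\Sigma_{m+1}$.

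\textbf{The threshold $\xi_{2m}$.} Your identification of the upper threshold is also wrong. Let $\beta_{m+1}\downarrow a_{m+1}$ in the level-$(m+1)$ $g$-function. The factor $k^2+\beta_{m+1}^2$ from Lemma \ref{lemma5.2} cancels the collapsing pair of branch points. What remains is $g_{b_m}$ at $\xi=\xi_{2m}$, subject to the extra gap condition $\int_{ib_m}^{ia_{m+1}}(g_{b_m}'-\theta')\,\mathrm{d}k=0$, i.e.\ $\Im(g_{b_m}-\theta)(ia_{m+1})=0$. This forces a sign change of $N$ strictly inside the gap, so $\kappa_m(\xi_{2m})<a_{m+1}$. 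What reaches $ia_{m+1}$ at $\xi_{2m}$ is the zero of $\Im(g_{b_m}-\theta)$ on the axis, not $\kappa_m$. In particular, $\kappa_m\in(b_m,a_{m+1})$ still holds for $\xi$ slightly above $\xi_{2m}$. There the jump on $\Sigma_{m+1}$ near $ia_{m+1}$ is exponentially large and a new band opens, so your argument cannot be right as stated.

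\textbf{How to repair it.} The fix is easier than the determinant route you sketch.
\begin{itemize}
\item The constants $\grave{c}_j,\acute{c}_j$ do not depend on $\xi$, so $(g_{b_m}-\theta)/t$ is affine in $\xi$.
\item For $s>b_m$, $\partial_\xi\, t^{-1}\Im(g_{b_m}-\theta)(is)=-\int_{b_m}^{s}P_1(i\sigma)/R_{b_m}(i\sigma)\,\mathrm{d}\sigma<0$, because $P_1(i\sigma)$ and $R_{b_m}(i\sigma)$ both have sign $(-1)^m$.
\item Hence $\Im(g_{b_m}-\theta)(ia_{m+1})>0$ precisely when $\xi<\xi_{2m}$; this is what \eqref{xi2} at level $m+1$ encodes.
\item Combined with the single-minimum shape above, this forces $\kappa_m<a_{m+1}$. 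It also gives $\Im(g_{b_m}-\theta)\ge ct$ on $i[a_{m+1},\infty)$, uniformly for $\xi$ in compact subsets of the sector. The lower-half-plane inequality follows by symmetry.
\item You also need $\kappa_m>b_m$ for the lens inequalities on $\Sigma_{\pm m}$. This follows directly from $N(ib_m)=t(\xi-\xi_{2m-1})P_1(ib_m)$ and a sign count at $s=b_m$.
\end{itemize}
With this replacement, the rest of your plan goes through.
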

		
\section*{Acknowledgments}
This work is supported by the National Natural Science Foundation of China (Grant Nos.  12471234, 12201573, 12171439).

\section*{Conflict of interest}
The authors declare no conflicts of interest.

\section*{Data availability statement}
Data sharing is not applicable to this article as no new data were created
or analyzed during the current study.


\begin{thebibliography}{90}
\bibitem{MC} M. A. Alejo, C. Mu\~{n}oz, Nonlinear stability of MKdV breathers, Comm. Math. Phys. 324 (2013) 233--262.
\bibitem{Grava-3}M. Bertola, T. Grava, G. Orsatti, Soliton shielding of the focusing nonlinear Schr\"odinger equation, Phys. Rev. Lett. 130 (2023) 127201.
\bibitem{CKSTT} J. Colliander, M. Keel, G. Staffilani, H. Takaoka, T. Tao, Sharp global well-posedness for KdV and modified KdV on R and T, J. Amer. Math. Soc. {16} (2003) 705--749.
\bibitem{CL1}G. Chen, J. Q. Liu, Soliton resolution for the focusing modified KdV equation, Ann. Inst. H. Poincar\'{e} C Anal. Non Lin\'{e}aire 38 (2021) 2005--2071.
\bibitem{CL2}G. Chen, J. Q. Liu, Long-time asymptotics to the modified KdV equation in weighted
Sobolev spaces, Forum Math. Sigma 10 (2022) e66 1--52.
\bibitem{DeiftItsZhou}P. Deift, A. Its, X. Zhou, A Riemann-Hilbert approach to asymptotic problems arising in the theory of random matrix models, and also in the theory of integrable statistical mechanics, Ann. of Math. 146 (1997) 149-235.
\bibitem{DeiftZ}P. Deift, X. Zhou, A steepest descent method for oscillatory Riemann--Hilbert problems. Asymptotics for the MKdV equation, Ann. of Math. 137 (1993) 295--368.
\bibitem{DZZ}S. Dyachenko, D. Zakharov, V. Zakharov, Primitive potentials and bounded solutions of the KdV equation, Physica D 333 (2016) 148--156.
\bibitem{E1}G. A. El, The thermodynamic limit of the Whitham equations, {Phys. Lett. A} 311 (2003) 374--383.
\bibitem{E2}G. A. El, A. M. Kamchatnov, Kinetic equation for a dense soliton gas, {Phys. Rev. Lett.} 95 (2005) 204101.
\bibitem{E3}G. A. El, A. M. Kamchatnov, M. V. Pavlov, S. A. Zykov, Kinetic equation for a soliton gas and its hydrodynamic reductions, J. Nonlinear Sci. 21 (2010) 151--191.
\bibitem{E4}G. A. El, A. Tovbis, Spectral theory of soliton and breather gases for the focusing nonlinear Schr\"odinger equation, Phys. Rev. E 21 (2020) 052207.
\bibitem{GA}A. A. Gelash, D. S. Agafontsev, Strongly interacting soliton gas and formation of rogue waves, Phys. Rev. E 98 (2018) 042210.
\bibitem{GAZ}A. A. Gelash, D. S. Agafontsev, V. E. Zakharov, G. A. El, S. Randoux, P. Suret, Bound state soliton gas dynamics underlying the spontaneous modulational instability, Phys. Rev. Lett. 123 (2019) 234102.
\bibitem{GPR}P. Germain, F. Pusateri, F. Rousset, Asymptotic stability of solitons for mKdV equation, Adv. Math. 299 (2016) 272--330.
\bibitem{Girotti-1}M. Girotti, T. Grava, R. Jenkins, K. T. R. McLaughlin, Rigorous asymptotics of a KdV soliton gas, Commun. Math. Phys. 384 (2021) 733--784.
\bibitem{Girotti-2}M. Girotti, T. Grava, R. Jenkins, K. T. R. McLaughlin, A. Minakov, Soliton versus the gas: Fredholm determinants, analysis, and the rapid oscillations behind the kinetic equation, Commun. Pure Appl. Math. 76 (2023) 3233--3299.
\bibitem{TM} T. Grava, A. Minakov, On the long-time asymptotic behavior of the modified
              {K}orteweg--de {V}ries equation with step-like initial data, SIAM J. Math. Anal. 52 (2020) 5892--5993.
\bibitem{HXF}X. F. Han, X. E. Zhang, H. H. Dong, Large x asymptotics of the soliton gas for the nonlinear {S}chr\"odinger equation, Stud. Appl. Math. 154 (2025) e70027.
\bibitem{KPV}C. Kenig, G. Ponce, L. Vega, Well-posedness and scattering results for the generalized Korteweg-de Vries equation via the contraction principle, Comm.Pure Appl. Math. 46 (1993) 527--620.
\bibitem{KVSD}V. Kotlyarov, D. Shepelsky, Planar unimodular Baker-Akhiezer function for the nonlinear Schr\"{o}dinger equation, Ann. Math. Sci. Appl. 2 (2017) 343--384.
\bibitem{Ling} L. M. Ling, X. Sun, The multi elliptic-localized solutions and their asymptotic
              behaviors for the m{K}d{V} equation, Stud. Appl. Math. 150 (2023) 135--183.
\bibitem{SRADE2024}P. Suret, S. Randoux, A. Gelash, D. Agafontsev, B. Doyon, G. El, Soliton gas: Theory, numerics, and experiments, Phys. Rev. E 109 (2024) 061001.
\bibitem{Yan}D. D. Yan, X. G. Geng, R. M. Li, M. X. Jia, Large-space and large-time asymptotics of the complex mKdV soliton gas with any even genus,submitted.
\bibitem{TW2022}A. Tovbis, F.D. Wang, Recent developments in spectral theory of the focusing NLS soliton and breather gases: the thermodynamic limit of average densities, fluxes and certain meromorphic differentials; periodic gases, J. Phys. A 55 (2022) 424006.
\bibitem{Wang}D. S. Wang, D. H. Zhu, X. D. Zhu, Genus two KdV soliton gases and their long-time asymptotics.arXiv:2410.22634.
\bibitem{Zak71}V. Zakharov, Kinetic equation for solitons, Sov. Phys. JETP. 33 (1971) 538--541.
\bibitem{ZhangYan}G. Q. Zhang, Z. Y. Yan, Focusing and defocusing mKdV equations with nonzero boundary conditions: Inverse scattering transforms and soliton interactions, Physica D 410 (2020) 132521.		
\end{thebibliography}
	\end{document}